\documentclass[aps,prx,reprint,superscriptaddress,longbibliography,twocolumn,groupedaddress
]{revtex4-2}

\usepackage[dvipdfmx]{graphicx}
\graphicspath{{./}{./figure/}}

\usepackage[utf8]{inputenc}
\usepackage{CJKutf8}

\usepackage{amsmath,amssymb,bm,braket}
\usepackage{enumitem}
\usepackage{comment}
\usepackage{multirow}
\usepackage{tabularx}
\newcolumntype{Y}{&gt;{\centering\arraybackslash}X} %centering

\usepackage{xcolor}

\usepackage{diagbox}

\usepackage{amsthm}
\newtheorem{thm}{Main Theorem}
\newtheorem*{thm*}{Main Theorem}
\newtheorem{lem}[thm]{Lemma}
\newtheorem*{lem*}{Lemma}

\newtheorem*{conj*}{Conjecture}
\theoremstyle{definition}
\newtheorem*{dfn*}{Definition}

\usepackage{nameref}

\usepackage{mathtools}
\usepackage{cancel}

\newcommand{\eq}[1]{\begin{align} #1 \end{align}}

\newcommand{\nx}{\nonumber \\
}
\newcommand{\sumi}{\sum_{i=1}^N}
\newcommand{\sums}{\sum_{\rm{shift}}}

\renewcommand{\deg}{{\rm deg}}

\newcommand{\wha}{&~ \phantom{=}}
\newcommand{\whb}{&~ \phantom{=\sums}}

\newcommand{\half}{\tfrac{1}{2}}

\newcommand{\co}{\circ}
\newcommand{\bd}{b^\dagger}
\newcommand{\hb}{\bar{h}}
\newcommand{\lo}[1]{(#1 \text{-} \rm{local})}
\newcommand{\op}[1]{\underline{\smash{#1}}}
\newcommand{\und}[2]{\underbracket[0.5pt]{#1}_{\substack{#2}}}
\newcommand{\pk}{^{(k)}}
\newcommand{\bx}[1]{\overline{\underline{{\smash[b]{#1}}}}}
\newcommand{\akkun}{\overset{\star}{=}}
\newcommand{\lbl}[1]{\label{#1} {\magenta \texttt{[#1]}}}
\renewcommand{\lbl}[1]{\label{#1}}

\newcommand{\bA}{\boldsymbol{A}}
\newcommand{\bB}{\boldsymbol{B}}

\newcommand{\eref}[1]{Eq.~\eqref{#1}}

\newcommand{\lref}[1]{Lemma~\ref{t:#1}}

\newcommand{\sref}[1]{Sec.~\ref{s:#1}}
\newcommand{\ssref}[1]{Subsec.~\ref{s:#1}}
\newcommand{\apref}[1]{Appx.~\ref{s:#1}}
\newcommand{\mref}{\hyperref[t:main]{Main Theorem}}

\usepackage{hyperref}
\hypersetup{colorlinks=true, citecolor=magenta, linkcolor=blue, urlcolor=cyan}

\begin{document}
%\begin{CJK*}{UTF8}{min}
%\date{\today}
\title{Violating the All-or-Nothing Picture of Local Charges \\in Non-Hermitian Bosonic Chains}

\author{Mizuki Yamaguchi}
\email{yamaguchi-q@g.ecc.u-tokyo.ac.jp}
\affiliation{
%C Group, Department of Basic Science, Department of Multidisciplinary Sciences, 
Graduate School of Arts and Sciences, The University of Tokyo, 3-8-1 Komaba, Meguro, Tokyo 153-8902, Japan}

\author{Naoto Shiraishi}
\email{shiraishi@phys.c.u-tokyo.ac.jp}
\affiliation{
%C Group, Department of Basic Science, Department of Multidisciplinary Sciences, 
Graduate School of Arts and Sciences, The University of Tokyo, 3-8-1 Komaba, Meguro, Tokyo 153-8902, Japan}

\begin{abstract}
We present explicit counterexamples to a widespread empirical expectation that local commuting charges display all-or-nothing behavior. In the class of bosonic chains with symmetric nearest-neighbor hopping and arbitrary on-site terms (including non-Hermitian terms), we exhibit systems that possess $k$-local charges for some but not all $k$. Concretely, we construct non-Hermitian models with a 3-local charge but no other nontrivial local charges and models with $k$-local charges for all $k$ except $k=4$. These results show that the Grabowski--Mathieu integrability test based on 3-local charges is not universally applicable. We further give necessary and sufficient conditions for the existence of $k$-local charges in this class, yielding an exhaustive classification and uncovering additional integrable models.
\end{abstract}

\maketitle
\section{introduction}
\lbl{s:introduction}
Among interacting quantum many-body systems, exact solvability, often equated with quantum integrability, is a rare exception. Integrable models offer sharp analytical access to spectra, eigenstates, and correlation functions 
\cite{bethe1931zur,lieb1961two,lieb1963exact,mcguire1964study,yang1967exact,sutherland1968further,yang1970onedimensional,jimbo1980density,korepin1982calculation,jimbo1985qdifference,shastry1986exact,drinfeld1988quantum,haldane1988exact,slavnov1989calculation,essler1995determinant,jimbo1996quantum,gohmann2004integral}
and provide conceptual insights into strong correlations
\cite{girardeau1960relationship,hubbard1963electron,lieb1968absence,takahashi1972onedimensional,andrei1980diagonalization,wiegmann1981exact,deguchi2000thermodynamics,caux2006fourspinon,guan2013fermi}, criticality
\cite{onsager1944crystal,baxter1972partition,blote1986conformal},
nonequilibrium dynamics
\cite{zotos1997transport,kinoshita2006quantum,rigol2007relaxation,bertini2016transport,castro-alvaredo2016emergent,denardis2022correlation,doyon2025generalized},
and quantum field theory
\cite{coleman1975quantum,karowski1978exact,belavin1984infinite,affleck1986universal,zamolodchikov1990thermodynamic,destri1992new,smirnov1992form,bazhanov1997integrable,minahan2003betheansatz,beisert2005longrange,gromov2014quantum}. Their rarity and usefulness naturally raise a practical question: how can one tell whether a given Hamiltonian is integrable? In the literature, integrable models have often been identified heuristically, for instance via the Bethe ansatz
\cite{baxter1985exactly,korepin1993quantum,takahashi1999thermodynamics,gaudin2014bethe}, without a broadly applicable diagnostic.

Accumulated evidence has suggested an all-or-nothing contrast when focusing on local commuting charges. In known integrable models, one finds (at least) one local charge $Q_k$ supported on $k$ consecutive sites for each $k=3,4,5,\dots$
\cite{lax1968integrals,kulish1976factorization,takhtadzhan1979quantum,thacker1980quantum,kulish1982quantum,tetelman1982lorentz,reshetikhin1983method,sogo1983boost,sklyanin1992quantum,grabowski1994quantum,grabowski1995structure,faddeev1996how,hokkyo2025integrability}.
By contrast, in generic (nonintegrable) systems, no such finite-range charges are expected beyond obvious global symmetry charges, as suggested by macroscopic considerations and numerical studies
\cite{srednicki1995does,peres2002quantum,caux2005computation,rigol2008thermalization,caux2011remarks,polkovnikov2011colloquium,steinigeweg2013eigenstate,eisert2015quantum,dalessio2016quantum,gogolin2016equilibration,mori2018thermalization,bertini2021finitetemperature}. 
This all-or-nothing empirical picture motivated Grabowski and Mathieu to propose an integrability test based on whether a 3-local charge exists
\cite{grabowski1995integrability,gombor2021integrable}.
More recently, direct analyses of local charges and exhaustive classifications have provided rigorous proofs of this picture in several model classes
\cite{shiraishi2025dichotomy,yamaguchi2024complete,hokkyo2025absence}.

\begin{figure}[t]
\centering
\includegraphics[width=0.90\columnwidth]{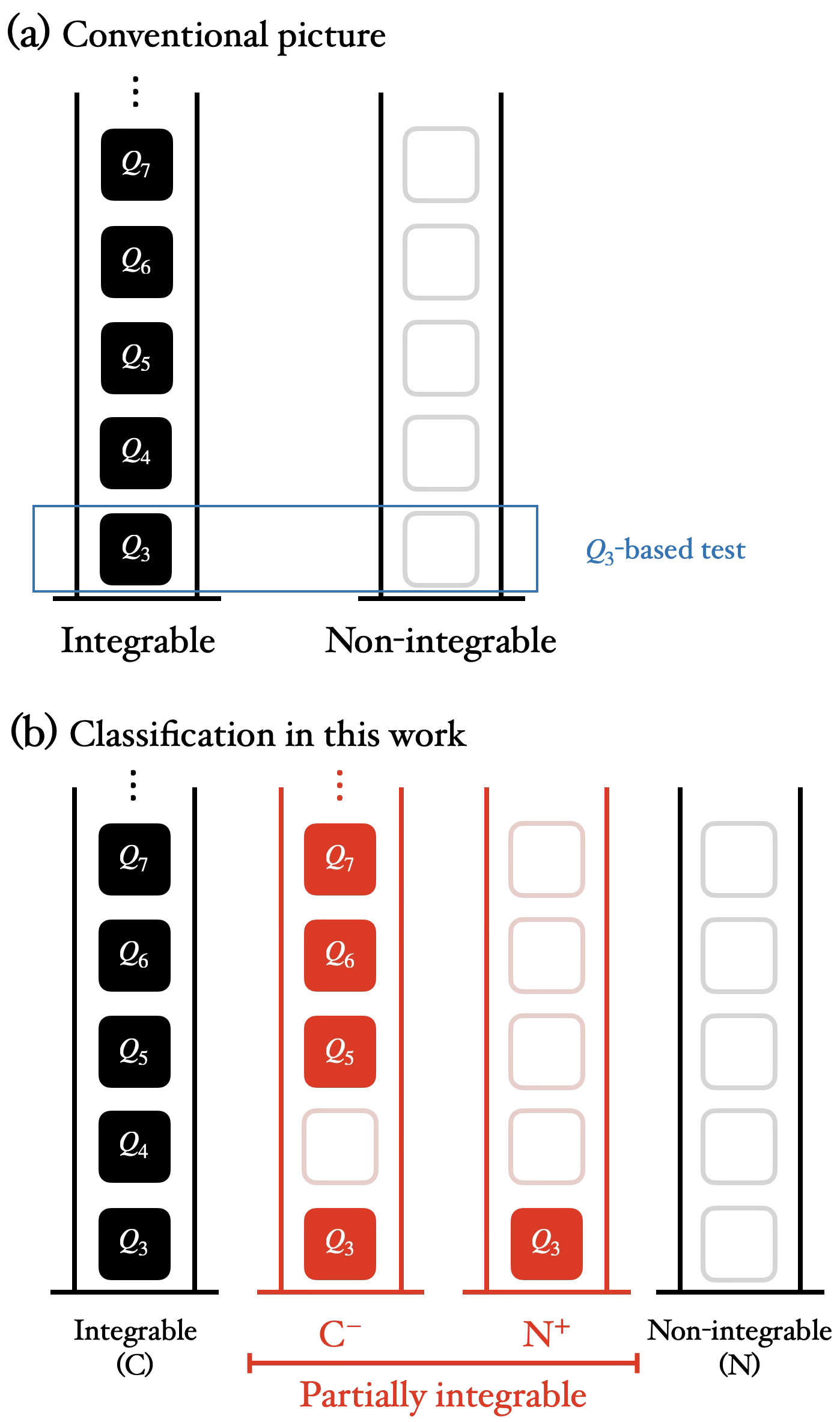}
\caption{Breakdown of the conventional all-or-nothing picture of integrability. Filled and unfilled boxes denote the presence and absence of local charges, respectively. (a) In the conventional view, integrable models admit a local charge $Q_k$ for every $k$, whereas non-integrable models admit none. Under this paradigm, the existence of the 3-local charge $Q_3$ serves as a diagnostic of integrability. (b) For the model class considered in this work, however, the classification is richer. In addition to the integrable case (Type~C) and the non-integrable case (Type~N), two partially integrable types appear: Type~C$^-$, in which $Q_4$ is absent despite the presence of $Q_3$ and $Q_{\geq 5}$, and Type~N$^+$, in which only $Q_3$ exists. Thus, the existence of $Q_3$ alone does not universally diagnose integrability.}
\lbl{f:breakdown}
\end{figure}

Local-charge structure has been less explored in non-Hermitian settings and in bosonic ones. In particular, for non-Hermitian Hamiltonians, non-normality preclude standard level-statistics probes
\cite{sa2020complex,suthar2022nonhermitian,xiao2022level,roccati2024diagnosing,akemann2025two,you2026limitations},
whereas for bosons the unbounded local Hilbert space makes systematic investigations substantially harder
\cite{kolovsky2004quantum,kollath2010statistical,wittmannw.2022interacting,anh-tai2023quantum}. 
Although both settings admit integrable examples with complete hierarchies of local charges
\cite{zheng2024exact,fring2013ptsymmetric,nakagawa2021exact,mao2023nonhermitian,wang2023scalefree,kattel2025spin}, 
comprehensive local-charge studies in these settings are still lacking.

Here we report explicit violations of the all-or-nothing picture of local charges in non-Hermitian bosonic chains. We construct \emph{partially integrable} counterexamples, meaning that only part of the local-charge hierarchy survives, within the translationally invariant class with symmetric nearest-neighbor hopping and general onsite non-Hermitian terms. Concretely, we identify (i) models with a 3-local charge but no other nontrivial finite-range charge, and (ii) models with $k$-local charges for all $k$ except $k=4$. These counterexamples show that the lowest-range ($k=3$) information need not propagate to higher ranges, thereby demonstrating that the Grabowski--Mathieu integrability test based on 3-local charges \cite{grabowski1995integrability} is not universally applicable.

Beyond constructing these counterexamples, we provide an exhaustive classification of this model family in terms of local charges, explicitly specifying which Hamiltonians are integrable, nonintegrable, or partially integrable. Partial integrability occurs only in the non-Hermitian sector, whereas the Hermitian subclass retains the all-or-nothing picture. The classification also uncovers several new integrable models and, in some cases, an occasional even--odd sensitivity in the local-charge structure. Taken together, these results provide a concrete map of integrability in this non-Hermitian bosonic setting, highlighting a richer structure of local charges than conventional expectations suggest.

The structure of this paper is as follows. In \sref{setup}, we introduce the model and state our main classification theorem. In \sref{proof1}, we develop tools for analyzing local charges and carry out the model-independent preprocessing. In \sref{proof2}, we give a complete analysis of the existence or absence of 3-local charges. In \sref{proof3}, we finish the classification by (i) a case-by-case analysis of models with a 3-local charge and (ii) proving in general that, if no 3-local charge exists, then no higher local charge exists either. Finally, in \sref{discussion}, we summarize our findings and discuss future directions.
\section{setup and main result}
\lbl{s:setup}

We consider a translationally invariant bosonic chain of length $N$ with periodic boundary conditions.
The Hamiltonian consists of symmetric nearest-neighbor hopping and a general on-site term,
\eq{
H = \sumi \left( \bd_i b_{i+1} + b_i \bd_{i+1} + g_i \right)
\lbl{H-gen}
}
where the on-site operator $g$ is identical on every site and may be non-Hermitian.
We write it as
\eq{
g =\sum_{x,y=0}^{\infty}c_{xy}(\bd)^x b^y,
\lbl{g-expand}
}
where the coefficients $c_{xy}$ are arbitrary complex numbers.

We call an operator $Q$ a \textit{(commuting) charge} if $[Q,H]=0$ holds.\footnote{In the literature on integrable systems, commuting charges are often referred to as \emph{conserved quantities}, regardless of whether the system is Hermitian or non-Hermitian. In this paper, however, we use the term commuting charge, since in non-Hermitian systems the meaning of ``conservation'' can be context-dependent, and we wish to avoid possible terminological confusion.}

\begin{dfn*}[$k$-local charge]
A charge $Q$ is called \textit{$k$-local} if it can be written as a sum of operators supported on contiguous intervals of length at most $k$, and at least one term has support of length exactly $k$.
\end{dfn*}

To date, only two extreme behaviors of $k$-local charges have been known: integrable models have them for all $k$, while nonintegrable models have none.
As in prior work 
\cite{shiraishi2019proof,yamaguchi2024proof}, $k$ is taken in the window $3\le k\le N/2$, which is the natural setting for distinguishing local-charge structures.\footnote{The cases $k=1,2$ are dominated by short-range charges that can appear even in nonintegrable models; for instance, $H$ itself is always a trivial 2-local charge, and the total particle number $\sum_i n_i$ may also be commutative. By contrast, for $k>N/2$ a periodic chain admits redundant nonlocal composites (e.g., powers of $H$).}

\begin{table*}[t!]
\def\arraystretch{1.5}
\begin{tabular}{|c|c|c||cccccc|}
\hline
Allowed coeffs. & Conditions & Type & 3 & 4 & 5 & 6 & $\cdots$ & $\tfrac{N}{2}$ \\ \hline
\multirow{2}{*}{
\shortstack{ $c_{40}(\neq0), c_{30}, c_{20} , c_{10} , c_{11}, c_{01}$ \\ 
with $c_{30}(c_{11}+2)=4c_{40}c_{01}$}} & $c_{11}=0$ & C 
& $\checkmark$ & $\checkmark$ & $\checkmark$ & $\checkmark$ & $\checkmark$ & $\checkmark$
\\ \cline{2-9}
& $c_{11} \neq 0$ & N$^+$
& $\checkmark$ & $-$ & $-$ & $-$ & $-$ & $-$
\\ \hline
its adjoint & \multicolumn{8}{c|}{as above} \\ \hline
\multirow{3}{*}{
$c_{30}(\neq 0), c_{20} , c_{10} , c_{11}, c_{01}$} & $c_{20}(c_{11}+2)=3c_{30}c_{01}$ & C 
& $\checkmark$ & $\checkmark$ & $\checkmark$ & $\checkmark$ & $\checkmark$ & $\checkmark$
\\ \cline{2-9}
& $c_{11} =1 $ & C 
& $\checkmark$ & $\checkmark$ & $\checkmark$ & $\checkmark$ & $\checkmark$ & $\checkmark$
\\ \cline{2-9}
& otherwise & C$^-$
& $\checkmark$ & $-$ & $\checkmark$ & $\checkmark$ & $\checkmark$ & $\checkmark$
\\ \hline
its adjoint & \multicolumn{8}{c|}{as above}\\ \hline
\multicolumn{1}{|c}{$c_{20}, c_{10} , c_{11}, c_{01}, c_{02}$} &  & C 
& $\checkmark$ & $\checkmark$ & $\checkmark$ & $\checkmark$ & $\checkmark$ & $\checkmark$
\\ \hline
\multicolumn{1}{|c}{otherwise} &  & N 
& $-$ & $-$ & $-$ & $-$ & $-$ & $-$
\\ \hline
\end{tabular}
\caption{
Classification of models based on uniform local charges.
The coefficients $c_{xy}$ are those defined in \eref{g-expand}, i.e., the expansion coefficients of the on-site term $g$.
The checkmark ($\checkmark$) means that this $k$-local charge exists, and the hyphen ($-$) means that it does not.
In particular, one can see that the Grabowski--Mathieu integrability test is not universally applicable, because the presence or absence of a 3-local charge alone does not predict the structure of higher local charges.
}
\lbl{table-classify-I}
\end{table*}

\begin{table*}[t!]
\def\arraystretch{1.5}
\begin{tabular}{|c|c|c||cccccc|}
\hline
Allowed coeffs. & Conditions & Type & 3 & 4 & 5 & 6 & $\cdots$ & $\tfrac{N}{2}$ \\ \hline
\multirow{2}{*}{
\shortstack{$c_{x0}$ for any $x$ ($\exists x \geq 3, c_{x0} \neq 0$),\\ $c_{02},c_{01}$}} & $c_{02} = 0 $ & SC & $\checkmark$ & $\checkmark$ & $\checkmark$ & $\checkmark$ & $\checkmark$ & $\checkmark$ \\ \cline{2-9}
& $c_{02} \neq 0$ & SN$^+$
& $\checkmark$ & $-$ & $-$ & $-$ & $-$ & $-$ \\ \hline
its adjoint & \multicolumn{8}{c|}{as above} \\ \hline
\multicolumn{1}{|c}{$c_{20}, c_{10}, c_{01}, c_{02}$} &  & SC 
& $\checkmark$ & $\checkmark$ & $\checkmark$ & $\checkmark$ & $\checkmark$ & $\checkmark$ \\ \hline
\multicolumn{1}{|c}{otherwise} &  & SN 
& $-$ & $-$ & $-$ & $-$ & $-$ & $-$ \\ \hline
\end{tabular}
\caption{
Classification of models based on staggered local charges.
The meanings of the symbols are the same as Table~\ref{table-classify-I}.
}
\lbl{table-classify-II}
\end{table*}

We analyze local charges separately in each momentum sector. This is justified because the one-site shift superoperator $T$ commutes with $[\bullet,H]$. Concretely, one may write $Q$ as a sum of eigenoperators of $T$ satisfying $T[Q]=e^{ip}Q$, and each component is itself a charge.

We are now ready to state our main result.

\begin{thm*}\phantomsection\lbl{t:main}
Consider the translationally invariant bosonic chain with symmetric nearest-neighbor hopping and a general on-site term, given by \eref{H-gen}.
Then the following statements hold.

\begin{itemize}
\item[(i)] For any Hamiltonian in this class, any local charge can be decomposed into a sum of local charges in either the $p=0$ sector or the $p=\pi$ sector.

\item[(ii)] Each Hamiltonian falls into exactly one of the following four types, based on its $p=0$-sector local charges:
\par\noindent \textup{\textbf{Type~C}}: A $k$-local charge exists for each $k$ with $3\le k\le N/2$.

\par\noindent \textup{\textbf{Type~N}}: No $k$-local charge exists for any $k$ with $3\le k\le N/2$.

\par\noindent \textup{\textbf{Type~N$^+$}}: A $3$-local charge exists, but no $k$-local charge exists for any $k$ with $4\le k\le N/2$.

\par\noindent \textup{\textbf{Type~C$^-$}}: A $3$-local charge exists and a $k$-local charge exists for each $k$ with $5\le k\le N/2$, but no $4$-local charge exists.

\item[(iii)] Each Hamiltonian falls into exactly one of the following three types, based on its $p=\pi$-sector local charges:
\par\noindent \textup{\textbf{Type~SC}}: A $k$-local charge exists for each $k$ with $3\le k\le N/2$.

\par\noindent \textup{\textbf{Type~SN}}: No $k$-local charge exists for any $k$ with $3\le k\le N/2$.

\par\noindent \textup{\textbf{Type~SN$^+$}}: A $3$-local charge exists, but no $k$-local charge exists for any $k$ with $4\le k\le N/2$.
\end{itemize}

The correspondence between Hamiltonians and types is summarized in Tables~\ref{table-classify-I} and \ref{table-classify-II}.
\end{thm*}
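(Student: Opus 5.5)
Our plan follows the standard machinery for proving absence/presence of local charges (as in the works of Shiraishi and Yamaguchi), adapted to bosons. We expand a candidate $k$-local charge in the basis of normal-ordered monomials $\prod_j (\bd_j)^{x_j} b_j^{y_j}$ supported on intervals. We then analyze the commutator $[Q,H]$ by collecting the coefficients of operators of support length $k+1$, then $k$, and so on. Throughout, $T$ commutes with $[\bullet,H]$, so we may fix a momentum $p$ and write $Q=\sum_{\rm shift} e^{-ipj}(\cdots)$.

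For part (i), we look at the highest-order ($k+1$)-support terms of $[Q,H]$. These arise only from the hopping terms $\bd_i b_{i+1}+b_i\bd_{i+1}$ acting at the boundary of a $k$-support term. The resulting linear constraints force the $k$-support part of $Q$ to be generated by a ``doubling-product'' structure. This is analogous to the spin case, where it is built from a string of $b$'s and $\bd$'s of the form $\bd_i b_{i+1}\cdots$. Comparing the left-boundary and right-boundary contributions then forces $e^{ip}\in\{1,-1\}$ up to a sign determined by the hopping symmetry, which gives the decomposition into $p=0$ and $p=\pi$ sectors. Both sectors share the same top-support structure, and we fix that structure up to an overall constant for each $k$.

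Next we examine the $k$-support terms of $[Q,H]$, where the on-site term $g$ first enters. Commutators of $g$ with the boundary ladder operators produce terms $[g,b]$ and $[g,\bd]$. Their normal-ordered expansion involves $\sum x c_{xy}(\bd)^{x-1}b^y$ and similar expressions. Cancellation requires that these be absorbable by $(k-1)$-support pieces of $Q$. This yields polynomial conditions on the $c_{xy}$, which will show two things. First, a nonzero $Q_3$ exists only if $g$ has the degree restrictions listed in the tables: $c_{xy}=0$ for $x,y$ outside the allowed sets, with $c_{40}$ or $c_{30}$ as the top coefficient, or the quadratic case. Second, in the borderline families the remaining freedom is exactly the stated algebraic relations. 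For $k=3$ we will solve the full system explicitly, down to support 1. This gives the first column of both tables and settles Type N/SN versus the rest. We then lift the obstruction: for general $k$ the same $k$-support analysis yields the identical necessary conditions. Hence, when $Q_3$ is absent, no $Q_k$ with $k\ge 3$ exists, provided $k\le N/2$ so that wrap-around terms do not interfere.

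For the families that admit $Q_3$, we proceed case by case. In the quadratic case the model is free (a bilinear in $b,\bd$ after a shift), and we construct all $Q_k$ directly. For the $c_{30}$ and $c_{40}$ families with the stated constraints (Type C), we exhibit an explicit recursive construction of $Q_k$. The natural tool is a boost-like operator, or an explicit ansatz matching the top structure, whose support-reduction cascade closes. Types N$^+$ and SN$^+$ come from the conditions $c_{11}\neq0$ and $c_{02}\neq0$ respectively. In these cases we will show that for $k\ge4$ the cascade fails at the $(k-1)$- or $(k-2)$-support level: a residual coefficient proportional to $c_{11}$ (resp. $c_{02}$) cannot be cancelled. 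For $k=3$, by contrast, the cascade has too few steps to reach that level.

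The main obstacle will be Type C$^-$, where $Q_4$ fails but all other $Q_k$ exist. Here we expect a parity-dependent coefficient, something like a factor $(k-3)$ or $(1+(-1)^k)$ multiplying a term proportional to $c_{11}-1$ or to $c_{20}(c_{11}+2)-3c_{30}c_{01}$. That coefficient must be cancelled at a low support level, and cancellation is possible except at $k=4$. We plan to first compute $k=4,5,6$ explicitly, guess the general pattern of the lower-support terms, and then verify the cancellation for general $k$ by induction on the support. The other delicate point is the bosonic infinite local dimension. We handle it by grading by total degree in $b,\bd$, so that each coefficient equation involves only finitely many $c_{xy}$.
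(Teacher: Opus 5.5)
Your treatment of part (i) and of the staggered sector matches the paper in substance. One correction there: Step~1 leaves two free top coefficients in each sector, and the top operators differ ($\bd_i b_{i+k-1}, b_i\bd_{i+k-1}$ in the uniform sector versus $\bd_i\bd_{i+k-1}, b_ib_{i+k-1}$ in the staggered one). It is Step~2, with $\deg(g)\ge 3$, that fixes the uniform top part to $h_{1:k}$. In the staggered sector, your idea that the $k$-support equations give $k$-independent conditions on $g$ is exactly what \lref{stag-N} does.

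The genuine gap is the uniform-sector claim that ``for general $k$ the same $k$-support analysis yields the identical necessary conditions,'' from which you conclude Type~N. In the uniform sector the $k$-support equations impose no condition on $g$ at all. For every $g$ of degree $\ge 3$ they are solved by $\sums\bigl(h_{1:k}+\sum_m h_{1:m}\circ g_m\circ h_{m:k-1}+t\,\hb_{1:k-1}\bigr)$, whose commutator with $H$ telescopes (\lref{step2}). The conditions separating Type~N from the rest of Table~\ref{table-classify-I} come from the next level, the $(k-1)$-support part of $[Q_k,H]$. These include the degree bound and relations such as $c_{30}(c_{11}+2)=4c_{40}c_{01}$. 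That level contains $(g\circ h)\circ g$, $(g\circ h)\circ h$ and, crucially, the unknown $(k-2)$-local part $s^{(k)}$ of $Q_k$. For $k=3$ this part is 1-local and vanishes by antisymmetry, so the conditions can be read off directly. For larger $k$ the free coefficients in $s^{(k)}$ multiply, and you must prove they can never absorb the obstruction. This is the central difficulty of the Type~N proof, and your plan gives no mechanism for it.

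The paper resolves it by a descent. Using the Hokkyo identity, it writes the $(k-1)$-local part of $[Q_k,H]$ as $\chi^{(k)}_{1:k-2}\circ h_{k-2:k-1}+h_{1:2}\circ\psi^{(k)}_{2:k-1}$. Injectivity of $\circ h$ then gives $\chi^{(k)}=h_{1:2}\circ\omega^{(k)}$ and $\psi^{(k)}=-\omega^{(k)}\circ h_{k-2:k-1}$ up to shorter terms. The resulting equation for $\chi^{(k)}+\psi^{(k)}$ is precisely Step~3 at $k-1$ with $t^{(k-1)}=\frac{k-2}{k-1}t^{(k)}$. Iterating down to $k=3$ gives \lref{absence3k}.

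The Type~C$^-$ part is also missing its key idea. Your factor $(1+(-1)^k)$ correctly reflects that odd $k$ is unobstructed, since the $\bx{20}$ coefficients vanish by antisymmetry. For even $k$, however, there is no factor that vanishes for $k\neq 4$. After all multi-site equations are solved and the 1-local part is tuned, the residual $[Q'_k,H]$ is always $q_{\bx{20}}$ times one fixed 1-local operator in the span of $\sums\bd_1$ and $\sums b_1$. For $k\ge 6$ this is cancelled by adding a multiple of the \emph{non-commuting} 4-local quantity $Q'_4$, whose residual is a nonzero multiple of the same operator (\lref{degree3-4local}). At $k=4$ no such counterterm exists; the paper shows this by excluding 3-, 2- and 1-local corrections to $\tilde Q_4$ one by one. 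Computing $k=4,5,6$ and looking for a vanishing factor would not lead you to this.

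Similarly, a boost operator is not available for the new Type~C models, since no $R$-matrix is known. The paper instead gives closed forms for two families. For the remaining family it gives an existence argument in which commutators preserve the $\bx{xy}$ labels, so only the $\bx{30}$, $\bx{11}$ and $\bx{10}$ sectors occur and each step's linear system is solvable.
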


We refer to the $p=0$ and $p=\pi$ sectors as the \emph{uniform} and \emph{staggered} sectors, respectively.
From a mathematical-physics viewpoint, uniform local charges provide the most natural classification: each Hamiltonian falls into one of the four types in statement~(ii).
Among these, Type~C corresponds to a usual integrable system in which the full hierarchy of local charges exists (hence ``completely integrable''), whereas Type~N corresponds to a usual nonintegrable system.
By statement~(i), the only other possibility for local charges is the staggered sector, which provides a different classification into the three types in statement~(iii).

Among the uniform-sector types, Type~N$^+$ and Type~C$^-$ are especially notable as explicit counterexamples to the empirical all-or-nothing picture of local charges.
This all-or-nothing picture has long been widely believed \cite{rigol2008thermalization,caux2011remarks,eisert2015quantum,dalessio2016quantum,gogolin2016equilibration}.
In particular, Grabowski and Mathieu \cite{grabowski1995integrability} proposed an integrability test based on the existence of a 3-local charge, motivated by this expectation.
Later work \cite{gombor2021integrable} has put forward conjectures in the same spirit.
If the all-or-nothing picture were universally valid, checking for a 3-local charge would suffice to distinguish complete integrability from nonintegrability.
Our classification shows that this simplification fails, at least for non-Hermitian bosonic chains.
In the following, we introduce atypical models whose local-charge behavior is nonstandard.

\subsection{Partially integrable counterexample I: Type~N$^+$}
A Type~N$^+$ model has a 3-local charge but no $k$-local charges for $4\le k\le N/2$.
A representative example of the on-site term is
\eq{
g = c_{40} (\bd)^4 + c_{11} n,
}
with $c_{40} \neq 0$ and $c_{11} \neq 0$.
Its unique nontrivial (i.e., $3\le k\le N/2$) local charge is the following 3-local charge:
\eq{
Q_3 &= \sumi \bd_ib_{i+2}-b_i\bd_{i+2} \nx
\whb + 4c_{40}\left(\bd_i(\bd_{i+1})^3-(\bd_i)^3\bd_{i+1}\right) \nx
\whb +4c_{11}(\bd_ib_{i+1}-b_i\bd_{i+1}).
}
This is the simplest representative. In general, Type~N$^+$ is characterized by the on-site term of the form
\eq{
g = c_{40} (\bd)^4 + c_{30} (\bd)^3 + c_{20} (\bd)^2 + c_{11} n + \frac{c_{30}(2+c_{11})}{4c_{40}} b
}
with $c_{40}\neq 0$ and $c_{11} \neq 0$, or by its adjoint (Hermitian conjugate).

This model is partially integrable in the sense that the existence of a 3-local charge does not lead to any higher local charges.
This is the first report of local-charge behavior that differs from both conventional integrable and nonintegrable systems.

We provide the proof of the absence of $k$-local charges for $4\leq k\leq N/2$ in \lref{degree4-N+} (\ssref{model-4}).

\subsection{Partially integrable counterexample II: Type~C$^-$}
A Type~C$^-$ model has a 3-local charge and $k$-local charges for all $5\le k\le N/2$, but no 4-local charge.
A representative example of the on-site term is
\eq{
g = c_{30} (\bd)^3 + c_{11} n + c_{01} b
}
with $c_{30} \neq 0$, $c_{11} \neq 1$, and $c_{01} \neq 0$.
This model admits the following 3-local and 5-local charges:
\eq{
Q_3 &= \sumi \bd_i b_{i+2} - b_i \bd_{i+2} \nx
\whb + 3c_{30}\left( \bd_i (\bd_{i+1})^2 + (\bd_i)^2 \bd_{i+1} \right) \nx
\whb + (3c_{11}-2) \left( \bd_i b_{i+1} - b_i \bd_{i+1} \right),}
\eq{
Q_5 &= \sumi \bd_i b_{i+4} - b_i \bd_{i+4} \nx
\whb + 3c_{30} \left( \bd_i (\bd_{i+3})^2 - (\bd_i)^2 \bd_{i+3} \right) \nx
\whb - 6c_{30} \left( \bd_i \bd_{i+2} \bd_{i+3} - \bd_i \bd_{i+1} \bd_{i+3} \right) \nx
\whb + 2 (3c_{11}-2) \left( \bd_i b_{i+3} - b_i \bd_{i+3} \right) \nx
\whb + 3c_{30} (3c_{11}-2) \left( \bd_i (\bd_{i+2})^2 - (\bd_i)^2 \bd_{i+2} \right) \nx
\whb -3c_{30} (9c_{11}^2 - 6c_{11} - 1) \left( \bd_i (\bd_{i+1})^2 - (\bd_i)^2 \bd_{i+1} \right) \nx
\whb - (27c_{11}^3 - 36c_{11}^2 +6c_{11} + 4) \left(\bd_i b_{i+1} - b_i \bd_{i+1}\right).
}
Accordingly, $k$-local charges for $k\ge 6$ also exist, although we do not display their explicit forms here.
Nevertheless, no 4-local charge exists.

Note that this is the simplest representative. In general, Type~C$^-$ is characterized by the on-site term of the form
\eq{
g=c_{30}(\bd)^3+c_{20}(\bd)^2+c_{10}\bd +c_{11}n + c_{01}b
}
with $c_{30} \neq 0$, $c_{01} \neq \frac{c_{20}(c_{11}+2)}{3c_{30}}$, and $c_{11} \neq 1$, or by its adjoint.

This model is ``almost'' completely integrable in the sense that it possesses the full hierarchy of local charges except for the single missing 4-local charge.
This is the first report of such a near-integrable yet genuinely partially integrable behavior, together with our Type~N$^+$ example in the previous subsection.

We show how to construct $k$-local charges for $k=3$ and $5\leq k\leq N/2$ in \lref{degree3-C-} (\ssref{model-3}).
We prove the absence of 4-local charges in \lref{degree3-4local} (\ssref{model-3}).

\subsection{New completely integrable systems}
We begin with two simple completely integrable examples in our class:
\eq{
g &= c_{40}(\bd)^4, \\
g &= c_{30}(\bd)^3 + c_{11} n.
}
Each of these models belongs to Type~C, i.e., it admits a full hierarchy of $k$-local charges for all $3\le k\le N/2$.

In general, completely integrable (Type~C) models in the present class are precisely those with on-site terms $g$ of the following four forms:
\eq{
g &= c_{40}(\bd)^4+c_{30}(\bd)^3+c_{20}(\bd)^2+c_{10}\bd+\frac{c_{30}}{2c_{40}}b, 
\lbl{catalog-c-1}\\
g &= c_{30}(\bd)^3+c_{20}(\bd)^2+c_{10}\bd +c_{11}n + \frac{c_{20}(c_{11}+2)}{3c_{30}}b , 
\lbl{catalog-c-2}\\
g &= c_{30}(\bd)^3 + c_{20}(\bd)^2 + c_{10}\bd + n + c_{01}b, 
\lbl{catalog-c-3}\\
g &= c_{20} (\bd)^2 + c_{10} \bd + c_{11} n + c_{01} b + c_{02} b^2,
\lbl{catalog-c-4}
}
together with their adjoints.
To the best of our knowledge, the models with on-site terms \eqref{catalog-c-1}--\eqref{catalog-c-3} are new integrable systems reported here for the first time.

The significance of this discovery is methodological.
Traditionally, integrable systems were often discovered in a top-down manner: one first finds an explicit solution (for instance via an $R$-matrix), and then derives the associated hierarchy of local charges.
By contrast, our approach starts from the local charges themselves and identifies integrable Hamiltonians bottom-up.

This bottom-up perspective builds on a framework introduced in 2019 for direct analysis of $k$-local charges without relying on any solvable structure \cite{shiraishi2019proof}.
So far, this direct-analytic approach has been used to (i) prove nonintegrability by showing the absence of local charges
%arXiv-order
\cite{chiba2024proof,park2025nonintegrability,park2025proof,park2025graphtheoretical,shiraishi2024absence,chiba2024exact,yamaguchi2024complete,yamaguchi2024proof,hokkyo2025absence,shiraishi2024s1,chiba2025proof,shiraishi2025complete,hokkyo2025rigorous,futami2025absence1,shiraishi2025dichotomy,futami2025absence,
%hokkyo2025integrability,
fan2025absence,fan2026minimal},
or (ii) obtain explicit expressions for local charges in already-known integrable models
\cite{nozawa2020explicit,fukai2023all,fukai2024proof}.
Here we show that this approach can go further: it can also be used to discover new integrable systems.

We describe how to construct these local charges in \lref{degree4-C} (\ssref{model-4}) and in \lref{degree3-Ca} and \lref{degree3-Cb} (\ssref{model-3}).
Understanding whether these models admit an explicit solution (e.g., an $R$-matrix description) is left as a future research.

\subsection{Even--odd sensitivity from staggered charges}
There exist models that lie at the intersection of Type~N and Type~SC: they admit no nontrivial uniform (translation-invariant) local charges, yet they do admit staggered $k$-local charges for all $k$.
A representative example is
\eq{
g = c_{50} (\bd)^5,
}
for which the staggered $k$-local charges can be written as
\eq{
Q_k &= \sumi (-1)^i \bd_i \bd_{i+k-1}.
}

All previously known integrable models have uniform charges; whether they also have staggered ones depends on the model.
Our example exhibits an anomalous even--odd behavior under periodic boundary conditions: staggered charges exist only for even $N$, meaning that the model has infinitely many local charges when $N$ is even but none when $N$ is odd.
This is the first example of a model in which the local-charge structure depends drastically on the parity of the system size.

\subsection{The Bose-Hubbard model (standard nonintegrable example)}
We finally comment on the conventional Bose--Hubbard model. In our notation, its on-site term consists of a two-body interaction and a chemical potential,
\eq{
g = \und{c_{22}}{=U/2} n(n-1) + \und{c_{11}}{=-\mu} n.
}
In our classification, for $U\neq 0$ (i.e., $c_{22}\neq 0$) the model falls into Type~N in the uniform sector and Type~SN in the staggered sector
(see Tables~\ref{table-classify-I} and \ref{table-classify-II}). It admits no nontrivial local charges beyond the obvious ones ($H$ and the total particle number $\sum_i n_i$).
Integrability is recovered only in special limits such as $U=0$ (free bosons) or the hard-core limit.

Historically, Haldane proposed a Bethe-ansatz-type solution for this model \cite{haldane1980solidification}, but Choy and Haldane later showed that this construction fails \cite{haldane1981erratum,choy1982failure}.
Subsequent numerical studies found quantum-chaotic spectral statistics \cite{kolovsky2004quantum}, in line with nonintegrable behavior.
In this paper, we settle this issue within our framework by proving the absence of nontrivial local charges for $U\neq 0$.
\section{Proof part 1: General treatment}
\lbl{s:proof1}

The rest of this article is devoted to proving \mref.
The purpose of the present section is to establish a general framework to analyze local commuting charges: we first introduce notation and an expansion for local quantities, and then derive model-independent constraints on the possible form of $k$-local charges (\lref{step1} and \lref{step2}).

The subsequent sections (\sref{proof2} and \sref{proof3}) complete the proof by combining this general framework with model-specific analyses.
In \sref{proof2}, we perform a complete classification of Hamiltonians \eqref{H-gen} according to the presence or absence of a 3-local charge (\lref{3local}).
In \sref{proof3}, we (i) determine the full local-charge structure of the models that do admit a 3-local charge (including the partially integrable Types~N$^+$ and~C$^-$), and (ii) prove that the absence of a 3-local charge implies the absence of any $k$-local charge for all $3\le k\le N/2$ (\lref{absence3k}), placing all remaining models into Type~N.
The staggered-sector charges are treated separately in \apref{staggered}.

\subsection{Main idea of the proof}

A key idea in our analysis is to expand local quantities in a suitable basis and to rewrite the commutativity condition $[Q,H]=0$ as a system of linear equations for the expansion coefficients of $Q$.
This approach was originally developed as a direct method to prove the absence of nontrivial local charges
\cite{shiraishi2019proof}, though related ideas for the special case of 3-local charges also appeared earlier
\cite{grabowski1995integrability}.
While most applications have focused on spin systems, in this paper we adapt this method to bosonic chains.

Note that we use the term \emph{quantity} $Q$ without assuming $[Q,H]=0$ and reserve the term \emph{charge} for commuting ones.
A \emph{$k$-local quantity} is defined in a similar way to a $k$-local charge in \sref{setup}.

We choose a single-site operator basis, denoted by
\eq{
\op{xy} := (\bd)^x b^y ,
}
where $x$ and $y$ are nonnegative integers (we also write this as $\op{x,y}$).
Using this basis, we expand a $k$-local quantity $Q_k$ (a candidate for a $k$-local charge) as
\eq{
Q_k = \sum_{\bA; |\bA| \leq k} q_{\bA} \bA.
}
Here $\bA$ is a symbol sequence of $\op{xy}$'s equipped with spatial positions, whose initial and final symbols are not $\op{00}=I$, and $|\bA|$ is the length of the symbol sequence.
Examples are $\op{20}_3\op{10}_4\op{01}_5=(\bd_3)^3\bd_4b_5$ and $\op{31}_2\op{00}_3\op{22}_4=(\bd_2)^3b_2(\bd_4)^2b_4^2$.
The basis of $k$-local quantities is the set of all symbol sequences $\bA$ with $|\bA|\leq k$.

We also expand the commutator $[Q_k,H]$ in the same basis,
\eq{
[Q_k,H] = \sum_{\bB; |\bB| \leq k+1} r_{\bB} \bB.
}
Here we used the fact that the commutator of a $k$-local quantity and the 2-local Hamiltonian is at most $k+1$-local.
Since the commutator is linear in $Q_k$, each coefficient $r_{\bB}$ is a linear function of the unknowns $\{q_{\bA}\}$.
Thus the commutativity condition $[Q_k,H]=0$ is equivalent to imposing $r_{\bB}=0$ for all $\bB$, which yields a system of linear equations for $\{q_{\bA}\}$.

We solve these equations in a suitable order.
If a nontrivial solution exists, it produces a nontrivial local charge.
If the only solution has $q_{\bA}=0$ for all $|\bA|=k$, this proves the absence of $k$-local charges.

Concretely, we incorporate the constraints $r_{\bB}=0$ in decreasing order of $|\bB|$, from $|\bB|=k+1$ down to $|\bB|=1$.
In Step $s$ ($1\leq s \leq k+1$), we require that the $(k+2-s)$-local part of $[Q_k,H]$ vanishes, i.e., $r_{\bB}=0$ for all $|\bB|=k+2-s$.
$Q_k$ satisfies all conditions up to Step $s$ if and only if
\eq{
[Q_k , H] = \lo{[k+1-s]}
\lbl{step-iff}
}
where $\lo{l}$ denotes an at-most-$l$ local quantity.
If no $Q_k$ satisfies this condition at some step, it immediately follows that no $k$-local charge exists.

In this section we carry out Step~1 (analysis of $r_{\bB}=0$ with $|\bB|=k+1$) and Step~2 (analysis of $r_{\bB}=0$ with $|\bB|=k$), which are model-independent and hold for most onsite terms $g$ in \eref{g-expand}. A few exceptional cases are treated separately.

\subsection{Symbols and useful relations}

We introduce some notation used throughout the rest of this paper.

We denote the commutator of two operators $\alpha$ and $\beta$ by
\eq{
\alpha \co \beta := [\alpha,\beta].
}
If the order of nested commutators does not need to be distinguished, that is, if $(\alpha \co \beta) \co \gamma = \alpha \co (\beta \co \gamma)$ holds in the situation at hand, we may omit parentheses and write simply $\alpha \co \beta \co \gamma$.
The same convention applies to nested commutators involving four or more operators.

We define the following nested commutators built from the hopping terms:
\eq{
 h_{1:k} := \bd_1 b_k + (-1)^k b_1 \bd_k
 = h_{1:2} \co h_{2:3} \co \cdots \co h_{k-1:k},
}
and similarly
\eq{
 \hb_{1:k} := \bd_1 b_k - (-1)^k b_1 \bd_k.
}
Here, the subscript $1:k$ indicates that its contiguous support is $1 \le i \le k$.

Let $T$ denote the translation superoperator by one site, and let $\sums$ denote the translation sum.
For example, $\sums a_1 := \sumi T^i a_1 = \sumi a_i$.
With this notation, \eref{H-gen} is simply written as
\eq{
H = \sums (h_{1:2} + g_1).
}

Finally, we record a useful commutator identity.
We define the degree of $\op{xy}$ as $x+y$, and the degree of $g$ as
\eq{
\deg(g) := \sup\{x+y\,|\,c_{xy} \neq 0\}.
}
Then the commutator between $\op{xy}$ and $\op{x'y'}$ can be written as
\begin{widetext}
\eq{
\op{xy} \co \op{x'y'}
&= \sum_{1 \le z \le \min(x+x',y+y')}
 z! \left(
\begin{pmatrix} y \\ z \end{pmatrix}
\begin{pmatrix} x' \\ z \end{pmatrix} -
\begin{pmatrix} x \\ z \end{pmatrix}
\begin{pmatrix} y' \\ z \end{pmatrix}
\right)
\op{x+x'-z,y+y'-z} \nx
&= (yx'-xy') \op{x+x'-1,y+y'-1} + (\text{lower terms}) \lbl{comxy-formula}
 ,
}
\end{widetext}
where the degree of the lower terms is at most $x+x'+y+y'-2$.
In many arguments below, we will only need the leading term in \eref{comxy-formula}.

\subsection{Analysis of Step~1}
\lbl{s:step1}

In Step~1, we analyze relations $r_{\bB}=0$ with $|\bB|=k+1$ and derive some constraints on a parameter family of ${q_{\bA}}$.
The conclusion of Step~1 analysis is summarized in the following Lemma:

\begin{lem}[Step~1 analysis]\lbl{t:step1}
Let $k$ be an integer with $2 \leq k \leq N/2$, and let $Q_k$ be a $k$-local quantity.
Then, the following two conditions are equivalent.
\begin{itemize}
\item $[Q_k, H]$ is an at-most-$k$-local quantity.
\item $Q_k$ can be written as
\eq{
Q_k &= \sumi ( q_{+-} \bd_i b_{i+k-1} + q_{-+} b_i \bd_{i+k-1}) \nx
\wha + \sumi (-1)^i (q_{++} \bd_i \bd_{i+k-1} + q_{--} b_i b_{i+k-1} ) \nx
\wha + \lo{k-1} \lbl{Qk-step1}
}
for some constants $q_{+-}$, $q_{-+}$, $q_{++}$, and $q_{--}$.
\end{itemize}
\end{lem}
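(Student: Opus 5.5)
We will prove the equivalence by expanding the $k$-local part of $Q_k$ in the basis of symbol sequences and computing exactly which $(k+1)$-local terms $[Q_k,H]$ produces. Write the $k$-local part as $\sum_{|\bA|=k} q_{\bA}\bA$, with $\bA=\op{x_1y_1}_i\cdots\op{x_ky_k}_{i+k-1}$ and both end symbols different from $\op{00}$. A $(k+1)$-local term in $[Q_k,H]$ can only come from a length-$k$ sequence commuted with a hopping term $h$ that sticks out by one site at either end. The on-site term $g$ and the hopping terms lying inside the support cannot lengthen the support, and the $(k-1)$-local part of $Q_k$ reaches at most length $k$. The sufficiency direction is then a direct check. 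For $\bd_ib_{i+k-1}$ we get $[\bd_ib_{i+k-1},\,\bd_{i+k-1}b_{i+k}+b_{i+k-1}\bd_{i+k}] = -\bd_i\bd_{i+k}$ or $\bd_ib_{i+k}$ type terms. In the translation sum, these right-end contributions cancel exactly against the left-end contributions from the neighbouring shift, because the hopping is symmetric. For the $\bd\bd$ and $bb$ pairs the cancellation occurs only with the alternating sign $(-1)^i$. This is a short computation in the spirit of $h_{1:k}=h_{1:2}\co\cdots\co h_{k-1:k}$.

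For necessity, fix a $(k+1)$-local output sequence $\bB=\op{\cdot}_i\,\bm{M}\,\op{\cdot}_{i+k}$ and collect its contributions. The leading-term formula \eqref{comxy-formula} gives $\op{x y}\co \bd = -y\,\op{x,y-1}$ and $\op{xy}\co b = x\,\op{x-1,y}$, up to exact binomial-free factors since $b,\bd$ have degree one. A length-$k$ sequence $\bA$ starting at site $i+1$ produces an output with new left-end symbol $\op{10}$ or $\op{01}$ at site $i$. The first symbol $\op{x_1y_1}$ of $\bA$ at site $i+1$ is replaced by $\op{x_1,y_1-1}$ or $\op{x_1-1,y_1}$ (times $\pm y_1$ or $x_1$). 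Symmetrically, $\bA$ starting at $i$ produces outputs ending in $\op{10}_{i+k}$ or $\op{01}_{i+k}$.

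The key steps proceed in order. First, consider outputs $\bB$ whose right end is not $\op{10}$ or $\op{01}$. Such a $\bB$ can only arise from the right-hopping of a sequence starting at $i+1$. The map from $\bA$ to $\bB$ is injective with a nonzero coefficient, so $q_{\bA}=0$ unless the last symbol of $\bA$ is itself $\op{10}$ or $\op{01}$. The same argument applied to the left end forces the first symbol to be $\op{10}$ or $\op{01}$.

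Second, for $\bA$ with end symbols in $\{\op{10},\op{01}\}$, look at outputs in which an interior site carries the same symbol as in $\bA$. Since the interior of $\bA$ is copied into the middle of $\bB$, two contributions must cancel: one from the sequence at $i$ and one from the sequence at $i+1$. This forces the interior of $\bB$, which is the shifted interior of $\bA$, to agree with the interior of $\bA'$. Iterating this along the chain shows that any nonzero interior symbol would propagate into an output that nothing can cancel. Concretely, take $\bB=\op{10}_i\bm{M}\op{00}$-type outputs: when $\op{x_1y_1}$ is reduced to $\op{00}$, the output becomes shorter and falls out, but a nonidentity interior symbol adjacent to it leaves a term with no partner. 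We therefore expect all interior symbols to be $\op{00}$, leaving exactly the four bilinears $\bd_i b_{i+k-1}$, $b_i\bd_{i+k-1}$, $\bd_i\bd_{i+k-1}$ and $b_ib_{i+k-1}$.

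Third, with $Q_k$ reduced to position-dependent coefficients $q^{(i)}_{\pm\pm}$, set the $(k+1)$-local coefficients of outputs like $\bd_i\bd_{i+k}$ and $\bd_ib_{i+k}$ to zero. This gives recursions $q^{(i+1)}_{+-}=q^{(i)}_{+-}$ and $q^{(i+1)}_{++}=-q^{(i)}_{++}$ (similarly for the other two), yielding the uniform and staggered forms. Here $k\le N/2$ guarantees that the left and right ends of the support are distinct sites and that no wrap-around identification occurs. Note that $Q_k$ need not be assumed translation invariant, since the recursion itself produces the stated dependence on $i$.

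We expect the main obstacle to be the second step: carefully proving that interior symbols vanish. The contributions from the two ends can in principle cancel for longer chains of sequences, so we must order the sequences appropriately, for instance lexicographically by the interior pattern and by position. We would also need to use that the end symbol $\op{x_1y_1}$ reduced by one degree becomes $\op{00}$ only for degree-one ends.
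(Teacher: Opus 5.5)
Your first and third steps are sound and match the paper's observations (i) and (iii). Only your signs are off: in fact $\op{xy}\co\bd=y\,\op{x,y-1}$ and $\op{xy}\co b=-x\,\op{x-1,y}$, which is harmless here.

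The genuine gap is the second step, which you state only as an expectation. The sketch you give for it is also not correct as written.
\begin{itemize}
\item Reducing an end symbol to $\op{00}$ never makes a $(k+1)$-local output shorter. The support stays $i,\dots,i+k$, and the reduced site simply carries $\op{00}$.
\item The worry about cancellations along long chains, to be handled by a lexicographic ordering, misreads the structure. Every $(k+1)$-local output has at most two sources: at most one from a length-$k$ sequence on $i,\dots,i+k-1$ and one from a sequence on $i+1,\dots,i+k$. The new end symbol fixes which component of the hopping acts, and the preimage of the modified symbol is unique.
\end{itemize}
You do notice the two-term pairing between $\bA$ and a shifted partner. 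What you never say is what forces the partner's coefficient to vanish.

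The missing idea is to apply your first step to the partner, not to $\bA$. Take $\bA=\op{10}_i\op{x_2y_2}_{i+1}\cdots\op{x_{k-1}y_{k-1}}_{i+k-2}\op{01}_{i+k-1}$; other end combinations are analogous. Consider the output $\bB'=\op{10}_i\op{x_2y_2}_{i+1}\cdots\op{x_{k-1}y_{k-1}}_{i+k-2}\op{00}_{i+k-1}\op{01}_{i+k}$. It arises only from the pairs $(\bA,h_{i+k-1:i+k})$ and $(\tilde{\bA},h_{i:i+1})$, where $\tilde{\bA}=\op{x_2+1,y_2}_{i+1}\op{x_3y_3}_{i+2}\cdots\op{x_{k-1}y_{k-1}}_{i+k-2}\op{00}_{i+k-1}\op{01}_{i+k}$. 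This gives $r_{\bB'}=q_{\bA}-(x_2+1)q_{\tilde{\bA}}=0$.
\begin{itemize}
\item If $\op{x_2y_2}\neq\op{00}$, the first symbol $\op{x_2+1,y_2}$ of $\tilde{\bA}$ has degree at least $2$. Your first step then gives $q_{\tilde{\bA}}=0$, hence $q_{\bA}=0$, so the chain is cut after one link.
\item If $\op{x_2y_2}=\op{00}$, then $\tilde{\bA}$ again has admissible ends, and its interior is $\op{x_3y_3}\cdots\op{x_{k-1}y_{k-1}}\op{00}$, i.e. the interior of $\bA$ shifted by one site.
\end{itemize}
So the statement ``every sequence with nonzero coefficient and admissible ends has $\op{00}$ at interior positions $2,\dots,j$'' passes from $j$ to $j+1$ by applying it to $\tilde{\bA}$. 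This induction is exactly how the paper proves observation (ii). Without it, your proposal does not show that only the four bilinears survive, which is the substance of the necessity direction.
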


We postpone the proof to \apref{proof-step1} and here explain the idea behind the proof.

We suppose that $[Q_k, H]$ is an at-most-$k$-local quantity.
A simple but important observation is the following.
Suppose a $k+1$-local operator $\bB$ is generated from some commutator between a $k$-local term in $Q_k$ and either $\bd_i b_{i+1}$ or $b_i \bd_{i+1}$.
Then the same $\bB$ must also be generated by another commutator contribution.
Otherwise, $\bB$ would appear with a nonzero coefficient $r_{\bB}$ and $[Q_k, H]$ would contain $k+1$-local terms, contrary to the assumption.

Owing to this, we first find that two ends of any $k$-local term in $Q_k$ must be $b$ or $\bd$.
Consider the case $k=3$ as an example.
If $Q_3$ has $b^2_1 \bd_2 \bd_3$ with a nonzero coefficient $q_{b^2_1 \bd_2 \bd_3}$, then $\bB = b^2_1 \bd_2 \bd_4$ is generated only by $[b^2_1 \bd_2 \bd_3, b_3 \bd_4]$, and thus $[Q_3, H]$ becomes a $4$-local quantity, which contradicts the supposition.

Through a similar reasoning, we find that all operators other than the two ends of any $k$-local term in $Q_k$ must be identities.
Consider again the case $k=3$ as an example.
If $Q_3$ contains $b_1 b_2 \bd_3$ with a nonzero coefficient $q_{b_1 b_2 \bd_3}$, then the $4$-local operator $\bB' = b_1 b_2 \bd_4$ in $[Q_3, H]$ is generated by exactly two commutators:
$[b_1 b_2 \bd_3, b_3 \bd_4]$ and $[b^2_2 \bd_4, b_1\bd_2]$.
Therefore, in order for the coefficient $r_{\bB'}$ to vanish, the coefficients of $b_1 b_2 \bd_3$ and $b^2_2 \bd_4$ in $Q_3$ must be proportional.
However, as we have already shown, $b^2_2 \bd_4$ has zero coefficient in $Q_3$, and hence $b_1 b_2 \bd_3$ must also have zero coefficient, contrary to the assumption.
Similar arguments confirm the desired result.

These assertions impose that a possible form of the $k$-local terms of $Q_k$ is \eref{Qk-step1}.

\subsection{Symmetry argument}
\lbl{s:symmetry}

Before proceeding to Step~2, we use the translation and inversion symmetries of \eref{H-gen} to reduce the analysis of the existence of local charges to symmetry sectors.

Since the commutator superoperator $[\bullet,H]$ commutes with the translation superoperator $T$, any charge can be decomposed into a linear combination of $T$-eigenoperator charges.
Hence it suffices to analyze local charges in each $T$-sector.
From \eref{Qk-step1}, the $T$-eigenvalue can only be $\pm 1$.
We call the $T=+1$ sector the \emph{uniform sector} and the $T=-1$ sector the \emph{staggered sector}.
In terms of \eref{Qk-step1}, $q_{++}=q_{--}=0$ corresponds to the uniform sector, while $q_{+-}=q_{-+}=0$ corresponds to the staggered sector.

Within the uniform sector ($T=+1$), the superoperators $[\bullet,H]$, $T$, and spatial inversion $P$ commute.
Thus we may decompose the operator space into $P$ sectors.
Equivalently, it suffices to consider only uniform inversion-symmetric charges and uniform inversion-antisymmetric charges.
A uniform inversion-symmetric charge takes the form
\eq{
Q_k &= \sums \bd_1 b_k + b_1 \bd_k + \lo{k-1} \lbl{Qk-uniform-sym}
}
and a uniform inversion-antisymmetric charge takes the form
\eq{
Q_k &= \sums \bd_1 b_k - b_1 \bd_k + \lo{k-1} \lbl{Qk-uniform-anti}.
}
Here we normalize $Q_k$ so that the leading $k$-local terms have coefficient $\pm1$.

We treat the uniform sector in the main text (setting $q_{++}=q_{--}=0$) and the staggered sector in \apref{staggered}.

\subsection{Analysis of Step~2}
\lbl{s:step2}

In Step~2 we analyze the relations $r_{\bB}=0$ with $|\bB|=k$ and derive further constraints on a parameter family of $q_{\bA}$, as in Step~1.
Although these constraints do not depend on the detailed coefficients of $g$, we will distinguish cases according to the degree $\deg(g)$.
The low-degree cases $\deg(g)\le 2$ are treated separately in \apref{degree2}.

The Step~2 analysis, together with Step~1, yields the following lemma, whose proof is presented in \apref{proof-step2}.

\begin{lem}[Step~2 analysis]\lbl{t:step2}
Let $k$ be an integer with $3 \leq k \leq N/2$, and let $Q_k$ be a uniform $k$-local quantity that is either inversion-symmetric or inversion-antisymmetric.
Then, the following two conditions are equivalent for $g$ with $\deg(g)\geq 3$.
\begin{itemize}
\item $[Q_k,H]$ is an at-most-$k-1$-local quantity.
\item $Q_k$ can be written, up to an overall normalization, as
\eq{
Q_k &= \sums \left( h_{1:k} + \sum_{0 \leq m \leq k-1} h_{1:m} \co g_m \co h_{m:k-1} + t \hb_{1:k-1} \right) \nx
\wha +\lo{k-2}
\lbl{Qk-step2}
}
for some constant $t$.
\end{itemize}
\end{lem}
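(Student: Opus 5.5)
We prove the two directions separately; the substantive one is necessity. By \lref{step1} and the symmetry reduction, any uniform $Q_k$ with $[Q_k,H]=\lo{k}$ has the form $\sums(\bd_1 b_k \pm b_1\bd_k) + R + \lo{k-2}$, where $R$ is an unknown $(k-1)$-local uniform quantity and the sign fixes the inversion parity. With the normalization $h_{1:k}=\bd_1 b_k+(-1)^k b_1\bd_k$, the leading term is $h_{1:k}$; the other parity is absorbed by the $\hb$ terms and the overall normalization, so I would treat the two parities in parallel.

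The $k$-local part of $[Q_k,H]$ receives two kinds of contributions:
\begin{itemize}
\item the commutator of $h_{1:k}$ with the on-site terms $g_1$ and $g_k$, which produces $k$-local operators with a nontrivial on-site factor at an end, such as $(\bd\co g)_1 b_k$;
\item the commutator of the $(k-1)$-local part $R$ with the hoppings $h_{0:1}$ and $h_{k-1:k}$.
\end{itemize}
The first step is to reuse the Step~1 combinatorics one level lower. Any $(k-1)$-local term of $R$ whose end operator is not $b$ or $\bd$ (i.e.\ whose end degree is at least $2$, or which is of type $\op{xy}$ with $x,y\ge1$) generates, through the end hopping, a $k$-local operator that nothing else can cancel, unless that operator matches a term coming from $h_{1:k}\co g$.

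Matching these two sources is the second step. Consider the $k$-local operators of the form $\op{xy}_1 \op{00}\cdots\op{00}\, b_k$ or $\bd_k$. They arise from $[\bd_1 b_k, g_1]$ and from terms of $R$ of the form $\op{x'y'}_2\cdots b_k$ commuted with $h_{1:2}$. Iterating this matching along the chain forces the interior structure of $R$ to be exactly the chain $h_{1:m}\co g_m\co h_{m:k-1}$: the single on-site insertion of $g$ at site $m$, which is transported to the ends by nested hopping commutators. To make this clean, I would first verify the telescoping identity
\[
[\,h_{1:k} + \textstyle\sum_m h_{1:m}\co g_m\co h_{m:k-1},\,H\,] = \lo{k-1},
\]
using the Jacobi identity. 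Each term $h_{1:m}\co g_m \co h_{m:k-1}$ commuted with the hoppings cancels the $g$-commutator of its neighbours in $m$, and the two boundary terms cancel $h_{1:k}\co(g_1+g_k)$. This identity gives sufficiency at once.

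Once the identity is available, subtract this particular solution. The remainder $R'$ is $(k-1)$-local with vanishing $k$-local source terms, so it satisfies the homogeneous Step-1-type constraint $[R',\sums h]=\lo{k-1}$. I would then apply the proof of \lref{step1} with $k-1$ in place of $k$ to conclude that $R'$ is a combination of $\bd_1 b_{k-1}$ and $b_1\bd_{k-1}$, plus staggered terms that are excluded by uniformity. One combination, the symmetric one $h_{1:k-1}$, must itself be excluded, and this is where $\deg(g)\ge3$ enters. The commutator $[h_{1:k-1},\sums g]$ contains $k-1$-local operators of degree $\deg(g)$ with a factor $\op{x-1,y}$ at an end. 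These must also be cancelled at $|\bB|=k$, but they would not be, since all $k$-local pieces are now fixed. The resulting extra constraint forces the coefficient of one parity to vanish and leaves only $t\,\hb_{1:k-1}$.

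The main obstacle is the careful bookkeeping in the matching step, in particular near the ends ($m=0,1,k-1$) and for the lower-degree terms in \eref{comxy-formula}. I would organize this by induction on the position of the non-identity interior site, and within each position by descending degree, so that only leading terms of \eref{comxy-formula} ever need to be compared.
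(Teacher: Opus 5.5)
Your sufficiency argument (the telescoping identity) and your final subtraction step (the remainder is $(k-1)$-local, so \lref{step1} at $k-1$ applies) coincide with the paper's.

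The gap is the sentence ``the other parity is absorbed by the $\hb$ terms and the overall normalization.'' It is not absorbed. For a uniform $Q_k$ of definite parity, \lref{step1} only says that the $k$-local part is proportional to $h_{1:k}$ \emph{or} to $\hb_{1:k}=\bd_1 b_k-(-1)^k b_1\bd_k$. The operator $\hb_{1:k}$ can be absorbed neither into $t\,\hb_{1:k-1}$ (shorter support) nor into a normalization. The lemma asserts that the $\hb_{1:k}$ case admits \emph{no} solution, and this is precisely where $\deg(g)\ge 3$ is indispensable: for $\deg(g)\le 2$ there are charges with leading term $\hb_{1:k}$ for every $k$ (\apref{degree2}).

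Your telescoping identity relies on $h_{1:k}=h_{1:2}\co\cdots\co h_{k-1:k}$ being a nested commutator, so running the other parity ``in parallel'' yields nothing. Your subtraction step also presupposes that the leading term is already $h_{1:k}$.

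The missing ingredient is the paper's part (i):
\begin{itemize}
\item Pick $c_{xy}\neq 0$ with $x+y=\deg(g)\ge 3$ (say $y\ge 2$; otherwise use the conjugate chain).
\item Consider the $k$-local operators $\op{x,y-1}_1\op{01}_k$, $\op{01}_1\op{x,y-2}_m\op{01}_k$ ($2\le m\le k-1$), and $\op{01}_1\op{x,y-1}_k$.
\item Since the end factor $\op{x,y-1}$ has degree at least $2$, it cannot be created by the hopping at that end. Hence each operator receives exactly two contributions. At the two ends of the chain, one comes from $g$ acting on the $k$-local part and one from the hopping at the \emph{opposite} end acting on an unknown $(k-1)$-local term (not from $h_{1:2}$, as your matching step says). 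In the interior, both are such hopping contributions.
\item An alternating weighted sum of the corresponding equations $r_{\bB}=0$ eliminates every unknown $(k-1)$-local coefficient. It leaves $y(y-1)c_{xy}\,(q_{\bd_1 b_k}-(-1)^k q_{b_1\bd_k})=0$, which kills $\hb_{1:k}$.
\end{itemize}

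Conversely, the place where you do invoke $\deg(g)\ge 3$ cannot work. At this step nothing constrains the coefficient of $h_{1:k-1}$. By \lref{step1} with $k-1$ in place of $k$, $[\sums h_{1:k-1},H]$ has no $k$-local part. Also, $[h_{1:k-1},\sums g]$ is only $(k-1)$-local, so it never enters the equations $r_{\bB}=0$ with $|\bB|=k$. Adding $\sums h_{1:k-1}$ to a solution therefore always gives another solution, and your claimed ``extra constraint at $|\bB|=k$'' does not exist.

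What actually removes $h_{1:k-1}$ is the assumed inversion parity. Both $h_{1:k}$ and $\hb_{1:k-1}$ have parity $(-1)^k$, whereas $h_{1:k-1}$ has parity $(-1)^{k-1}$. So use $\deg(g)\ge 3$ for the leading term, and parity, not degree, for the remainder. With these two changes the rest of your argument goes through as in the paper.
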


Here, we promise that superoperators $h_{1:1} \co$ and $\co h_{k-1:k-1}$ mean an identity operator.
Examples are
\eq{
Q_3 &= \sums h_{1:3} + g_1 \co h_{1:2} + h_{1:2} \co g_2 + t\hb_{1:2} + \lo{1} , \lbl{exampleQ3}\\
Q_4 &= \sums h_{1:4} + g_1 \co h_{1:3} + h_{1:2} \co g_2 \co h_{2:3} + h_{1:3} \co g_3 \nx \whb + t\hb_{1:3} + \lo{2} .
\lbl{exampleQ4}
}
Recall that we restrict attention to the uniform sector here.

\lref{step2} implies that a $k$-local charge $Q_k$ should be inversion symmetric for even $k$ and inversion antisymmetric for odd $k$ in the case of $g$ with $\deg(g)\geq 3$.
Indeed, the leading $k$-local term $h_{1:k}$ in \eref{Qk-step2} is $\bd_1 b_k + b_1 \bd_k$ for even $k$ and $\bd_1 b_k - b_1 \bd_k$ for odd $k$, which matches the leading terms in \eref{Qk-uniform-sym} and \eref{Qk-uniform-anti}, respectively.
\section{Proof part 2: Complete analysis of 3-local charges}
\lbl{s:proof2}

In this section, we focus on $k=3$ and provide a complete classification of all Hamiltonians of the form of \eref{H-gen} according to the presence or absence of 3-local charges (\lref{3local}).
In fact, as shown later in \lref{absence3k} (\ssref{GMC1}), all models without 3-local charges are proven to be non-integrable (i.e., they admit no $k$-local charge for any $3\leq k\leq N/2$).
Models with 3-local charges are candidates for complete integrability.
By explicitly constructing local charges, we show in \ssref{model-4} and \ssref{model-3} that some of them are completely integrable whereas others are only partially integrable.

\subsection{Criterion for the existence of 3-local charges}
\lbl{s:proof2-1-local}

As \eref{exampleQ3} shows, a candidate 3-local charge $Q_3$ must take the form
\eq{
Q_3 = \sums h_{1:3} + g_1 \co h_{1:2} + h_{1:2} \co g_2 + t \hb_{1:2}+s_1 ,
\lbl{Q3-finform}
}
where $t$ is an a priori unknown scalar coefficient, and $\sums s_1$ denotes the (a priori unknown) 1-local part of $Q_3$.

Since this choice of $Q_3$ already satisfies Steps~1--2, we proceed to Step~3 and analyze the 2-local terms in $[Q_3,H]$.
Restricting to the 2-local terms, we compute $[Q_3,H]$ as
\eq{
[Q_3 , H]
&= \sums h_{1:3} \co H \nx
\whb + (g_1 \co h_{1:2} + h_{1:2} \co g_2 + t \hb_{1:2}) \co H \nx
\whb + \und{s_1 \co H}{=s_1 \co h_{1:2} - h_{0:1} \co s_1} + \lo{1} \nx
&= \sums (g_1 \co h_{1:2}) \co g_1 + (g_1 \co h_{1:2}) \co h_{1:2} \nx
\whb + (h_{1:2} \co g_2) \co g_2 + (h_{1:2} \co g_2) \co h_{1:2} \nx
\whb + t (-g_1 \co \hb_{1:2} + \hb_{1:2} \co g_2) \nx
\whb + t \und{\hb_{1:2} \co h_{1:2}}{=\lo{1}} \nx
\whb + s_1 \co h_{1:2} - h_{1:2} \co s_2 + \lo{1}.
}
The Step~3 condition means $[Q_3, H]=\lo{1}$, so the above relation on sites 1 and 2 reads 
\eq{
\wha 
(g_1 \co h_{1:2}) \co g_1 
+ (g_1 \co h_{1:2}) \co h_{1:2} \nx
\wha
+ (h_{1:2} \co g_2) \co g_2
+ (h_{1:2} \co g_2) \co h_{1:2} \nx
\wha 
+ t ( -g_1 \co \hb_{1:2} + \hb_{1:2} \co g_2) 
+ s_{1} \co h_{1:2} 
- h_{1:2} \co s_{2} \nx
&= \lo{1}. \lbl{3local-eq-pre}
}

By the inversion antisymmetry of $Q_3$, its 1-local part must vanish, so we have $s_1=0$.
Moreover, since $H$ is inversion symmetric, $[Q_3,H]$ is also inversion-antisymmetric and hence has no 1-local term; therefore, Step~3 is the last nontrivial step in the analysis.
If there exists $t$ satisfying \eref{3local-eq-pre}, then the Hamiltonian has a 3-local charge (obtained by substituting this $t$ into \eref{Q3-finform}); otherwise, it has no 3-local charge.

Therefore, the existence of a 3-local charge is equivalent to the solvability in $t$ of
\eq{
\wha 
\und{\und{(g_1 \co h_{1:2}) \co g_1}{\text{n$_1$ term}} + \und{(h_{1:2} \co g_2) \co g_2}{\text{n$_2$ term}}}{\text{n terms}} \nx
\wha
+ \und{\und{(g_1 \co h_{1:2}) \co h_{1:2}}{\text{l$_1$ term}}
+ \und{(h_{1:2} \co g_2) \co h_{1:2}}{\text{l$_2$ term}}}{\text{l terms}} \nx
\wha 
\und{\und{-tg_1 \co \hb_{1:2}}{\text{t$_1$ term}}
+\und{t \hb_{1:2} \co g_2}{\text{t$_2$ term}}}{\text{t terms}} \nx
&= \lo{1} .
\lbl{3local-eq}
}
Here, the first two terms of \eref{3local-eq} are nonlinear in $g$ (the \emph{n-terms}), the next two are linear in $g$ (the \emph{l-terms}), and the last two contain the unknown parameter $t$ (the \emph{t-terms}).
Note that, although $[Q_3,H]$ has no 1-local part, the right-hand side of \eref{3local-eq} need not vanish, because the shift sum of nontrivial 1-local terms may cancel.
For instance, $\bd_1-\bd_2$ is a nonzero 1-local operator whose shift sum vanishes.

\subsection{Classification by the existence of a 3-local charge}
\lbl{s:proof2-classify}

We now analyze \eref{3local-eq} and determine for which Hamiltonians there exists a solution for $t$.
We first summarize the result.

\begin{lem}[Complete list of models with 3-local charges]\lbl{t:3local}
A Hamiltonian in the form of \eref{H-gen} has a 3-local charge if and only if $g$ is of one of the following forms, or any Hermitian conjugate thereof:
\begin{enumerate}[label=(\roman*)]
\item $g = c_{40} (\bd)^4 + c_{30} (\bd)^3 + c_{20} (\bd)^2 +c_{10} \bd + c_{11} n + c_{01} b$, with $c_{30}(c_{11}+2) = 4c_{40}c_{01}$;
\item $g = c_{30} (\bd)^3 + c_{20} (\bd)^2 +c_{10} \bd + c_{11} n + c_{01} b$;
\item $g = c_{20} (\bd)^2+ c_{10} \bd + c_{11} n + c_{01} b + c_{02} b^2$.
\end{enumerate}
\end{lem}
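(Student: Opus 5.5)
We have to decide for which $g$ there is a $t$ solving \eref{3local-eq}. The natural route is to expand every term in the basis $\op{xy}_1\op{x'y'}_2$ and read off coefficient equations.

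\textbf{Step 1: compute the terms.} Since $h_{1:2}=\op{10}_1\op{01}_2+\op{01}_1\op{10}_2$, the leading-term formula \eref{comxy-formula} gives
\[
g_1\co h_{1:2}=\sum c_{xy}\bigl([\op{xy},\bd]_1 b_2+[\op{xy},b]_1\bd_2\bigr).
\]
Here $[\op{xy},\bd]\propto y\,\op{x,y-1}$ and $[\op{xy},b]\propto -x\,\op{x-1,y}$. With this, the n-terms become double sums $\sum c_{xy}c_{x'y'}(\cdots)$ of genuinely two-site products. The l-terms and t-terms are linear in $g$, and their site-2 factor has degree at most 2. The right-hand side $\lo{1}$ only constrains operators that are nontrivial on both sites, so we discard the pieces whose site-1 or site-2 factor is the identity.

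\textbf{Step 2: top degree and mixed monomials.} Let $d=\deg(g)$ and consider the top total degree, about $2d-2$. Only the n-terms contribute there. Their leading parts are $\sum c_{xy}c_{x'y'}$ multiplied by differences of $(yx'-xy')$-type factors, multiplying $\op{x+x'-1,\,y+y'-1}_1\op{\cdot}_2$ together with its mirror image. Each such coefficient must vanish.

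For $d\ge 3$, I expect this to force all top-degree monomials to be of a single ``pure'' type, namely only $(\bd)^x$ or only $b^y$. This is where the Hermitian-conjugate symmetry lets us assume the $\bd$ type. It should also rule out any $c_{xy}$ with $x,y\ge1$ and $x+y\ge3$: a mixed monomial commutes nontrivially with both $\bd$ and $b$, so it produces an unmatched two-site term.

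We then peel off degrees downward. The goal is to show that $c_{xy}=0$ whenever $x,y\ge 1$ and $(x,y)\ne(1,1)$, and that $c_{0y}=0$ for $y\ge2$ whenever some $c_{x0}\neq0$ with $x\ge3$.

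\textbf{Step 3: bound the pure degree.} With $g=\sum_x c_{x0}(\bd)^x+c_{11}n+c_{01}b$, the n-terms involving $(\bd)^x$ and $(\bd)^{x'}$ commute with each other. The only survivors are cross terms between the $(\bd)^x$ part and the $n$ or $b$ part, together with the l-terms.

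For instance, $(g_1\co h_{1:2})\co h_{1:2}$ contains $c_{x0}\,x(x-1)(\bd_1)^{x-2}(\bd_2)^{2}$-type pieces plus $(\bd)^{x-1}\bd$ pieces. The t-terms give $t\,c_{x0}\,x\,(\bd_1)^{x-1}\bd_2$ and its mirror. Matching the families $(\bd_1)^{a}(\bd_2)^{b}$ for each degree should show the following:
\begin{itemize}
\item a nonzero $c_{x0}$ with $x\ge5$ leaves a $(\bd_1)^{x-2}(\bd_2)^2$ term that cannot be cancelled;
\item degree $4$ survives only through cancellation between its $(\bd_1)^3\bd_2$ term and the t-term, which fixes $t$ and yields $c_{30}(c_{11}+2)=4c_{40}c_{01}$;
\item degree $3$ is free, since $t$ absorbs the remaining constraint.
\end{itemize}
This explains why cases (i) and (ii) appear. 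In case (iii), $d\le2$, all terms are quadratic and the equation is solvable directly; these low-degree cases connect to \apref{degree2}.

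\textbf{Step 4: converse.} For each listed form, exhibit $t$ explicitly and verify \eref{3local-eq}, keeping in mind that terms like $\bd_1-\bd_2$ vanish under the shift sum.

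\textbf{Main obstacle.} The hardest part is Step 2. Excluding mixed monomials $c_{xy}$ with $x,y\ge1$ requires controlling the subleading terms of \eref{comxy-formula} and cancellations between different $(x,y)$ pairs. I would handle this by induction on the total degree, using lexicographic order on $(x,y)$ to locate a leading uncancellable monomial $\op{\cdot}_1\op{\cdot}_2$ whose coefficient is a single nonzero product $c_{xy}c_{x'y'}$.
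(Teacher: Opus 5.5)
\textbf{Genuine gap in Step~2.} Your Step~2 is the step the whole lemma rests on, and the mechanism you propose for it does not work.

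At total degree $2d-2$ only the n-terms contribute, as you say. But their vanishing does not force the top-degree part $g_d$ of $g$ to be of a single pure type. At leading order, $\op{xy}\co\op{x'y'}\to(yx'-xy')\op{x+x'-1,y+y'-1}$ is a Poisson bracket. In that language the top-degree n-term conditions read $\{\partial_{\bar z}g_d,g_d\}=\{\partial_z g_d,g_d\}=0$. These hold for every $g_d$ whose symbol is $c(\bar z+sz)^d$, i.e.\ $c_{x,d-x}=c\binom{d}{x}s^{d-x}$.

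Concretely, for $d=3$ the top-degree equations are
\[
c_{21}^2=3c_{30}c_{12},\qquad c_{21}c_{12}=9c_{30}c_{03},\qquad c_{12}^2=3c_{21}c_{03}.
\]
They are solved by $(c_{30},c_{21},c_{12},c_{03})=(1,3s,3s^2,s^3)$ with $s\neq 0$. So there is no uncancellable monomial whose coefficient is a single product $c_{xy}c_{x'y'}$; for instance, the coefficient of $\op{10}_1\op{21}_2$ is $2c_{21}^2-6c_{30}c_{12}$. Your lexicographic induction therefore cannot close.

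Your heuristic that a mixed monomial ``produces an unmatched two-site term'' also fails. For $\op{22}$ the l$_1$- and l$_2$-contributions cancel identically. In degree~$3$ the l-terms sit at site degrees $(2,1)$ and $(1,2)$, where lower n-terms and t-terms also contribute.

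\textbf{What is needed instead.} Three changes:

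(1) Reverse your order and bound the degree first, for an arbitrary top monomial rather than a pure one. For $x+y\ge5$, the operators $\op{20}_1\op{x-2,y}_2$, $\op{11}_1\op{x-1,y-1}_2$ and $\op{02}_1\op{x,y-2}_2$ have degree $2$ on site~1 and degree $\ge3$ on site~2. They can only come from the l$_2$-term of $\op{xy}$, which forces $x(x-1)c_{xy}=xy\,c_{xy}=y(y-1)c_{xy}=0$. Your $(\bd_1)^{x-2}(\bd_2)^2$ argument is just the pure special case of this.

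(2) In degree~$4$, first use the l-terms at $\op{20}_1\op{11}_2$ and its mirror to get $c_{31}=c_{13}=0$. Only then do the top n-term equations at $\op{10}_1\op{41}_2$, $\op{10}_1\op{23}_2$ and $\op{10}_1\op{05}_2$ give $c_{22}=0$ and $c_{40}c_{04}=0$.

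(3) In degree~$3$, the missing idea is to treat $t$ as an unknown to be eliminated, not as something that ``absorbs'' constraints. Combine the $t$-dependent equations at $\op{10}_1\op{20}_2$, $\op{10}_1\op{11}_2$ and $\op{10}_1\op{02}_2$ with weights $9c_{30}c_{03}$, $-c_{21}^2$ and $3c_{30}c_{21}$. By the three quadratic relations, this combination eliminates $t$ and all the lower coefficients ($c_{20}$, $c_{11}$, $c_{02}$), leaving $8c_{21}^3=0$. Hence $c_{12}=0$ and $c_{30}c_{03}=0$. If $c_{30}\neq0$, the $\op{10}_1\op{11}_2$ equation then gives $c_{02}=0$.

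Your remaining steps (peeling lower degrees, fixing $t$ in degree~$4$ to get $c_{30}(c_{11}+2)=4c_{40}c_{01}$, and the converse) are fine once these are in place.
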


This lemma implies that 3-local charges appear only in Hamiltonians of the form~\eqref{H-gen} with special $g$ of degree $4$ or $3$, and with arbitrary $g$ of degree at most $2$.
We shall provide explicit 3-local charges in \eref{explicit-4} and \eref{explicit-3} (\ssref{proof2-displace}).

\begin{proof}
The if part is immediate by substituting the above forms of $g$ into \eref{3local-eq}.

We now prove the only-if part.
For brevity, we treat here only the case of $g$ with $\deg(g)\geq 5$; the remaining cases $\deg(g)\leq 4$ are treated separately in \apref{proof-3local}.

In this case, there exist $x$ and $y$ such that $c_{xy} \neq 0$ and $x+y = \deg(g) \geq 5$.
The contribution of the l$_2$-term (see \eref{3local-eq}) of $\op{xy}$ is calculated as
\begin{widetext}
\eq{
\wha (h_{1:2} \co c_{xy} \op{xy}_2) \co h_{1:2} \nx
&= (( \op{10}_{1} \op{01}_2 + \op{01}_1 \op{10}_2 ) \co c_{xy} \op{xy}_2) \co (\op{10}_{1} \op{01}_2 + \op{01}_1 \op{10}_2 ) \nx
&= c_{xy} (x \cdot \op{10}_1 \op{x-1,y}_2 - y \cdot \op{01}_1 \op{x,y-1}_2) \co (\op{10}_{1} \op{01}_2 + \op{01}_1 \op{10}_2 ) \nx
&= c_{xy}  (x\cdot (-(x-1) \cdot \op{20}_1 \op{x-2,y}_2 + y \cdot \op{11}_1 \op{x-1,y-1}_2)  -y\cdot ( -x \cdot \op{11}_1 \op{x-1,y-1}_2 +(y-1) \cdot \op{02}_1 \op{x,y-2}_2)) + \lo{1} \nx
&= c_{xy}  (-x(x-1) \op{20}_1 \op{x-2,y}_2 + 2xy \op{11}_1 \op{x-1,y-1}_2 - y(y-1) \op{02}_1 \op{x,y-2}_2) +\lo{1} ,
\lbl{cal-l2}
}
\end{widetext}
where we used the general commutator identity $[ \alpha_1 \beta_2 , \gamma_1 \delta_2 ] = \half [\alpha_1,\gamma_1]\{\beta_2,\delta_2\} + \half\{\alpha_1,\gamma_1\}[\beta_2,\delta_2]$ and the anticommutation relations $\{\op{10},\op{10}\} = 2\cdot \op{20}$, $\{\op{01},\op{01}\} = 2\cdot \op{02}$, and $\{\op{10},\op{01}\} = 2\cdot \op{11} + \op{00}$.
Note that in the last line of \eref{cal-l2}, the first term vanishes for $x=0$ or $1$, the second term vanishes for $x=0$ or $y=0$, and the last term vanishes for $y=0$ or $1$.

Since $Q_3$ is a charge, the coefficients of
$\op{20}_1\op{x-2,y}_2$, $\op{11}_1\op{x-1,y-1}_2$, and $\op{02}_1\op{x,y-2}_2$
in the expansion of $[Q_3,H]$ must vanish.
We claim that these three operators can only come from the l$_2$-term of $\op{xy}$.
In other words, there are no contributions from the n-terms, the t-terms, the l$_1$-term, or the l$_2$-terms of $\op{x'y'}$ with $(x',y')\neq(x,y)$.

The reason is a degree count.
Each of the three operators has degree $2$ on site~1 and degree at least $3$ on site~2 (since $x+y\geq 5$).
By contrast, every n-term and every t-term has degree $1$ on one of the sites (see Eqs.~\eqref{formula_n1}, \eqref{formula_n2}, \eqref{formula_t1}, and \eqref{formula_t2}), hence they cannot appear.
Exchanging sites~1 and~2 in \eref{cal-l2} shows that the l$_1$-term has degree $2$ on site~2, not on site~1, and is therefore excluded as well.

It remains to consider the l$_2$-terms of $\op{x'y'}$ with $(x',y')\neq(x,y)$.
In the l$_2$-term of $\op{x'y'}$, each operator $\op{pq}_1\op{rs}_2$ satisfies $p+r=x'$ and $q+s=y'$.
Hence these l$_2$-terms cannot contain
$\op{20}_1\op{x-2,y}_2$, $\op{11}_1\op{x-1,y-1}_2$, or $\op{02}_1\op{x,y-2}_2$, unless $(x',y')=(x,y)$.

It follows that a 3-local charge exists only if all of
\begin{alignat}{2}
r_{\op{20}_1\op{x-2,y}_2} &=& -x(x-1)c_{xy} &= 0, \\
r_{\op{11}_1\op{x-1,y-1}_2} &=& 2xyc_{xy} &= 0, \\
r_{\op{02}_1\op{x,y-2}_2} &=& -y(y-1)c_{xy} &= 0
\end{alignat}
are satisfied.
However, for $x+y\geq 5$ with $c_{xy}\neq 0$, at least one of $-x(x-1)$, $2xy$, and $-y(y-1)$ is nonzero, so the above three equations cannot be satisfied simultaneously.
This establishes the absence of 3-local charges for $g$ with $\deg(g)\geq 5$.
\end{proof}

\subsection{Simplifying the Hamiltonian by a displacement transformation}
\lbl{s:proof2-displace}

\lref{3local} shows that the remaining admissible $g$ still carry many terms.
To handle the models more conveniently, we transform Hamiltonians using the displacement operator $D(z):= \exp(z \bd - z^* b)$ with an arbitrary complex parameter $z$.
The displacement operator $D(z)$ is a unitary operator that transforms the creation and annihilation operators as
\eq{
D(z) \bd D^\dagger (z) &= \bd+ z^*, \\
D(z) b   D^\dagger (z) &= b + z .
}
Note that the global displacement $H \mapsto D(z)^{\otimes N} H D^\dagger(z)^{\otimes N}$ does not change the presence or absence of $k$-local charges.

By a suitable global displacement, we can eliminate some coefficients of $g$.
First, recalling the degree-$4$ case (i) in \lref{3local}, we may write the Hamiltonian as
\eq{
H &= \sums h_{1:2} + c_{40} (\bd_1)^4 + c_{30} (\bd_1)^3 + c_{20} (\bd_1)^2 \nx
\whb + c_{10} \bd_1 + c_{11} n_1 + c_{01} b_1
\lbl{model4-raw}
}
with $c_{40} \neq 0$ and $c_{30}(c_{11}+2) = 4c_{40}c_{01}$.
We apply a displacement operator with $z = -\frac{c_{30}}{4c_{40}} = -\frac{c_{01}}{c_{11}+2}$ to remove the terms with coefficients $c_{30}$ and $c_{01}$, which yields
\eq{
H &= \sums h_{1:2} + c_{40} (\bd_1)^4 + c_{20} (\bd_1)^2 + c_{10} \bd_1 + c_{11} n_1.
\lbl{model4}
}
Note that, under this transformation, $c_{40}$ and $c_{11}$ remain unchanged, whereas $c_{20}$ and $c_{10}$ may change.

Thus, without loss of generality, we consider the Hamiltonian \eqref{model4} instead of the original Hamiltonian \eqref{model4-raw}.
We now write down the corresponding 3-local charge, given by
\eq{
Q_3 &= \sums \bd_1 b_3 - b_1 \bd_3 + 4 c_{11} \left( \bd_1 b_2 -  b_1 \bd_2 \right) \nx
\whb - c_{40} \left( (\bd_1)^3 \bd_2 - \bd_1 (\bd_2)^3 \right).
\lbl{explicit-4}
}

Likewise, with the degree-$3$ case (ii) in \lref{3local} in mind, a general Hamiltonian with a 3-local charge reads
\eq{
H = \sums h_{1:2} + c_{30} (\bd_1)^3 + c_{20} (\bd_1)^2 + c_{10} \bd_1 + c_{11} n_1 + c_{01} b_1
\lbl{model3-raw}
}
with $c_{30} \neq 0$.
Applying a displacement operator with $z=-\frac{c_{20}}{3c_{30}}$, we eliminate $c_{20}$, yielding
\eq{
H = \sums h_{1:2} + c_{30} (\bd_1)^3 + c_{10} \bd_1 + c_{11} n_1 + c_{01} b_1.
\lbl{model3}
}
Note that, under this transformation, $c_{30}$ and $c_{11}$ do not change, whereas $c_{10}$ and $c_{01}$ may change.
More explicitly, $c_{01}$ is transformed as $c_{01} \gets c_{01} - \frac{c_{20}(c_{11}+2)}{3c_{30}}$.

Thus, without loss of generality, we consider the Hamiltonian \eqref{model3} instead of the original Hamiltonian \eqref{model3-raw}.
The corresponding 3-local charge is given by
\eq{
Q_3 &= \sums \bd_1 b_3 - b_1 \bd_3  - 3 c_{30} \left( (\bd_1)^2 \bd_2 - \bd_1 (\bd_2)^2 \right) \nx
\whb + (3 c_{11} - 2) \left( \bd_1 b_2 - b_1 \bd_2 \right).
\lbl{explicit-3}
}
\section{Proof part 3: Model-specific treatments}
\lbl{s:proof3}

\subsection{Models of degree 4}
\lbl{s:model-4}

In this subsection, we study the Hamiltonian \eqref{model4}, which is obtained from \eqref{model4-raw} by a displacement transformation.
This is a degree-4 Hamiltonian with a 3-local charge, and we determine the existence of $k$-local charges for all $k\geq 4$.
In brief, the result depends on $c_{11}$: if $c_{11}=0$, $k$-local charges exist for all $k$ (\lref{degree4-C}); if $c_{11} \neq 0$, no $k$-local charge exists for any $k\geq 4$ (\lref{degree4-N+}).
Accordingly, \eqref{model4} is Type~C for $c_{11}=0$ and Type~N$^+$ for $c_{11} \neq 0$.

\subsubsection{Case of $c_{11}=0$ (Type~C)}

\begin{lem}\lbl{t:degree4-C}
The Hamiltonian \eqref{model4-raw} with $c_{11}=0$ has a $k$-local charge for any $3 \leq k \leq N/2$.
\end{lem}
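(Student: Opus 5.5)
The plan is to construct the charges explicitly rather than argue abstractly. By the displacement argument of \ssref{proof2-displace}, it suffices to treat \eqref{model4} with $c_{11}=0$, i.e.\ $H=\sums h_{1:2}+c_{40}(\bd_1)^4+c_{20}(\bd_1)^2+c_{10}\bd_1$. Its essential feature is that the on-site term $g$ is a polynomial in $\bd$ alone.

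\textbf{Grading.} I will assign weight $+1$ to $\bd$ and $-1$ to $b$. The hopping then has weight $0$, and $g$ splits into pieces of weights $4,2,1$. The useful facts are that $[g_i,g_j]=0$ for all $i,j$, and that each commutator with $g$ lowers the degree in $b$ by one while raising the weight.

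Consequently, in the nested commutators of the form appearing in \eref{Qk-step2}, only alternating strings $h\co g\co h\co\cdots$ survive. A term containing $r$ insertions of $g$ is annihilated once $r$ exceeds the number of $b$'s available in the leading string $h_{1:k}$. This suggests the ansatz
\[
Q_k=\sums \sum_{r\ge 0}\ \sum_{m_1<\cdots<m_r} h_{1:m_1}\co g_{m_1}\co h_{m_1:m_2}\co \cdots \co g_{m_r}\co h_{m_r:k-r} \; + \;(\text{few explicit corrections}),
\]
which is the natural continuation of \eref{Qk-step2} with $t$ fixed appropriately. Because $h_{1:k}$ contains a single $b$, this expansion terminates after the $r=1$ layer, apart from lower-weight corrections.

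\textbf{Determining the corrections and verifying.} First I will compute the $r=0,1$ layers and the $t\hb_{1:k-1}$ term. The value of $t$ is fixed by mimicking \eref{explicit-4} with $c_{11}=0$, which shows that no $\hb_{1:2}$ term is needed at $k=3$; I expect $t=0$ in general. Second, I will evaluate $[Q_k,H]$ weight by weight. In the weight-$0$ sector the free hopping charges $\sums h_{1:k}$ commute with $\sums h_{1:2}$. In the weight-$w>0$ sectors the cancellation reduces to a telescoping identity among the terms $h_{1:m}\co g_m\co h_{m:k-1}$ under translation. This is the same mechanism that makes Step~2 work, now pushed down to all orders in locality.

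\textbf{Main obstacle.} I expect the hardest part to be the residual terms of type $(h\co g)\co g$, the bosonic analogue of the n-terms in \eref{3local-eq}. Since $g$ contains only $\bd$, $(h_{1:2}\co g_2)\co g_2$ is a product of powers of $\bd$ with a coefficient quadratic in $g'$. Cancelling it requires a correction of the form $\sums \bd_i^{a}\bd_{j}^{b}$ for suitable exponents. This works at $k=3$ (the $-c_{40}((\bd_1)^3\bd_2-\bd_1(\bd_2)^3)$ term), precisely because no $n$ term is present to regenerate $b$'s. So I will first compute these $b$-free remainders for general $k$ and try to absorb them by adding $b$-free pieces $\sums F_k(\bd_1,\dots,\bd_{k-1})$. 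Such pieces commute with all of $g$, so only their commutator with the hopping matters. That reduces the problem to solving a linear difference equation in $(\bd)$-polynomials, which I expect to be solvable since the hopping acts on them as a lattice derivative.

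If the direct induction becomes unwieldy, I will instead define $Q_{k+1}$ recursively by a boost-type commutator $[\sums i\,(h_{i:i+1}+g_i),Q_k]$ plus $b$-free corrections. I would then verify commutativity through the weight grading, checking the base cases $k=3,4$ by hand.
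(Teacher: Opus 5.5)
The gap is in the one step that carries the content of the lemma: the existence of the $b$-free correction. Your setup is sound. After the displacement you may look for $Q_k=A+F$, with $A$ a translation-invariant bilinear form in $\bd_i b_j$ (it commutes with $\sums h_{1:2}$ automatically) and $F$ a $b$-free polynomial. Since $[F,\sums g]=0$, the charge condition becomes $[F,\sums h_{1:2}]=-[A,\sums g]$. But you only say you ``expect'' this to be solvable because the hopping acts as a lattice derivative, and that reason is false. On $b$-free translation-invariant polynomials, $F\mapsto[F,\sums h_{1:2}]$ is the derivation $\bd_j\mapsto-(\bd_{j-1}+\bd_{j+1})$. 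On degree-$d$ terms it multiplies the momentum-space symbol by $-2\sum_{i=1}^{d}\cos p_i$ on the torus $\sum_i p_i=0$. This multiplier vanishes on a codimension-one set, so its image among finite-range polynomials is a proper ideal. Let $\alpha$ be the symbol of $A$. The remainder can be absorbed only if the quartic symbol $c_{40}\sum_{i=1}^{4}\alpha(p_i)$ of $[A,\sums g]$ is divisible by $\sum_i\cos p_i$, and the $c_{20}$ part by $\cos p$. That is a real constraint on $A$; it is, for instance, what excludes $\hb_{1:k}$ as a leading term. Your heuristic is also blind to $c_{11}$. Commuting a $b$-free $F$ with $c_{11}\sums n$ regenerates no $b$'s; on degree-$d$ terms it only adds $-c_{11}dF$. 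Indeed the $3$-local charge exists for every $c_{11}$. Yet for $c_{11}\neq 0$, \lref{degree4-N+} shows that the same equation has no solution for $k\geq 4$, so an argument that never uses $c_{11}=0$ would prove a false statement.

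Two other steps are also off. First, the obstacle you single out does not occur: $h\co g$ is already $b$-free and commutes with $g$, so $(h\co g)\co g=0$, and every remainder is linear in $g$. Second, the lower-range cancellation is not a translation telescoping of the Step~2 layer. The paper needs a whole hierarchy $c_{40}\sum_{0\leq n\leq [k/2]-1}D_{k-n-1,n}$ of quartic corrections, an extra $\sums h_{1:k-2}$, and $(1+(-1)^k)(c_{20}\bd_1\bd_{k-1}+c_{10}\bd_1)$, all checked via the identities for $[D_{l,n},H]$ in terms of $E_{l,n}$.

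If you keep your route, you must actually prove the divisibility, and this is where $c_{11}=0$ enters. It does go through. With $\sum_i p_i=0$ and $s_j=(p_1+p_j)/2$, one has $\sum_{i=1}^{4}\cos(mp_i)=4\prod_{j=2}^{4}\cos(ms_j)$ and $\sum_{i=1}^{4}\sin(mp_i)=-4\prod_{j=2}^{4}\sin(ms_j)$ for every integer $m$. Take $A=\sums h_{1:k}$, whose symbol is $2\cos((k-1)p)$ for even $k$ and $2i\sin((k-1)p)$ for odd $k$. The quartic quotient is then proportional to $\prod_j\cos((k-1)s_j)/\cos s_j$ or $\prod_j\sin((k-1)s_j)/\cos s_j$ respectively. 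Both are finite-range symbols of range at most $k-1$. The quadratic part needs $\cos p\mid\cos((k-1)p)$ for even $k$ and vanishes for odd $k$; the linear part is trivially invertible. Once the multiplier is shifted to $\sum_i\cos p_i+2c_{11}$, this factorization is lost. The boost fallback does not help either: no Yang--Baxter structure is known for this model, and the same $b$-free equation would still have to be solved.
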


\begin{proof}
The displaced Hamiltonian is \eref{model4} with $c_{11}=0$, namely,
\eq{
H &= \sums h_{1:2} + c_{40} (\bd_1)^4 + c_{20} (\bd_1)^2 + c_{10} \bd_1.
}
We explicitly construct a $k$-local charge as
\eq{
Q_k &= \sums h_{1:k} + h_{1:k-2} + (1+(-1)^k) (c_{20} \bd_1 \bd_{k-1} + c_{10} \bd_1) \nx
\wha + c_{40} \sum_{0 \leq n \leq [k/2]-1} D_{k-n-1,n},
\lbl{exact-degree4-C}
}
where $D_{l,n}$ ($0\leq n\leq l-1$, $l\geq 2$) is defined as
\eq{
D_{l,n} := 24 \sums \sum_{1 \leq m \leq l-n} (-1)^{l-n-m} \sigma_{1,m,n+m,l} \bd_1 \bd_m \bd_{n+m} \bd_l
\lbl{D-def}
}
with the symmetry factor $\sigma_{i_1,i_2,i_3,i_4}$ for integers $i_1 \leq i_2 \leq i_3 \leq i_4$ given by
\eq{
\sigma_{i_1,i_2,i_3,i_4} := 
\begin{cases}
1 & (i_1 < i_2 < i_3 < i_4) \\
\tfrac{1}{2!} & (i_1 = i_2 < i_3 < i_4) \\
\tfrac{1}{2!} & (i_1 < i_2 = i_3 < i_4) \\
\tfrac{1}{2!} & (i_1 < i_2 < i_3 = i_4) \\
\tfrac{1}{2!2!} & (i_1 = i_2 < i_3 = i_4) \\
\tfrac{1}{3!} & (i_1 = i_2 = i_3 < i_4) \\
\tfrac{1}{3!} & (i_1 < i_2 = i_3 = i_4) \\
\tfrac{1}{4!} & (i_1 = i_2 = i_3 = i_4) \\
\end{cases} .
\lbl{sigma-def}
}

To ensure that $Q_k$ commutes with $H$, we use the following relations:
\eq{
\left[ \sums h_{1:k}, H \right]&= c_{40} E_{k,0} \nx \wha+ 2 (1+(-1)^k) \sums c_{20} \bd_1 \bd_k + c_{10} \bd_1, \\
\left[ \sums h_{1:k-2}, H \right] &= c_{40} E_{k-2,0} \nx \wha+ 2 (1+(-1)^k) \sums c_{20} \bd_1 \bd_{k-2} + c_{10} \bd_1, \\
\left[ \sums \bd_1 \bd_{k-1}, H \right] &= -2 \sums \bd_1 \bd_k + \bd_1 \bd_{k-2}, \\
\left[ \sums \bd_1, H \right] &= -2 \sums \bd_1,
}
\eq{
\left[ D_{l,n} , H\right] &= \begin{cases}
-E_{l+1,n} +E_{l,n+1} + 2E_{l-1,n} \\ \hfill (n=0) \\
-E_{l+1,n} -3E_{l,n-1} +E_{l,n+1} +2E_{l-1,n} \\ \hfill  (n=1) \\
-E_{l+1,n} -E_{l,n-1} +E_{l,n+1} +2E_{l-1,n} \\ \hfill  (2\leq n \leq l-1) \\
\end{cases},
}
where we define
\eq{
E_{l,n} := 
12 \sums (-1)^{l-n}(\bd_1)^2 \bd_{n+1} \bd_l + \bd_1 \bd_{l-n} (\bd_l)^2.
}
For convenience, we set $E_{l,l-1} = E_{l,l} = 0$.
Substituting these into $[Q_k, H]$, we directly verify the commutativity of $Q_k$.
\end{proof}

\subsubsection{Case of $c_{11}\neq 0$ (Type~N$^+$)}

\begin{lem}\lbl{t:degree4-N+}
The Hamiltonian \eqref{model4-raw} with $c_{11} \neq 0$ has no $k$-local charge for any $4 \leq k \leq N/2$.
\end{lem}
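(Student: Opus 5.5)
We work throughout with the displaced Hamiltonian \eqref{model4}, since a global displacement preserves the existence of $k$-local charges. By \lref{step2}, any candidate uniform $k$-local charge with $4\le k\le N/2$ must have the form \eqref{Qk-step2}: the leading term $h_{1:k}$, the fixed $(k-1)$-local part $\sum_m h_{1:m}\co g_m\co h_{m:k-1}$, the free parameter $t\,\hb_{1:k-1}$, and an unknown $(k-2)$-local remainder. The plan is to carry the linear system one step further, to Step~3, i.e.\ to impose that the $(k-1)$-local part of $[Q_k,H]$ vanishes, and to show that this is inconsistent whenever $c_{11}\neq0$. Staggered charges are handled separately by \apref{staggered}; here we need only the uniform sector. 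Since $\deg(g)=4\ge3$, \lref{step2} applies.

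Concretely, we split the $(k-2)$-local remainder into its $(k-2)$-local terms with unknown coefficients $q_{\bA}$ and lower terms, and compute the $(k-1)$-local part of $[Q_k,H]$. This part receives four contributions:
\begin{itemize}
\item[(a)] $(k-1)$-local pieces of $[\,(k-1)\text{-local part},\,g\,]$ and of $[\,(k-1)\text{-local part},\,h\,]$ that shrink or keep the support;
\item[(b)] $[h_{1:k},g]$ at the lowest order, together with the second-order terms generated by $c_{40}$, such as products of type $(\bd)^3$ and $(\bd)^2$ sitting at interior sites;
\item[(c)] the $t$-terms $t\,\hb_{1:k-1}\co g$;
\item[(d)] $[\,(k-2)\text{-local unknowns},\,h\,]$ extended by one site.
\end{itemize}
As in the proof of \lref{3local}, we select particular $(k-1)$-local operators $\bB$ whose coefficients $r_{\bB}$ are produced by only a few sources. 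Natural candidates are those of the form $\bd_1 (\bd_j)^2 \bd_{k-1}$ with $j$ interior, and $\bd_1 b_{k-1}$-type operators with degree~1 at both ends, where $c_{11}$ enters through $n\co\bd=\bd$.

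The key mechanism should be visible already at $k=4$. The terms proportional to $c_{40}c_{11}$ arise from cross terms such as $(g\co h)\co g$, with one $g$ contributing $(\bd)^4$ and the other contributing $n$. These terms cannot be absorbed by the $t$-terms or by the unknown $(k-2)$-local coefficients. The reason is that, after the unknowns are eliminated along chains of operators sharing the same endpoint structure (the same argument as in the Step~1 sketch), the free parameters satisfy a telescoping relation. The leftover then gives a single scalar equation of the form $c_{40}c_{11}\cdot(\text{nonzero integer})=0$, possibly after $t$ has been fixed by another equation (for example, the coefficient of $\bd_1 b_{k-1}$ fixes $t$ in terms of $c_{11}$, mirroring $4c_{11}$ in \eqref{explicit-4}). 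The explicit charge \eqref{exact-degree4-C} for $c_{11}=0$ serves as a check: the obstruction must vanish exactly when $c_{11}=0$. We will therefore track precisely those terms in $[Q_k,H]$ that are absent from the relations used in the proof of \lref{degree4-C}.

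For general $k$, we organize the computation by the degree-counting device of \lref{3local}. We restrict to $(k-1)$-local operators of total degree $4$ and trace the unknown $q_{\bA}$ of the four-$\bd$ products $\bd\bd\bd\bd$, which are the analogues of $D_{l,n}$. Each such $\bB$ comes only from $c_{40}$-terms, from the $(k-2)$-local four-$\bd$ unknowns, and from the $c_{11}$-dressing $[\,\cdot\,,n]$, which multiplies an operator by its net degree. Summing the linear relations along the chain in $n$, as suggested by the recursion for $[D_{l,n},H]$, should produce a nonzero multiple of $c_{11}c_{40}$, which is a contradiction. The main obstacle is this bookkeeping: identifying the correct family of $\bB$ whose equations close among finitely many unknowns, uniformly in $k$. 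If a direct closure fails, we would instead push the argument to Step~4 for the four-$\bd$ sector only, since the degree-$4$ all-$\bd$ sector is invariant under the commutation with $H$ up to lower-degree leakage, which we can discard by a filtration by degree.
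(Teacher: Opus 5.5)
Your proposal locates the contradiction at Step~3, and that is where it fails. Step~3 asks that the $(k-1)$-local part of $[Q_k,H]$ vanish, and this is solvable for every $c_{11}$. Its general solution is \eqref{Q4-N+} for $k=4$ and \eqref{Qk-proof-N+} for general $k$.

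The $c_{40}c_{11}$ cross terms you expect to be unabsorbable are absorbed exactly by $(k-2)$-local four-$\bd$ coefficients:
\begin{itemize}
\item at $k=4$, by $8c_{40}c_{11}\bigl((\bd_1)^3\bd_2+\bd_1(\bd_2)^3\bigr)$;
\item in general, by $48(-1)^{k-m+1}(k-2m-1)\sigma_{1,m,m,k-2}\,c_{40}c_{11}\,\bd_1(\bd_m)^2\bd_{k-2}$, together with $c_{40}D_{k-2,1}$.
\end{itemize}
For example, the coefficient of $(\bd_1)^3\bd_3$ in $[Q_4,H]$ has three contributions:
\begin{itemize}
\item $-16c_{40}c_{11}$, from the $c_{11}n$-dressing of $4c_{40}(\bd_1)^3\bd_3$;
\item $+24c_{40}c_{11}$, from $6c_{11}\,b_1\bd_3\co c_{40}(\bd_1)^4$;
\item $-q_{(\bd_1)^3\bd_2}$, from one hop extending the $2$-local term.
\end{itemize}
This vanishes for $q_{(\bd_1)^3\bd_2}=8c_{40}c_{11}$. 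Because the Step-3 system is consistent, the equations for $\bd_1(\bd_j)^2\bd_{k-1}$ only fix the $(k-2)$-local coefficients. No elimination at this level can produce a nonzero multiple of $c_{40}c_{11}$.

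A related slip concerns $t$: no $\bd_1 b_{k-1}$-type equation fixes it. Uniform quantities $\sums \bd_1 b_l$ commute with both the hopping and $c_{11}\sums n_1$, so those equations are empty. Instead, $t$ is fixed in the four-$\bd$ equations through $t\,\hb_{1:k-1}\co c_{40}(\bd)^4$. This sets the total coefficient of $\hb_{1:k-1}$ to $2(k-1)c_{11}$.

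The obstruction first appears at Step~4, and reaching it needs two ingredients your proposal does not supply.

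\emph{First}, you need the full Step-3 solution, computed explicitly and shown to be unique up to $t'h_{1:k-2}$ and at-most-$(k-3)$-local terms. Uniqueness follows by applying \lref{step1} and \lref{step2} to the difference $\Delta$ of two Step-3 solutions. A range-$(k-1)$ $\Delta$ would need the opposite inversion parity, so it is excluded. This step cannot be bypassed: the inhomogeneous $c_{40}c_{11}$ terms in the Step-4 equations come exactly from these Step-3 coefficients. They arise from hops of the $\bd_1(\bd_m)^2\bd_{k-2}$ terms and from the $c_{11}n$-dressing of $c_{40}D_{k-2,1}$.

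\emph{Second}, you need the right family of operators and weights.
\begin{itemize}
\item For $k=4$, the coefficient of $(\bd_1)^2(\bd_2)^2$ in $[Q_4,H]$ is $-72c_{40}c_{11}$, and no $1$-local term can cancel it.
\item For $k\ge5$, take $\bB=\bd_1\bd_m\bd_{m+1}\bd_{k-2}$ with $1\le m\le k-3$. These equations do not involve $t'$, and their only unknowns are $q_{\bd_1\bd_m\bd_{m+1}\bd_{k-3}}$. The weighted sum $\sum_{m}(-1)^{k-3-m}r_{\bB}/(2\sigma_{1,m,m+1,k-2})$ eliminates those unknowns and equals $-48(k-1)c_{40}c_{11}\neq 0$.
\end{itemize}

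Your closing fallback (``push to Step~4 in the four-$\bd$ sector'') points the right way. As written, though, it specifies neither the Step-3 input nor the family of operators, so the argument is not yet a proof.
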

\begin{proof}
Here we show only the absence of 4-local charges.
The absence of $k$-local charges for general $5 \leq k \leq N/2$ is shown in \apref{proof-degree4-N+} in a similar manner to the argument below.

The displaced Hamiltonian is \eref{model4} with $c_{11}\neq 0$.
We claim that $Q_4$ satisfies Step~3 (i.e., $[Q_4 , H] = \lo{2}$) if and only if it can be written as
\eq{
Q_4 &= \sums 
\bd_1 b_4 + b_1 \bd_4 + 6c_{11} \left( \bd_1 b_3 + b_1 \bd_3 \right) + t \left( \bd_1 b_2 + b_1 \bd_2 \right) \nx
\whb + 2c_{20} \left(\bd_1 \bd_3 + 5c_{11} \bd_1 \bd_2\right) \nx
\whb + 8c_{40}c_{11} \left( (\bd_1)^3 \bd_2 + \bd_1 (\bd_2)^3 \right) \nx
\whb + 4c_{40} \left( (\bd_1)^3 \bd_3 -3 \bd_1(\bd_2)^2\bd_3 + \bd_1(\bd_3)^3 \right) \nx
\whb + 6c_{40} (\bd_1)^2(\bd_2)^2  + \lo{1}
\lbl{Q4-N+}
}
with a constant $t$, up to an overall constant factor.
A direct calculation confirms the if part, namely that the right-hand side of \eref{Q4-N+} satisfies Step~3.
The only-if part, namely that any $Q_4$ satisfying Step~3 must be of the form \eqref{Q4-N+}, is proved below.

Let $Q_4$ be as in \eref{Q4-N+}, and let $Q'_4$ be any other quantity satisfying Step~3.
Since there is no inversion-antisymmetric solution of Step~3 here, $Q'_4$ is inversion symmetric.
By the Step~2 analysis (see also \eref{exampleQ4}), the 4-local part of any quantity satisfying Step~3 is fixed up to an overall normalization, so we choose it so that $Q'_4$ shares the same 4-local part as $Q_4$ and define $\Delta:=Q'_4-Q_4$, which is then at most 3-local.
By linearity, it satisfies
\eq{
[\Delta , H] = \lo{2} .
\lbl{delta-N+}
}
We analyze $\Delta$ by cases according to the size of its support, using the general constraints obtained in the Step~2 analysis (\lref{step2}).

If $\Delta$ is 3-local, then \eref{delta-N+} together with \eref{step-iff} implies that $\Delta$ is a Step~2 solution for $k=3$; however, \lref{step2} states that no inversion-symmetric Step~2 solution exists for $k=3$, which is a contradiction.

If $\Delta$ is a 2-local quantity, then \eref{delta-N+} implies that $\Delta$ meets the Step~1 condition for $k=2$; hence \lref{step1} implies that $\Delta$ is a linear combination of $h_{1:2}$, $\hb_{1:2}$, and 1-local terms.
In the inversion-symmetric sector this reduces to $\Delta \propto h_{1:2} + \lo{1}$, which is absorbed by the arbitrariness of $t$ in \eref{Q4-N+}.

Consequently, in all cases $Q'_4$ satisfying Step~3 is shown to be expressed as \eref{Q4-N+}.

\bigskip

We next show that $Q_4$ given in \eref{Q4-N+} is not a charge.
Indeed, the coefficient of $(\bd_1)^2 (\bd_2)^2$ in $[Q_4 ,H]$ is
\eq{
r_{(\bd_1)^2 (\bd_2)^2}
&= \und{-24 c_{40}c_{11} -24 c_{40}c_{11}}{\text{(the third line of \eref{Q4-N+})} \\\co \sums h_{1:2}}
\und{-24 c_{40}c_{11}}{\ \ \text{(the fifth line of \eref{Q4-N+})} \\\co \sums c_{11} n_1} \nx
&= -72 c_{40}c_{11} \neq 0.
\lbl{cc-N+}
}
Since the commutator of any 1-local part in \eref{Q4-N+} with $H$ does not generate $(\bd_1)^2 (\bd_2)^2$, this nonzero coefficient cannot be canceled.
Hence, $[Q_4,H]$ is nonzero, and $Q_4$ is not a local charge.
\end{proof}

\subsection{Models of degree 3}
\lbl{s:model-3}

In this subsection, we consider the Hamiltonian \eqref{model3} (the displaced Hamiltonian of \eqref{model3-raw}), a degree-3 Hamiltonian with a 3-local charge, and determine the existence of $k$-local charges for all $k\geq 4$.
There are two types of models: those that have $k$-local charges for all $k\geq 4$ (Type~C), and those that have $k$-local charges for all $k\geq 5$ but not for $k=4$ (Type~C$^-$).
The former class further splits into two subtypes.

We first determine the condition when a 4-local charge exists (\lref{degree3-4local}).
Next, we split the case with a 4-local charge into two subtypes and show that they admit $k$-local charges for all $k\geq 3$ (\lref{degree3-Ca} and \lref{degree3-Cb}).
Finally, we treat the models without a 4-local charge and show that they admit $k$-local charges for all $k\geq 5$ (\lref{degree3-C-}).

\subsubsection{Classification by the existence of a 4-local charge}

\begin{lem}\lbl{t:degree3-4local}
The Hamiltonian \eqref{model3-raw} has a $4$-local charge if and only if one of $c_{20}(c_{11}+2)=3c_{30}c_{01}$ or $c_{11}=1$ holds.
\end{lem}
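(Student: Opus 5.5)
We work throughout with the displaced Hamiltonian \eqref{model3}, which is legitimate because the global displacement preserves the presence or absence of $k$-local charges. Under this displacement $c_{01}$ becomes $c_{01}-\frac{c_{20}(c_{11}+2)}{3c_{30}}$, so the condition $c_{20}(c_{11}+2)=3c_{30}c_{01}$ becomes simply $c_{01}=0$ in \eqref{model3}. The statement therefore reduces to the following: \eqref{model3} has a 4-local charge if and only if $c_{01}=0$ or $c_{11}=1$.

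The plan follows the template of the proof of \lref{degree4-N+}. First, by \lref{step2}, any candidate $Q_4$ is inversion symmetric, and its 4-local and 3-local parts are fixed as in \eref{exampleQ4}, with one free parameter $t$. We then impose Step~3, i.e., $[Q_4,H]=\lo{2}$. We write a general 2-local ansatz with unknown coefficients on the finitely many operators $\op{pq}_1\op{rs}_2$ that can appear; degree counting bounds these by total degree $\le 4$, since $\deg g=3$. Solving the resulting linear equations should determine the 2-local part up to an additive multiple of $h_{1:2}$, and may also fix $t$. Uniqueness is argued exactly as before. Let $\Delta=Q_4'-Q_4$. If $\Delta$ is 3-local and inversion symmetric, it would be a Step~2 solution at $k=3$, which contradicts \lref{step2}. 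If $\Delta$ is 2-local, \lref{step1} gives $\Delta\propto h_{1:2}+\lo{1}$, which is absorbed into the free coefficient.

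Next we impose Step~4, $[Q_4,H]=\lo{1}$, together with the final 1-local step. This is done by computing the coefficients of a few diagnostic 2-local operators in $[Q_4,H]$. By analogy with \eref{cc-N+}, natural candidates are $(\bd_1)^2\bd_2$-type and $\bd_1 b_2$-type operators, where $c_{01}$ enters through the commutator with $c_{01}b_1$. Since the 1-local part of $Q_4$ cannot generate such 2-local operators except through $h_{1:2}$, we expect the obstruction to reduce to a single scalar combination. We expect it to be proportional to $c_{30}c_{01}(c_{11}-1)$, possibly times a power of $c_{30}$. Vanishing of this combination gives exactly the two cases. For the \emph{if} direction, when $c_{01}=0$ or $c_{11}=1$, we exhibit $Q_4$ explicitly and check directly that the 1-local remainder can be cancelled by choosing a 1-local part of the form $\alpha\bd_1+\beta n_1$, while keeping track of shift sums that vanish identically.

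The main obstacle is the bookkeeping in Steps~3 and~4. The l-terms, the n-terms (which are quadratic in $g$), and the hopping contributions all feed the same 2-local operators, and $c_{10}$, $c_{11}$, and $c_{01}$ mix at lower degree. To keep this manageable, we organize the computation by the grading in which $\bd$ counts $+1$ and $b$ counts $-1$. In this grading $c_{30}$ shifts the grade by $+3$, $c_{10}$ by $+1$, $c_{11}$ by $0$, and $c_{01}$ by $-1$, so each diagnostic coefficient receives contributions only from a short list of monomials in these parameters. We would first locate the grade sector in which the product $c_{30}c_{01}$ first appears, and extract the obstruction there.
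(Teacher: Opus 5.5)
\textbf{There is a genuine gap: you look for the obstruction in the wrong place.}

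You expect it at Step~4, as a nonvanishing coefficient of some 2-local operator in $[Q_4,H]$, by analogy with \eqref{cc-N+}. For this model no such coefficient exists. The condition $[Q_4,H]=(1\text{-local})$ is solvable for every choice of $c_{30},c_{10},c_{11},c_{01}$. The paper's proof exhibits a $\tilde Q_4$ with
\[
[\tilde Q_4,H]=2(c_{11}-1)c_{01}\sum_i\bigl(3c_{30}c_{01}b^\dagger_i-(c_{11}+2)^2 b_i-(c_{11}+2)c_{10}\bigr),
\]
which is purely 1-local (plus a constant).

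Your own grading explains why. Since $c_{01}$ lowers the grade by one, it can only enter $[Q_4,H]$ in grades $2,1,0,-1$:
\begin{itemize}
\item In grades $1,0,-1$ its contributions are multiples of $\sum_i b^\dagger_i$, the identity, and $\sum_i b_i$, all of which are 1-local.
\item In grade $2$, the 3-local condition fixes the $b^\dagger_1b^\dagger_2$ coefficient of $Q_4$ to $-3c_{30}c_{01}$. The 2-local condition, on the coefficient of $b^\dagger_1b^\dagger_2$ in $[Q_4,H]$, is then always met by tuning the 1-local $(b^\dagger_1)^2$ term of $Q_4$. That term does feed 2-local operators through the hopping, contrary to your remark that the 1-local part cannot.
\end{itemize}
Your proposed diagnostics do not involve $c_{01}$ at all. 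The operator $(b^\dagger_1)^2b^\dagger_2$ sits in grade $3$. The operator $b^\dagger_1b_2$ sits in grade $0$, where the quadratic hopping-type part of $Q_4$ simply commutes with the hopping. Carried out as planned, your computation would find every 2-local coefficient cancellable and would not yield the \emph{only if} direction.

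\textbf{What is missing is a genuinely 1-local endgame.} Your $\Delta$ argument, pushed one level further, shows the following. Any two solutions of $[Q_4,H]=(1\text{-local})$ differ by a multiple of $H$ plus $\nu\sum_i n_i+\rho\sum_i b_i+\sigma\sum_i b^\dagger_i$. The remaining constraints then go as follows.
\begin{itemize}
\item $[\sum_i n_i,H]$ contains $3c_{30}\sum_i(b^\dagger_i)^3$, which does not appear in the residual, so $\nu=0$.
\item $[\sum_i b_i,H]$ contains $3c_{30}\sum_i(b^\dagger_i)^2$, which also does not appear, so $\rho=0$.
\item $[\sum_i b^\dagger_i,H]=-\sum_i\bigl((c_{11}+2)b^\dagger_i+c_{01}\bigr)$ never produces $b_i$.
\end{itemize}
Hence the $\sum_i b_i$ coefficient $-2c_{01}(c_{11}-1)(c_{11}+2)^2$ is the obstruction. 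It is not of the single-scalar form $c_{30}c_{01}(c_{11}-1)$ you anticipate.

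It also vanishes at $c_{11}=-2$, so a separate case is needed. There one must use the $\sum_i b^\dagger_i$ coefficient $6c_{30}c_{01}^2(c_{11}-1)=-18c_{30}c_{01}^2$. It survives because $[\sum_i b^\dagger_i,H]$ then reduces to the constant $-c_{01}N$.

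\textbf{A related point affects your \emph{if} direction.} A final adjustment of the form $\alpha b^\dagger_1+\beta n_1$ alone does not suffice. When $c_{11}=1$ and $c_{01}\neq0$, the charge also needs a $(b^\dagger_1)^2$ term, with coefficient $3c_{30}c_{01}$, and a compensating $b_1$ term. Both are fixed by the grade-2 conditions above.
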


The conditions $c_{20}(c_{11}+2)=3c_{30}c_{01}$ and $c_{11}=1$ in the original Hamiltonian \eqref{model3-raw} are respectively equivalent to the conditions $c_{01}=0$ and $c_{11}=1$ in the displaced Hamiltonian \eqref{model3}.
In the proof below, we work with the displaced Hamiltonian \eqref{model3} and show that it admits a 4-local charge if and only if either $c_{01}=0$ or $c_{11}=1$ holds.

\begin{proof}
Consider the following 4-local quantity
\eq{\lbl{tildeQ4}
\tilde{Q}_4 &= \sums \bd_1 b_4 + b_1 \bd_4 
+ (\tfrac{9}{2}c_{11}-3) \left( \bd_1 b_3 + b_1 \bd_3 \right) \nx
\whb + (\tfrac{9}{2}c_{11}^2-3) \left( \bd_1 b_2 + b_1 \bd_2 \right) + (3c_{11}-2) n_1 \nx
\whb - 3 c_{30} c_{01} \left( \bd_1 \bd_2 + (c_{11}-2) (\bd_1)^2 \right) \nx
\whb + 3c_{30} \left( (\bd_1)^2 \bd_2 - 2 \bd_1 \bd_2 \bd_3 + \bd_1 (\bd_3)^2 \right) \nx
\whb + (\tfrac{9}{2} c_{11} - 3) c_{30} \left( (\bd_1)^2 \bd_2 + \bd_1 (\bd_2)^2 \right) \nx
\whb + (-2c_{11}^2 + 7c_{11} - 2) c_{10} \bd_1 \nx
\whb + (9c_{11} - 6) c_{01} b_1 .
}
The commutator between $\tilde{Q}_4$ and $H$ reads
\eq{
[\tilde{Q}_4, H]
&= 2(c_{11}-1)c_{01} \cdot \nx
\wha \left( \sums 3c_{30}c_{01}\bd_1 -(c_{11}+2)^2 b_1 -(c_{11}+2) c_{10}I \right). \nx 
\lbl{tildeQ4-degree3}
}
Hence, $\tilde{Q}_4$ is a 4-local charge if either $c_{01}=0$ or $c_{11}=1$ holds.

\bigskip

We next show that if neither $c_{01}=0$ nor $c_{11}=1$ holds, then this Hamiltonian has no 4-local charge.
Suppose, for contradiction, that $\tilde{Q}_4 + \Delta$ is a 4-local charge. Then $\Delta$ satisfies
\eq{
[\Delta ,H] &= -[\tilde{Q}_4,H] \nx &= \lo{1}.
\lbl{delta-degree3}
}
By the Step~2 analysis and the corresponding relations \eqref{exampleQ4}, it suffices to consider $\Delta$ that is at most 3-local and inversion symmetric (see the Type~N$^+$ case in \ssref{model-4}).
We analyze $\Delta$ by cases according to the size of its support, using the general constraints obtained in the Step~2 and Step~1 analyses (\lref{step2} and \lref{step1}).

If $\Delta$ is 3-local, then \eref{delta-degree3} implies that $\Delta$ is a Step~2 solution for $k=3$; however, \lref{step2} states that no inversion-symmetric Step~2 solution exists for $k=3$, which is a contradiction.

If $\Delta$ is 2-local, then $\Delta$ is a linear combination of $\sums h_{1:2}$ and 1-local terms; hence, by shifting $\tilde{Q}_4$ by a multiple of $H$, we may reduce to the following 1-local case.

Finally, we consider the case that $\Delta$ is 1-local.
A short calculation shows that, in order to satisfy \eref{delta-degree3}, $\Delta$ should be a linear combination of $\sums \bd_1$, $\sums b_1$, and $\sums n_1$.
At this stage, we examine the condition $[\Delta, H] = -[\tilde{Q}_4, H]$.
Recalling that $[\tilde{Q}_4, H]$ contains only terms involving $\bd$ and $b$, we conclude that $\Delta$ contains no terms proportional to $\sums b_1$ or $\sums n_1$.
Indeed, $\bigl[\sums b_1, H\bigr] = 3 c_{30} \sums (\bd_1)^2 + \cdots$ and $\bigl[\sums n_1, H\bigr] = 3 c_{30} \sums (\bd_1)^3 + \cdots$ would introduce terms that do not appear in $[\tilde{Q}_4, H]$.
Thus, $\Delta$ is constrained to $\Delta \propto \sums \bd_1$.
However, the commutator between $\sums \bd_1$ and $H$ reads
\eq{
\left[ \sums \bd_1, H \right] = - \sums ((c_{11}+2) \bd_1 + c_{01}I),
}
which cannot cancel all terms in \eref{tildeQ4-degree3}.
More precisely, if $c_{11}\neq -2$ then the $\sums b_1$ term remains with a nonzero coefficient.
If $c_{11}=-2$ then the $\sums \bd_1$ term remains with a nonzero coefficient.
This completes the proof.
\end{proof}

\subsubsection{Case of $c_{20}(c_{11}+2) = 3c_{30}c_{03}$ (Type~C)}

\begin{lem}\lbl{t:degree3-Ca}
The Hamiltonian~\eqref{model3-raw} with $c_{20}(c_{11}+2) = 3c_{30}c_{03}$ has a $k$-local charge for any $3 \leq k \leq N/2$.
\end{lem}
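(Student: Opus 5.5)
The plan is to first apply the global displacement of \ssref{proof2-displace} with $z=-\frac{c_{20}}{3c_{30}}$. Under it the condition $c_{20}(c_{11}+2)=3c_{30}c_{01}$ (the $c_{03}$ in the statement should read $c_{01}$) becomes $c_{01}=0$. It therefore suffices to treat
\eq{
H=\sums h_{1:2}+c_{30}(\bd_1)^3+c_{10}\bd_1+c_{11}n_1 .
}
I would then construct $Q_k$ explicitly, by analogy with \lref{degree4-C}, in the form
\eq{
Q_k=\sums \sum_{m\le k} a_m h_{1:m} + c_{30}F_k + c_{10}L_k ,
}
with $a_k=1$. Here $F_k=\sums\sum f(i,j,l)\,\bd_i\bd_j\bd_l$ is cubic in creation operators only, with $f$ symmetric and supported on triples of range at most $k$, and $L_k$ is a combination of $\sums\bd_1\bd_m$ and $\sums\bd_1$.

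The key structural observation is that with $c_{01}=0$ the relevant operator space closes. First, $\sums h_{1:m}$ commutes with all of $\sums h_{1:2}$ and with $\sums n_1$, because all of these are particle-number-conserving bilinears. So the only nonzero contributions of the hopping part are $[h_{1:m},(\bd)^3]$, which yield cubic $\bd$-monomials with two coinciding indices, and $[h_{1:m},\bd]$, which yield linear $\bd$ terms. Second, a pure $\bd$-polynomial commuted with $H$ stays a pure $\bd$-polynomial of the same degree: the hopping acts on $f$ as a discrete Laplacian-type operator $\mathcal{L}$ in each index, the term $c_{11}n$ multiplies the degree-$d$ part by $-d\,c_{11}$, and $(\bd)^3$ and $\bd$ commute with it. Third, the lower-degree terms $\bd_1\bd_m$ and $\bd_1$ are handled exactly as in the proof of \lref{degree4-C}. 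The condition $[Q_k,H]=0$ thus reduces to a linear inhomogeneous lattice equation
\eq{
(\mathcal{L}-3c_{11})f = S_k[a] ,
}
where the source $S_k[a]$ is supported on coinciding-index triples and depends linearly on the free coefficients $a_m$. This is supplemented by an analogous, simpler equation for the linear/bilinear part, which is solved as in \lref{degree4-C}.

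To solve the cubic equation, I would mimic the $D_{l,n}$/$E_{l,n}$ bookkeeping of \lref{degree4-C}. I would introduce cubic building blocks $D_{l,n}$ with a spread label $l$ and an inner-gap label $n$, compute $[D_{l,n},H]$ as a combination of ``error'' terms $E_{l\pm1,n},E_{l,n\pm1}$, now with an additional diagonal contribution proportional to $c_{11}$. I would then choose coefficients, recursively in decreasing $l$ starting from $l=k-1$, so that the errors telescope. The lower hopping coefficients $a_{k-2},a_{k-4},\dots$ (and the $t$ of \lref{step2}) are tuned so that the errors produced at the boundary $n=0$ cancel against $[h_{1:m},c_{30}(\bd)^3]$. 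As consistency checks, the cases $k=3$ and $k=4$ must reproduce \eref{explicit-3} and $\tilde Q_4$ of \eref{tildeQ4} at $c_{01}=0$.

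I expect the main obstacle to be showing that this recursion terminates with finite support, i.e.\ that the errors at the smallest spread vanish for every $k$ and every $c_{11}$, and not only for special values. If the telescoping ansatz is hard to close directly, I would instead pass to momentum space, where $\mathcal{L}$ is diagonal. I would write the cubic coefficient as a symmetric trigonometric polynomial $\hat f(p_1,p_2,p_3)$ divided by $\epsilon(p_1)+\epsilon(p_2)+\epsilon(p_3)-\epsilon(p_1+p_2+p_3)-3c_{11}$, with $\epsilon(p)=2\cos p$. Then I would show that choosing the polynomial $\sum_m a_m\epsilon_m$ appropriately makes the numerator divisible by this denominator, which yields a finite-range $f$.
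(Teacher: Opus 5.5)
There is a genuine gap, and it starts with the ansatz itself, which is too small. With the paper's definition, $h_{1:1}=\bd_1b_1-b_1\bd_1=-1$ is a constant. Your bilinear sector $\sum_m a_m h_{1:m}+t\hb_{1:k-1}$ therefore contains no $\hb_{1:m}$ with $m<k-1$, and in particular no $\hb_{1:1}=2n_1+1$, i.e.\ no $\sums n_1$; neither $F_k$ nor $L_k$ supplies it.

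This already fails for $g=c_{30}(\bd)^3$ at $k=4$. This case has $c_{11}=0$ and is covered by the lemma; adding $c_{10}\bd$ changes nothing below.
\begin{itemize}
\item Steps~1--2 fix the range-3 cubic part to $c_{30}\bigl(3(\bd_1)^2\bd_3-6\bd_1\bd_2\bd_3+3\bd_1(\bd_3)^2\bigr)$.
\item The coefficient of $\sums\bd_1\bd_2\bd_3$ in $[Q_4,H]$ receives nothing from bilinears. Its vanishing gives $q_{(\bd_1)^2\bd_2}+q_{\bd_1(\bd_2)^2}=-6c_{30}$.
\item The coefficient of $\sums(\bd_1)^3$ is then $6c_{30}+3c_{30}q_{n_1}$, since at $c_{11}=0$ nothing else in your ansatz contributes.
\end{itemize}
So $q_{n_1}=-2$ is forced; this is the $(3c_{11}-2)n_1$ term of $\tilde Q_4$ in \eref{tildeQ4}. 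It cannot be removed by adding $H$, which has no $n_1$ term at $c_{11}=0$. Your own consistency check against $\tilde Q_4$ would therefore fail.

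Beyond the ansatz, the proposal leaves open exactly the step that constitutes the proof. You name termination of the recursion ``for every $k$ and every $c_{11}$'' as the main obstacle but do not resolve it. The paper resolves it as follows.
\begin{itemize}
\item It keeps $q_{\bd_1b_l}$ and $q_{b_1\bd_l}$ as independent unknowns at every range $l\ge2$.
\item In the equations $r_{\bd_1\bd_m\bd_l}=0$ these enter only the two edge equations $m=1$ and $m=l$, through $3c_{30}q_{b_1\bd_l}$ and $3c_{30}q_{\bd_1b_l}$. The interior equations form a chain in the range-$(l-1)$ cubic coefficients.
\item Each level therefore has $l+1$ unknowns for $l$ equations and can be solved from $l=k-1$ downward.
\item The leftover residual lies in the span of $\sums(\bd_1)^3$ and $\sums\bd_1$. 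It is removed with $\sums n_1$ (using $c_{30}\neq0$) and with $\sums\bd_1$ (when $c_{11}\neq-2$).
\item For $c_{11}=-2$, the sum rules \eqref{r-bx-op1} and \eqref{r-bx-op4} at $c_{01}=0$ make the $\sums\bd_1$ residual vanish once the $\sums(\bd_1)^3$ residual does.
\end{itemize}

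Your momentum-space fallback could be made rigorous, but its denominator is wrong as written. For uniform cubic terms the equation is $\bigl(\sum_{j=1}^{3}(\epsilon(p_j)+c_{11})\bigr)\hat f\propto c_{30}\sum_{j=1}^{3}\phi(p_j)$ on $p_1+p_2+p_3=0$. Here $\phi$ is the symbol of the bilinear part, e.g.\ $\sums\bd_1b_m\mapsto e^{i(m-1)p}$. There is no $\epsilon(p_1+p_2+p_3)$ term, and $c_{11}$ enters with the same sign as $\epsilon$.

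In terms of $e_1=\sum_j e^{ip_j}$ and $e_2=\sum_j e^{-ip_j}$, the denominator is $e_1+e_2+3c_{11}$. After eliminating $e_2$, each $\hb_{1:m}$ contributes a nonzero multiple of $e_1^{m-1}$ plus lower-degree terms, while the leading term of each $h_{1:m}$ cancels. So the $\hb_{1:m}$ you discarded are precisely the pivots that make the divisibility condition solvable.
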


Through the displacement transformation, this model is equivalent to
\eq{
H = \sums h_{1:2} + c_{30} (\bd_1)^3 + c_{10} \bd_1 + c_{11} n_1,
\lbl{model3-Ca}
}
where the condition $c_{20}(c_{11}+2) = 3c_{30}c_{03}$ in \eref{model3-raw} corresponds to $c_{01}=0$ in \eref{model3}.
Note that this Hamiltonian has already been shown to have 3-local and 4-local charges, as demonstrated in \lref{3local} (its concrete form is given in \eref{explicit-3}) and \lref{degree3-4local} (its concrete form is given in \eref{tildeQ4}), respectively.

Unlike the case of Type~C models of degree~4, where we explicitly construct all $k$-local charges in \eref{exact-degree4-C}, we do not have a concise expression for the $k$-local charges in the present case.
Therefore, instead of providing an explicit construction, we prove the existence of $k$-local charges by introducing a recurrence algorithm that determines all coefficients consistently.
Here we provide only an intuitive picture of the algorithm.
Its full proof will be presented in \apref{proof-degree4-C}.

\bigskip

We take the simplest setup,
\eq{\lbl{deg3-g-simple-ex}
g=c_{30}(\bd_1)^3+c_{11}n_1 ,
}
as an example.
Here we explain how our algorithm determines the coefficients $q_{\bd_1\bd_m\bd_l}$ for $1\leq m\leq l, l \leq k-1$.

The condition that the coefficient of $\sums \bd_1 \bd_m \bd_l$ (with $m\neq 1,2,l-1,l$) in $[Q_k,H]$ vanishes implies
\eq{
0&=-q_{\bd_1\bd_m \bd_{l+1}}-q_{\bd_1\bd_{m+1} \bd_{l+1}} \nx
\wha -q_{\bd_1\bd_{m-1} \bd_l}-3c_{11}q_{\bd_1\bd_m \bd_l} -q_{\bd_1\bd_{m+1}\bd_l} \nx
\wha -q_{\bd_1\bd_m \bd_{l-1}}-q_{\bd_1\bd_{m-1} \bd_{l-1}}. \lbl{deg3-rec-simple}
}
For $m=2$ and $m=l-1$ the relation is similar, with some coefficients differing. The edge cases $m=1$ and $m=l$ are treated separately below.

Thanks to Steps~1 and~2, the coefficients with the largest support, namely those with $l=k-1$, are already fixed. We therefore determine the remaining coefficients $q_{\bd_1\bd_m\bd_l}$ recursively in $l$, starting from $l=k-2$ and proceeding down to $l=1$.

Fix $l$ and assume that $q_{\bd_1\bd_m\bd_{l+1}}$ and $q_{\bd_1\bd_m\bd_{l}}$ are already known for all $m$.
Then \eref{deg3-rec-simple} gives linear relations linking neighboring coefficients in $m$ at support size $l-1$, i.e., $q_{\bd_1\bd_m\bd_{l-1}}$ and $q_{\bd_1\bd_{m-1}\bd_{l-1}}$.
Accordingly, the coefficients $q_{\bd_1\bd_m\bd_{l-1}}$ can be determined successively in $m$, up to an overall choice.
In a fixed inversion-symmetry sector, inversion (anti-)symmetry imposes a central constraint ($m\sim l/2$), which fixes this remaining overall choice.
When $k$ is odd, $Q_k$ is inversion antisymmetric, so the central coefficients obey
$q_{\bd_1\bd_{(l+1)/2} \bd_l}=0$ for odd $l$ and $q_{\bd_1\bd_{l/2} \bd_l}=-q_{\bd_1\bd_{l/2+1} \bd_l}$ for even $l$.
The case of even $k$ is analogous.
Repeating this for $l=k-2,k-3,\dots,1$ yields a set of coefficients $q_{\bd_1\bd_m \bd_l}$ for all $l$ and $m$.

One may be concerned about the consistency of the commutator constraints at the edges ($m=1$ and $m=l$).
As an example of an edge case, consider $\sums (\bd_1)^2\bd_l$.
The condition that $\sums (\bd_1)^2\bd_l$ appears with zero coefficient in $[Q_k,H]$ implies
\eq{
0 &= -q_{(\bd_1)^2\bd_{l+1}}-q_{\bd_1\bd_2\bd_{l+1}} \nx
\wha -3c_{11}q_{(\bd_1)^2\bd_l}-q_{\bd_1\bd_2\bd_l} \nx
\wha -q_{(\bd_1)^2\bd_{l-1}}+3c_{30}q_{b_1\bd_l} .
}
Importantly, this relation is the only nontrivial constraint involving $q_{b_1\bd_l}$.
Hence, we can always choose $q_{b_1\bd_l}$ so that this constraint is satisfied.

The aforementioned argument is for \eref{deg3-g-simple-ex}, i.e., $c_{10}=0$ case.
In the case with $c_{10}\neq 0$, the coefficients $q_{b_1\bd_j}$ have contribution to $\sums \bd_1$ in $[Q_k, H]$ in the form of the following constraint:
\eq{
0=c_{10}\left( \sum_{2\leq l \leq k-1} q_{b_1\bd_l}+q_{\bd_1b_l}\right) 
+c_{10} q_{\bd_1b_1}
-(c_{11}+2)q_{\bd_1} .
}
Since this relation is the only nontrivial constraint containing $q_{\bd_1}$, as long as $c_{11}\neq -2$ we can employ this as an adjustable parameter so that this constraint is satisfied.
The case of $c_{11}=-2$ is treated separately, and we can provide a consistent construction even in this case.

\subsubsection{Case of $c_{11}=1$ (Type~C)}
\newcommand{\CC}[2]{\left(\substack{#1 \\ #2}\right)}
\newcommand{\LL}[2]{\left(\substack{#1 \\ #2}\right)'}

\begin{lem}\lbl{t:degree3-Cb}
The Hamiltonian~\eqref{model3-raw} with $c_{11}=1$ has a $k$-local charge for any $3 \leq k \leq N/2$.
\end{lem}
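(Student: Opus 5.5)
We will work with the displaced Hamiltonian \eqref{model3} with $c_{11}=1$, that is,
\eq{
H = \sums h_{1:2} + c_{30}(\bd_1)^3 + c_{10}\bd_1 + n_1 + c_{01} b_1 ,
}
with $c_{01}$ arbitrary. We will prove existence of $Q_k$ for every $3\le k\le N/2$ by the same recursive algorithm used for \lref{degree3-Ca}. Cases $k=3$ and $k=4$ are already settled by \eref{explicit-3} and by \eref{tildeQ4} together with \eref{tildeQ4-degree3}, since the prefactor $(c_{11}-1)$ vanishes there.

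For general $k$ we start from the Step~2 form \eqref{Qk-step2}. The plan is to show that the remaining linear equations $r_{\bB}=0$ admit a solution, by fixing the unknown coefficients in decreasing order of support. The $c_{01}b_1$ term only produces lower-degree contributions. Commuting $b_1$ with $\bd$-strings lowers the degree, so the unknowns naturally split into layers:
\begin{itemize}
\item the cubic coefficients $q_{\bd_1\bd_m\bd_l}$;
\item the quadratic ones $q_{b_1\bd_l}$, $q_{\bd_1 b_l}$, $q_{\bd_1\bd_l}$;
\item the linear ones $q_{\bd_1}$, $q_{b_1}$, $q_{n_1}$.
\end{itemize}
In each layer we set up the analogue of \eref{deg3-rec-simple}, with $3c_{11}$ replaced by $3$. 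We solve it recursively in $l$ from $l=k-2$ down to $1$, propagating in $m$ and closing with the central inversion-(anti)symmetry condition. As before, each edge equation involving $(\bd_1)^2\bd_l$ is absorbed by the free coefficient $q_{b_1\bd_l}$. A new point compared with \lref{degree3-Ca} is that $c_{01}\neq0$ feeds the cubic layer into the quadratic layer: commuting $c_{01}b$ with $\bd_1\bd_m\bd_l$ gives $\bd\bd$ terms. We will handle this by treating these as known sources when solving the quadratic layer, again with a recursion in the support size.

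The main obstacle will be the lowest (1-local and identity) constraints. These are where the obstruction \eref{tildeQ4-degree3} appeared for $k=4$, and there the factor $(c_{11}-1)$ is what rescued solvability. We expect the same mechanism for general $k$. The constraints on $\sums\bd_1$, $\sums b_1$ and $I$ are overdetermined relative to the free parameters ($q_{\bd_1}$, the multiple of $H$, and the coefficient $t$ of $\hb_{1:k-1}$). We will show that at $c_{11}=1$ the residual always lies in the span of $[\sums\bd_1,H]$, $[\sums b_1,H]$ and $[\sums n_1,H]$ restricted to those components.

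Concretely, I would first compute the residual symbolically as a linear functional of the already-determined quadratic coefficients. I would then look for an identity, for instance a telescoping sum over $l$ of the quadratic-layer recursion, showing that the offending combination carries an overall factor $(c_{11}-1)$. If a uniform identity proves hard to find, the fallback is to exhibit a generating-function or transfer-matrix form of the recursion at $c_{11}=1$. At that value the characteristic roots should become degenerate in a way that makes the boundary sums vanish; this can be checked against the explicit $k=4$ and $k=5$ charges. The special value $c_{11}=-2$, where $q_{\bd_1}$ loses its adjustability, does not arise here because $c_{11}=1$.
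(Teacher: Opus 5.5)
You carry out the layer-by-layer solution through Step~$k$ in the frame \eqref{model3}, which is exactly the algorithm the paper runs for Type~C$^-$ in \apref{proof-degree3-C-}, and that part is fine. But the entire content of the lemma lies in the last, 1-local step, and there you state an expectation rather than an argument.

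Concretely, the summed identities Eqs.~\eqref{r-bx-op1}--\eqref{r-bx-op5} show the following. Once the genuinely free 1-local terms $\sums n_1$, $\sums b_1$, $\sums \bd_1$ are used to cancel $r_{\bx{30}}$, $r_{\bx{20}}$, $r_{\bx{00}}$, what remains is $q_{\bx{20}}$ (the total $\bx{20}$ coefficient of your Step-$k$ solution) times a fixed vector. That vector's $b_1$-component is $2(c_{11}+2)^2/(3c_{30})=6/c_{30}\neq 0$ at $c_{11}=1$.

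For odd $k$, $q_{\bx{20}}=0$ by inversion antisymmetry. For every even $k\ge 6$, however, you must prove that $q_{\bx{20}}$ vanishes at $c_{11}=1$, and nothing in the proposal does this. A ``telescoping identity or generating function'' is a search strategy, not a proof.

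You also cannot use the Type~C$^-$ trick of adding a multiple of $\tilde{Q}_4$. The $\bx{20}$ coefficients of $\tilde{Q}_4$ sum to $-3c_{30}c_{01}(c_{11}-1)$, which is zero exactly at $c_{11}=1$.

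Some of your bookkeeping is also off. Adding a multiple of $H$ cannot change $[Q_k,H]$, and $t$ is already fixed at Step~3, so neither is a free parameter for the last step. The ``degenerate roots'' heuristic also points the wrong way: at $c_{11}=1$ the one-body symbol $x+1+x^{-1}$ has the distinct zeros $e^{\pm 2\pi i/3}$; degeneracy occurs only at $c_{11}=\pm 2$.

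The idea the paper uses, and which you miss, is a different displacement frame. Because $c_{11}+2=3\neq 0$, one can displace by $z=-c_{01}/3$ to remove $c_{01}$ instead of $c_{20}$, giving \eqref{model3-Cb}. With no $b$ term, the $\bx{30}$ layer no longer feeds into $\bx{20}$. The $\bx{20}$ sector is then closed under commutation with $H$, so it never leaks into the 1-local $\bx{10}$ and $\bx{01}$ sectors that carry your obstruction.

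The paper then proceeds in three steps:
\begin{itemize}
\item For $c_{20}=0$, it writes the charges in closed form, Eqs.~\eqref{exact-Cb-odd} and \eqref{exact-Cb-even}, with binomial and reversed-Lucas coefficients.
\item For odd $k$ and $c_{20}\neq 0$, the same operator still commutes. A $\bx{30}$ quantity commutes with $\sums(\bd_1)^2$, and the $\bx{20}$ terms produced by the $\bx{11}$ part cancel by inversion antisymmetry.
\item For even $k$, it cancels those $\bx{20}$ terms with the explicit correction \eqref{quux-Cb}.
\end{itemize}
Either adopt that frame, or actually establish the vanishing of $q_{\bx{20}}$ at $c_{11}=1$ in yours. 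As written, the decisive step is missing.
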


\begin{proof}
Unlike the rest of this subsection, we analyze here the model
\eq{
H = \sums h_{1:2} + c_{30} (\bd_1)^3 + c_{20} (\bd_1)^2 + c_{10} \bd_1 + n_1
\lbl{model3-Cb}
}
rather than \eref{model3}.
This model is obtained by applying the displacement operator to \eref{model3-raw} with $z = -\frac{c_{01}}{c_{11}+2} = -\frac{c_{01}}{3}$.
The advantage of this model is that we can explicitly write down its $k$-local charges.

For simplicity, we first treat the case $c_{20}=0$ in the Hamiltonian~\eqref{model3-Cb} and write down the $k$-local charge explicitly.
The $k$-local charge for odd $k\geq 3$ is given by
\begin{widetext}
\eq{
Q_k &= \sums \sum_{\max(\frac{k-3}{2},2)\leq l \leq k} \left( q_{\bd_1 b_l} \bd_1 b_l + q_{b_1 \bd_l} b_1 \bd_l \right)
+ \sum_{\frac{k+1}{2}\leq l \leq k-1} \sum_{1 \leq m \leq l} q_{\bd_1 \bd_m \bd_l} \bd_1 \bd_m \bd_l
\lbl{exact-Cb-odd}
}
with
\eq{
q_{\bd_1 b_l} &= - q_{b_1 \bd_l} = \CC{\frac{k-3}{2}}{k-3-l} + \CC{\frac{k-3}{2}}{k-2-l} + \CC{\frac{k-3}{2}}{k-1-l}
\\
q_{\bd_1 \bd_m \bd_l}
&= \begin{cases}
6 (-1)^m \CC{\frac{k-1}{2}-m}{k-1-l} \sigma_{1,m,l} c_{30} & (1 \leq m \leq l - \tfrac{k-1}{2}) \\
6 (-1)^{l-m} \CC{\frac{k-3}{2}-l+m}{k-1-l} \sigma_{1,m,l} c_{30} & (\tfrac{k+1}{2} \leq m \leq l) \\
0 & (\text{otherwise})
\end{cases}.
}
Here we adopt the convention that the binomial coefficient $\CC{i}{j}$ equals zero unless $0 \leq j \leq i$ holds.

The $k$-local charge of the Hamiltonian~\eqref{model3-Cb} with $c_{20}=0$ for even $k\geq 4$ is given by
\eq{
Q_k &= \sums \sum_{\max(\frac{k-2}{2},2)\leq l \leq k} \left( q_{\bd_1 b_l} \bd_1 b_l + q_{b_1 \bd_l} b_1 \bd_l \right)
+ \sum_{\frac{k}{2}\leq l \leq k-1} \sum_{1 \leq m \leq l} q_{\bd_1 \bd_m \bd_l} \bd_1 \bd_m \bd_l
+ q_{n_1} n_1 + q_{\bd_1} \bd_1
\lbl{exact-Cb-even}
}
with
\eq{
q_{\bd_1 b_l} &= q_{b_1 \bd_l} = \half \LL{\frac{k-2}{2}}{k-3-l} + \half \LL{\frac{k-2}{2}}{k-2-l} + \half \LL{\frac{k-2}{2}}{k-1-l} \\
q_{n_1} &= \delta_{k,4} \\
q_{\bd_1 \bd_m \bd_l}
&= \begin{cases}
6 (-1)^{\frac{k}{2}-1} & ((l,m) = (k-1,\frac{k}{2}))\\
3 (-1)^{m-1} \LL{\frac{k}{2}-m}{k-1-l} \sigma_{1,m,l} c_{30} & (1 \leq m \leq l - \tfrac{k-2}{2})\\
3 (-1)^{l-m} \LL{\frac{k-2}{2}-l+m}{k-1-l} \sigma_{1,m,l} c_{30} & (\tfrac{k}{2} \leq m \leq l) \\
0 & (\text{otherwise})
\end{cases} \\
q_{\bd_1} &= 3 \cdot 2^{\frac{k-4}{2}} c_{10}
\lbl{qux-Cb}
}
\begin{comment}
Analysis carried out with the $c_{01}$ model kept as is:
p_{\bd_1 \bd_l} &= - c_{30}c_{01} \sum_{1 \leq m \leq l} \frac{q_{\bd_1 \bd_m \bd_l}}{\sigma_{1,m,l}}
\nx \wha
- c_{30}c_{01} \sum_{2 \leq m \leq k-l} \left( 
\frac{q_{\bd_1 \bd_m \bd_{l+m-1}}}{\sigma_{1,m,l+m-1}} +
\frac{q_{\bd_1 \bd_l \bd_{l+m-1}}}{\sigma_{1,l,l+m-1}}
\right) \nx
q_{\bd_1 \bd_l} &= \sigma_{1,l} \left( \half p_{\bd_1 \bd_l} - q_{\bd_1 \bd_{l+2}} - q_{\bd_1 \bd_{l+1}} \right) \nx
q_{\bd_1} &= \frac{c_{10} q_{(\bd_1)^2}}{c_{01}c_{30}} \nx
q_{b_1} &= \frac{q_{(\bd_1)^2}}{c_{30}}
\end{comment}
\end{widetext}
Here $\LL{i}{j} := \CC{i}{j} + \CC{i-1}{j}$ is the $(i,j)$ entry of the reversed Lucas triangle~\cite{numbers}.

\bigskip

We next consider the general case of the Hamiltonian~\eqref{model3-Cb} with $c_{20}\neq 0$.
In what follows, we shall refer to the shift sum of an operator containing $x$ creation operators $\bd$ and $y$ annihilation operators $b$ as a {\it $\bx{xy}$ quantity}.
For example, $\sums \bd_1 \bd_m \bd_l$ and $\sums (\bd_1)^3$ are $\bx{30}$ quantities, and $\sums h_{1:k}$, $\sums \hb_{1:k}$, and $\sums n_1$ are $\bx{11}$ quantities.

We now construct $k$-local charges.
Let $\mathring{Q}_k$ denote the $k$-local charge for $c_{20}=0$, given by \eref{exact-Cb-odd} when $k$ is odd and by \eref{exact-Cb-even} when $k$ is even.
We first consider odd $k$.
In this case, $\mathring{Q}_k$ still serves as a $k$-local charge for $c_{20}\neq 0$.
The reason is as follows.
The charge $\mathring{Q}_k$ consists of $\bx{30}$ terms and $\bx{11}$ terms.
However, a $\bx{30}$ quantity commutes with $\sums (\bd_1)^2$, and the commutator between a $\bx{11}$ quantity and $\sums (\bd_1)^2$ produces a $\bx{20}$ quantity, which is always canceled for odd $k$ due to the inversion antisymmetry of $[\mathring{Q}_k,H]$.

We next consider $k$-local charges with even $k$.
In this case, the commutator of $\mathring{Q}_k$ with the Hamiltonian~\eqref{model3-Cb} yields a sum of $\bx{20}$ quantities.
These terms can be canceled by adding the following correction term to $\mathring{Q}_k$:
\eq{
q_{\bd_1 \bd_l} = \LL{\frac{k}{2}}{k-1-l} c_{20} \left( = \frac{2c_{20}}{3c_{30}} q_{(\bd_1)^2 \bd_l} = \frac{2c_{20}}{3c_{30}} q_{\bd_1 (\bd_l)^2}\right) .
\lbl{quux-Cb}
}
In other words, the following $k$-local quantity
\eq{
Q_k &= \sums \sum_{\max(\frac{k-2}{2},2)\leq l \leq k} \left( q_{\bd_1 b_l} \bd_1 b_l + q_{b_1 \bd_l} b_1 \bd_l \right) \nx
\whb + \sum_{\frac{k}{2}\leq l \leq k-1} \sum_{1 \leq m \leq l} q_{\bd_1 \bd_m \bd_l} \bd_1 \bd_m \bd_l \nx
\whb + \sum_{\frac{k}{2} \leq l \leq k-1} q_{\bd_1 \bd_l} \bd_1 \bd_l \nx
\whb + q_{n_1} n_1 + q_{\bd_1} \bd_1
}
with \eref{qux-Cb} and \eref{quux-Cb} serves as the $k$-local charge of the Hamiltonian~\eqref{model3-Cb} for even $k$.
\end{proof}

\subsubsection{The remaining case: $c_{20}(c_{11}+2) \neq 3c_{30}c_{03}$ and $c_{11}\neq 0$ (Type~C$^-$)}

\begin{lem}\lbl{t:degree3-C-}
The Hamiltonian~\eqref{model3-raw} with $c_{20}(c_{11}+2) \neq 3c_{30}c_{03}$ and $c_{11}\neq 0$ (and $c_{30} \neq 0$)
has $k$-local charges for any $5 \leq k \leq N/2$.
\end{lem}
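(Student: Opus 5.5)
We will work with the displaced Hamiltonian \eqref{model3}, $H=\sums h_{1:2}+c_{30}(\bd_1)^3+c_{10}\bd_1+c_{11}n_1+c_{01}b_1$ with $c_{01}\neq 0$, since the displacement does not affect the existence of local charges. The plan is a perturbation of the Type~C construction of \lref{degree3-Ca}, organized by the $\bx{xy}$ grading. Commutation with $H$ maps $\bx{11}$ and $\bx{30}$ quantities among themselves through the hopping, $c_{11}n$ and $c_{30}(\bd)^3$ parts. The only terms lowering this structure come from $c_{01}b_1$ (e.g.\ $\bx{30}\to\bx{20}$, $\bx{11}\to\bx{10}$) and from $c_{10}\bd_1$ (e.g.\ $\bx{11}\to\bx{01}$). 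We therefore take the ansatz
\[
Q_k=\sums\Bigl(\sum_{l}q_{\bd_1b_l}\bd_1b_l+q_{b_1\bd_l}b_1\bd_l+\sum_{m\le l}q_{\bd_1\bd_m\bd_l}\bd_1\bd_m\bd_l+\sum_l q_{\bd_1\bd_l}\bd_1\bd_l+q_{\bd_1}\bd_1+q_{b_1}b_1\Bigr).
\]
We then fix the coefficients by grade, from top to bottom.

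First, the $\bx{11}$ and $\bx{30}$ components are chosen exactly as in the $c_{01}=0$ recursion of \lref{degree3-Ca}. The recurrence \eqref{deg3-rec-simple} for $q_{\bd_1\bd_m\bd_l}$, the central (anti)symmetry condition, and the free choice of $q_{b_1\bd_l}$ at the edges do not involve $c_{01}$. Hence the top-grade equations ($\bx{30}$, $\bx{21}$-type, $\bx{11}$) are satisfied for every $k$, as guaranteed by Steps~1--2 (\lref{step1}, \lref{step2}).

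Second, the $\bx{20}$ sector carries the source $c_{01}[\,\bx{30}\text{ part},\sums b_1]$, which is a $\bx{20}$ quantity built from the $q_{\bd_1\bd_m\bd_l}$. We must cancel this with the correction $\sum_l q_{\bd_1\bd_l}\bd_1\bd_l$, whose commutator with the hopping is the tridiagonal map $\sums\bd_1\bd_l\mapsto-2\sums(\bd_1\bd_{l+1}+\bd_1\bd_{l-1})$ plus boundary terms at $l=1$. This is a linear triangular system in $l$, solved downward from $l=k-1$, in the spirit of the formula $q_{\bd_1\bd_l}\propto q_{(\bd_1)^2\bd_l}$ in \eqref{quux-Cb}. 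The $\bx{20}$ coefficients also feed a $\bx{10}$ source via $c_{01}$, and the $\bx{11}$ part feeds $\bx{10}$ and $\bx{01}$ sources via $c_{01}$ and $c_{10}$. These are absorbed by $q_{\bd_1}$ (using $[\sums\bd_1,H]=-\sums((c_{11}+2)\bd_1+c_{01})$) and by $q_{b_1}$, with the exceptional value $c_{11}=-2$ handled separately as in \lref{degree3-Ca}.

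The main obstacle is solvability of the $\bx{20}$ system. The downward recursion in $l$ overdetermines the equation at the lowest rung ($\sums\bd_1\bd_2$, $\sums(\bd_1)^2$), because the number of equations exceeds the number of unknowns by one. \lref{degree3-4local} shows that this compatibility condition fails precisely for $k=4$ when $c_{11}\neq1$: the obstruction is the coefficient $2(c_{11}-1)c_{01}$ in \eqref{tildeQ4-degree3}. We will therefore keep a free parameter, namely the mixing $Q_k\to Q_k+\alpha\tilde Q_{k-2}$ with the lower-order Type~C-like quantity, or equivalently the $t\hb_{1:k-1}$ freedom of \lref{step2} and the lower-order $\bx{11}$ terms $h_{1:k-2},h_{1:k-4},\dots$, which exist only for $k\ge5$. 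We then show that the resulting scalar compatibility equation is affine in $\alpha$ with nonzero slope. First we verify this explicitly for $k=5$, where it is consistent with $Q_5$ displayed above, whose $h_{1:2}$-type coefficients are polynomials in $c_{11}$. For general $k$ we then compute the slope by evaluating the obstruction functional on the boundary rung, using parity (even/odd $k$, i.e.\ inversion symmetry) so that only one scalar condition survives. For odd $k$ the $\bx{20}$ source cancels automatically by inversion antisymmetry, as in the proof of \lref{degree3-Cb}, so the obstruction appears only for even $k\ge6$. There, two lower-order $\bx{11}$ structures are available, and this extra freedom is what we expect to make the system solvable.
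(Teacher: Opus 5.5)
There is a genuine gap. It sits exactly at the step that distinguishes this lemma from \lref{degree3-Ca}: removing the final obstruction for even $k\ge 6$, which you leave as something you ``expect''. The rest of your set-up matches the paper: the displaced Hamiltonian, reuse of the $\bx{30}$/$\bx{11}$ recursion, a recurrence for the $\bx{20}$ coefficients, and automatic vanishing of the $\bx{20}$ part for odd $k$ by inversion antisymmetry. Two things are missing.

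\emph{First, you never show that the end-of-recursion obstruction is a single scalar.} After the recursion, $[Q_k,H]$ is a combination of the five 1-local quantities $\sums(\bd_1)^3$, $\sums(\bd_1)^2$, $\sums I$, $\sums\bd_1$, $\sums b_1$. You have only three 1-local knobs, $\sums n_1$, $\sums b_1$, $\sums\bd_1$, and their effects are coupled. Besides $(c_{11}+2)\sums b_1$, the commutator $[\sums b_1,H]$ also contains $3c_{30}\sums(\bd_1)^2$ and $c_{10}\sums I$. Likewise $[\sums\bd_1,H]$ contains $-c_{01}\sums I$, a constant condition you never impose. So your localization of a single surplus equation at the lowest $\bx{20}$ rung is not justified. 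Parity cannot reduce the two leftover conditions to one either, since all of these 1-local terms are inversion symmetric.

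\emph{Second, you never show that the parameter you keep has nonzero slope along that obstruction.} The $t\hb_{1:k-1}$ of \lref{step2} is not free: Step~3 fixes it to $t=(k-1)(\tfrac32c_{11}-1)$. The freedom that survives is adding lower-range quantities. However, mixing with $\tilde Q_{k-2}$ has zero slope if $\tilde Q_{k-2}$ has already been corrected into a charge, and otherwise needs a nonvanishing you do not prove.

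Relatedly, ``compatibility fails precisely for $k=4$'' misdiagnoses the situation. The uncorrected residual is generically nonzero for every even $k$; what is special about $k\ge5$ is only that a lower-range counterterm exists. Your planned $k=5$ check is also vacuous, since for odd $k$ there is no residual at all.

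The missing idea is as follows. Summing coefficients within each $\bx{xy}$ sector gives the closed identities Eqs.~\eqref{r-bx-op1}--\eqref{r-bx-op5}. These hold here because every term of $Q_k$ and of $H$ contains at most one $b$, so all commutators involve a single contraction. Use $\sums n_1$, $\sums b_1$, $\sums\bd_1$ to kill $r_{\bx{30}}$, $r_{\bx{20}}$, $r_{\bx{00}}$. The two remaining residuals $r_{\bx{10}}$ and $r_{\bx{01}}$ are then both $q_{\bx{20}}$ times $k$-independent constants. Hence for every even $k$ the obstruction equals $q^{(k)}_{\bx{20}}\,v$ for one fixed 1-local operator $v$; this is \eref{degree3-N--1local}.

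The failed 4-local candidate $\tilde Q_4'$ has obstruction $q^{(4)}_{\bx{20}}\,v\neq 0$, which is precisely the content of \lref{degree3-4local} under $c_{01}\neq0$, $c_{11}\neq1$. Therefore $\tilde Q_k'-(q^{(k)}_{\bx{20}}/q^{(4)}_{\bx{20}})\,\tilde Q_4'$ is a $k$-local charge for every even $k\ge6$. In other words, the ``nonzero slope'' you need is exactly the absence of a 4-local charge, and the counterterm should always be the 4-local quantity rather than $\tilde Q_{k-2}$.
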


We have already shown the existence of a 3-local charge and the absence of 4-local charges in \lref{3local} and \lref{degree3-4local}, respectively.
Thus, the above lemma completes the classification of $k$-local charges and shows that this model is Type~C$^-$.

The proof of \lref{degree3-C-} is similar to that of \lref{degree3-Ca}; we present an algorithm to determine all coefficients in $Q_k$.
We postpone the full proof to \apref{proof-degree3-C-} and present only the key idea here.

We first reduce the above Hamiltonian to the displaced Hamiltonian \eqref{model3} with $c_{01} \neq 0$ and $c_{11} \neq 1$ (and $c_{30} \neq 0$).
The $k$-local charges $Q_k$ consist of $\bx{30}$, $\bx{11}$, $\bx{20}$, $\bx{10}$, and $\bx{01}$ quantities.
The main idea of the algorithm used to determine these coefficients is the same as that in \lref{degree3-Ca}; in fact, the coefficients of the $\bx{30}$ quantities are already fixed there.

It remains to determine the coefficients of the $\bx{20}$ quantities, which do not appear in \lref{degree3-Ca}.
If $k$ is odd, $Q_k$ is inversion antisymmetric; hence all coefficients of the $\bx{20}$ quantities vanish automatically.
If $k$ is even, we determine them by requiring that the $\bx{20}$ terms in $[Q_k,H]$ vanish.
The condition that the coefficient of $\sums \bd_1\bd_l$ in $[Q_k,H]$ vanishes implies
\eq{
0&= 
-c_{01} \sum_{1 \leq m \leq l} q_{\bd_1\bd_m\bd_l} 
-c_{01} \sum_{l \leq m \leq k-1} q_{\bd_1\bd_l\bd_m} \nx
\wha - c_{01} \sum_{1 \leq m \leq k-l} q_{\bd_1\bd_m \bd_{l+m-1}}
 \nx
\wha + c_{20}(q_{\bd_1 b_l}+q_{b_1 \bd_l})
- 2c_{11}q_{\bd_1\bd_l}
- 2(q_{\bd_1\bd_{l+1}}+q_{\bd_1\bd_{l-1}}) .
}
Since the coefficients of the $\bx{30}$ and $\bx{11}$ quantities have already been determined in the previous step, the above constraint becomes a set of linear equations for the $\bx{20}$ coefficients. In particular, it can be viewed as a three-term recurrence in $l$ relating $q_{\bd_1\bd_{l-1}}$, $q_{\bd_1\bd_l}$, and $q_{\bd_1\bd_{l+1}}$.
The solution of these equations determines the $\bx{20}$ coefficients, and this can be done consistently for all even $k$.

Finally, we examine the coefficients of the $\bx{10}$ and $\bx{01}$ quantities in $[Q_k,H]$.
Unfortunately, the above procedure yields nonzero values for these coefficients.
For all even $k$, the resulting $k$-local quantity $Q_k'$ satisfies
\eq{
[Q_k' ,H]
&\propto \sums -2 \left( c_{01} + \frac{c_{10}\tilde{c}} {c_{01}} \right)  \bd_1 + 2\tilde{c} b_1
\lbl{degree3-N--1local}
}
with $\tilde{c}:={(c_{11}+2)^2}/{3c_{30}}$.
Since the right-hand side of \eref{degree3-N--1local} is nonzero, $Q_k'$ itself is not a charge.

Nevertheless, for even $k\geq 6$ we can construct a genuine $k$-local charge. Indeed, in this case the commutator \eqref{degree3-N--1local} consists only of 1-local terms, and the two 1-local operators $\bd_1$ and $b_1$ appear with proportional coefficients. Therefore, this remaining 1-local contribution can be canceled by taking a suitable linear combination of $Q_k'$ and $Q_4'$. Concretely, we choose $s_k$ such that
\eq{
[Q_k'+s_k Q_4' ,H]=0 .
}
We define $\tilde{Q}'_k:=Q_k'+s_k Q_4'$, which yields a $k$-local charge for each even $k\geq 6$. Combined with the odd-$k$ construction, this shows the existence of $k$-local charges for all $k\geq 5$.
Note that this argument does not work for $k=4$; indeed, we have already shown that no 4-local charge exists.

\subsection{Absence of 3-local charges implies nonintegrability}
\lbl{s:GMC1}

In this subsection, we prove a lemma showing that, for Hamiltonians of the form \eqref{H-gen}, the absence of a 3-local charge implies the absence of any $k$-local charge with $3 \leq k \leq N/2$.
This lemma directly implies that all models not listed in \lref{3local} are of Type~N.
The proof applies the method of Ref.~\cite{hokkyo2025rigorous} for general spin systems to the bosonic setting considered here, and the model restriction allows for a stronger conclusion.

We first treat the case $k=4$, which serves as a prototype for general $k$.
\begin{lem}\lbl{t:absence34}
For any Hamiltonian of the form \eqref{H-gen}, if it does not have a $3$-local charge, then it does not have a $4$-local charge either.
\end{lem}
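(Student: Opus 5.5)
The plan is to show that the Step~3 condition for a candidate $4$-local charge contains, as a sub-system, the $3$-local solvability condition \eqref{3local-eq} transported to sites $2,3$ and "dressed" by a hopping on sites $1,2$. This follows the strategy of Ref.~\cite{hokkyo2025rigorous}.

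\textbf{Setup.} Let $\deg(g)\ge 3$; the low-degree cases are covered by \lref{3local}(iii), since every $g$ with $\deg(g)\le2$ has a $3$-local charge. Suppose $Q_4$ is a $4$-local charge. By \lref{step2} (see \eref{exampleQ4}) we may write
\[
Q_4=\sums \Bigl(h_{1:4}+g_1\co h_{1:3}+h_{1:2}\co g_2\co h_{2:3}+h_{1:3}\co g_3+t\,\hb_{1:3}+R_{1:3}\Bigr)+\lo{2},
\]
where $R_{1:3}$ is an unknown operator with $3$-site support. Applying \lref{step1} for $k=3$ to the part of $R$ that is forced by Step~3, I expect $R$ to be reducible to the form $u\,h_{1:3}$ plus terms with a single nontrivial middle site. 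I will then impose Step~3, i.e.\ the vanishing of all $3$-local coefficients $r_{\bB}$ of $[Q_4,H]$.

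\textbf{Key step.} I will restrict attention to a distinguished family of basis operators $\bB=\op{10}_1X_2Y_3$ (and their mirror $\op{01}_1X_2Y_3$), in which site~$1$ carries a single creation (annihilation) operator and $X_2Y_3$ ranges over the operators that appear in the left-hand side of \eref{3local-eq} on sites $1,2$. For such $\bB$, I will show that the contributions to $r_{\bB}$ come exactly from three sources:
\begin{itemize}
\item $(h_{1:2}\co g_2\co h_{2:3})\co(h_{2:3}+g_2+g_3)$ and $(h_{1:3}\co g_3)\co(g_3+h_{2:3})$, which reproduce $h_{1:2}\co(\text{n- and l-terms on sites }2,3)$ via the Jacobi identity;
\item $t\,\hb_{1:3}\co(g_2,g_3)$, which reproduces $h_{1:2}\co(\text{t-terms})$ because $\hb_{1:3}=h_{1:2}\co\hb_{2:3}$ up to sign;
\item the term $h_{1:4}\co g_4$ together with its translates, which only produce operators with nontrivial support at the far end and so do not contribute.
\end{itemize}
The consequence is that Step~3 yields
\[
h_{1:2}\co L_{2:3}(t)=\lo{2},
\]
where $L_{2:3}(t)$ denotes the left-hand side of \eref{3local-eq} placed on sites $2,3$. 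Because $h_{1:2}\co(\cdot)$ is injective on operators whose site-$2$ factor has positive degree (the leading coefficient in \eref{comxy-formula} is nonzero), this forces $L_{2:3}(t)=\lo{1}$. That is precisely the solvability of \eref{3local-eq}, so a $3$-local charge exists, contradicting the hypothesis.

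\textbf{Main obstacle.} The hardest part is the bookkeeping in the key step. I must verify that the unknown pieces $R$ and the $\lo{2}$ part of $Q_4$ cannot feed into the chosen coefficients $r_{\bB}$. Commutators of $2$-local terms with $H$ can reach $3$-local operators with a single $\op{10}$ or $\op{01}$ at an end, so these contributions must be ruled out or absorbed into a redefinition of $t$. I will first try a degree count as in the proof of \lref{3local}: select $X_2Y_3$ with the top degree $\deg(g)$ on site $2$ or $3$, where the n- and l-terms of \eref{3local-eq} live. Terms coming from $\lo{2}$ or $R$ are at most linear in $g$ and have lower total degree, so they should drop out. Any surviving degree-$1$ pieces should be exactly of the form $\hb_{1:3}\co g$, which merely shifts $t$.
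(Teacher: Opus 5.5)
Your key step fails, because the $3$-local part of $[Q_4,H]$ is not of the form $h_{1:2}\co L_{2:3}(t)$.

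\textbf{What the $3$-local part actually is.} Carrying out the commutator with the Jacobi/Hokkyo identities of \apref{useful} gives $\sums \chi_{1:2}\co h_{2:3}+h_{1:2}\co\psi_{2:3}+\lo{2}$, as in \eref{cal-absence34}. Neither $\chi$ nor $\psi$ equals the left-hand side of \eref{3local-eq}:
\begin{itemize}
\item $(h_{1:2}\co g_2\co h_{2:3})\co g_2$ splits, half into each form.
\item $(h_{1:2}\co g_2\co h_{2:3})\co g_3=h_{1:2}\co(g_2\co h_{2:3}\co g_3)$ is a cross term with no counterpart in \eref{3local-eq}.
\item Of the two $t$-terms, only $t\hb_{1:3}\co g_3$ has the form $h_{1:2}\co(\cdot)$. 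The other, $-t\,g_1\co\hb_{1:3}=-(g_1\co\hb_{1:2})\co h_{2:3}$, does not.
\end{itemize}

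\textbf{The unknown $2$-local part.} More seriously, the unknown $2$-local part $s_{1:2}$ of $Q_4$ contributes $s_{1:2}\co h_{2:3}-h_{1:2}\co s_{2:3}$. The second piece feeds into every coefficient $r_{\op{10}_1X_2Y_3}$ you want to read off. Your degree count cannot remove it: $s$ is not constrained by Steps~1--2, has no a priori degree bound, and need not be linear in $g$. For the Type~C model $g=c_{30}(\bd)^3+c_{11}n$, the charge \eref{tildeQ4} has $2$-local part containing $(\tfrac92c_{11}-3)c_{30}\bigl((\bd_1)^2\bd_2+\bd_1(\bd_2)^2\bigr)$. This is quadratic in the coefficients of $g$ and of degree $\deg(g)$. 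Through $-h_{1:2}\co s_{2:3}$ it hits, for example, $\op{10}_1\op{00}_2\op{20}_3$, which is exactly where $h_{1:2}\co(\text{n$_2$-term})$ lands. So the $h_{1:2}\co(\cdot)$-type coefficients by themselves only tie (part of) $s$ to $g$; they give no constraint on $g$.

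\textbf{The $t$ mismatch.} Your target identity is also quantitatively off. In the same model, the coefficient of $\bd_1b_3+b_1\bd_3$ in \eref{tildeQ4} gives $t_4=\tfrac32(c_{11}-2)$, whereas \eref{3local-r1} forces $t=c_{11}-2$. That is a rescaling by $\tfrac23$, not a shift by degree-one pieces.

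\textbf{The missing idea: use both halves and add them.} Terms of $\chi_{1:2}\co h_{2:3}$ have degree one on site~$3$, and terms of $h_{1:2}\co\psi_{2:3}$ have degree one on site~$1$. Step~3 therefore forces their $3$-local parts to cancel each other, so both equal $\pm h_{1:2}\co\omega_2\co h_{2:3}$ for some $1$-local $\omega$. Injectivity of $\co h_{2:3}$ and $h_{1:2}\co$ on $1$-local operators then gives $\chi_{1:2}=h_{1:2}\co\omega_2+\lo{1}$ and $\psi_{2:3}=-\omega_2\co h_{2:3}+\lo{1}$. Now add $\chi_{1:2}$ to the one-site-shifted $\psi_{1:2}$:
\begin{itemize}
\item $s$ drops out, since it enters as $+s_{1:2}$ in $\chi$ and $-s_{1:2}$ in $\psi$.
\item The cross terms $\mp g_1\co h_{1:2}\co g_2$ cancel.
\item The half-weights add up to $\tfrac32$ on every n- and l-term.
\item Both $t$-terms are collected.
\end{itemize}
The result is \eref{chi-psi-34}, i.e.\ \eref{3local-eq-pre} with $t\to\tfrac23 t$ and $s\to\tfrac23\omega$. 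This is also where the $\tfrac23$ rescaling comes from. A one-sided reading of the $\op{10}_1X_2Y_3$ coefficients sees only $\psi$, so it cannot reach this conclusion.
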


\begin{proof}
We prove the lemma in the following lines:
\eq{
\wha \text{existence of a 4-local charge} \nx
&\Rightarrow \text{existence of } Q_4 \text{ satisfying } [Q_4, H] = \lo{2}
\nx
&\Rightarrow \text{existence of } Q_3 \text{ satisfying } [Q_3, H] = \lo{1} \nx
&\Leftrightarrow \text{existence of a 3-local charge}
\lbl{hie-34}
}
The first implication is immediate.
The equivalence between the third line (Step~3 for $k=3$; \eref{3local-eq}) and the fourth line is shown in \ssref{proof2-1-local}.
Therefore, it suffices to show that the existence of $Q_4$ satisfying Step~3 implies the existence of $Q_3$ satisfying Step~3, as we show below.

Restating the result of Steps~1 and~2 (see \eref{exampleQ4}), we have
\eq{
Q_4 &= \sums 
{h_{1:4}}
\nx \whb +
{g_1 \co h_{1:3} + h_{1:2} \co g_2 \co h_{2:3} + h_{1:3} \co g_3 + t \hb_{1:3}}
\nx \whb + 
{s_{1:2}}
+ \lo{1},
}
where $t$ is an undetermined parameter, and $s$ is an undetermined 2-local operator.

In Step~3 for $k=4$, we check whether there exist $t$ and $s$ such that $[Q_4, H] = \lo{2}$.
We compute this commutator by a long but straightforward calculation:
\begin{widetext}
\eq{
\wha [Q_4, H] \nx
&= \sums h_{1:4} \co H + (g_1 \co h_{1:3} + h_{1:2} \co g_2 \co h_{2:3} + h_{1:3} \co g_3 + t \hb_{1:3}) \co H + \und{s_{1:2} \co H}{=s_{1:2} \co h_{2:3} - h_{0:1} \co s_{1:2} + \lo{2}} + \lo{2} \nx
&= \sums 
(g_1 \co h_{1:3}) \co g_1 
+ \und{(g_1 \co h_{1:3}) \co g_2}{=0} 
+ \cancel{(g_1 \co h_{1:3}) \co g_3}
+ \und{(g_1 \co h_{1:3}) \co h_{1:2}}{=\half((g_1 \co h_{1:2}) \co h_{1:2}) \co h_{2:3}\\ \text{(\eref{useful-1})}} 
+ \und{(g_1 \co h_{1:3}) \co h_{2:3}}{=\lo{2}}\nx
\whb 
- g_1 \co h_{1:2} \co g_2 \co h_{2:3}
+ \und{(h_{1:2} \co g_2 \co h_{2:3}) \co g_2}
{=\half((h_{1:2}\co g_2)\co g_2)\co h_{2:3} + \half h_{1:2}\co((g_2\co h_{2:3})\co g_2)\\\text{(\eref{useful-3})}} 
+ h_{1:2} \co g_2 \co h_{2:3} \co g_3 \nx
\whb
+ \und{(h_{1:2} \co g_2 \co h_{2:3}) \co h_{1:2}}
{=((h_{1:2}\co g_2)\co h_{1:2})\co h_{2:3} + \lo{2}\\\text{(\eref{useful-4})}} 
+ \und{(h_{1:2} \co g_2 \co h_{2:3}) \co h_{2:3}}{=h_{1:2}\co((g_2\co h_{2:3})\co h_{2:3})+\lo{2}\\\text{(\eref{useful-5})}}  \nx
\whb 
- \cancel{g_1 \co h_{1:3} \co g_3} 
+ \und{(h_{1:3} \co g_3) \co g_2}{=0}
+ (h_{1:3} \co g_3) \co g_3
+ \und{(h_{1:3} \co g_3)\co h_{1:2}}{=\lo{2}}
+ \und{(h_{1:3} \co g_3) \co h_{2:3}}
{=\half h_{1:2}\co ((h_{2:3} \co g_3) \co h_{2:3})\\\text{(\eref{useful-2})}} \nx
\whb 
-t g_1 \co \hb_{1:3}
+t \und{\hb_{1:3} \co g_2}{=0}
+t \hb_{1:3}\co g_3 + s_{1:2} \co h_{2:3} - h_{1:2} \co s_{2:3} + \lo{2} \nx
&= \sums \und{\left( 
(g_1 \co h_{1:2} ) \co g_1
+ \half (g_1 \co h_{1:2} ) \co h_{1:2}
- g_1 \co h_{1:2} \co g_2
+ \half ( h_{1:2} \co g_2 ) \co g_2
+ (h_{1:2} \co g_2) \co h_{1:2}
- t g_1 \co h_{1:2} + s_{1:2} 
\right)}
{=: \chi_{1:2}}
\co h_{2:3}\nx
\whb + h_{1:2} \co 
\und{\left( 
\half (g_2 \co h_{2:3}) \co g_2 
+ g_2 \co h_{2:3} \co g_3 
+ (g_2 \co h_{2:3}) \co h_{2:3} 
+ (h_{2:3} \co g_3) \co g_3 + \half (h_{2:3} \co g_3) \co h_{2:3} 
+ t \hb_{2:3} \co g_3 - s_{2:3}
\right)}{=:\psi_{2:3}}\nx
\whb + \lo{2}\nx
&= \sums \chi_{1:2} \co h_{2:3} + h_{1:2} \co \psi_{2:3} + \lo{2} .
\lbl{cal-absence34}
}
\end{widetext}

We now examine the condition for Step~3 to admit a solution, i.e.,
\eq{
\chi_{1:2} \co h_{2:3} + h_{1:2} \co \psi_{2:3} = \lo{2} .
\lbl{k=4-step3}
}
We observe that any 3-local term in $\chi_{1:2} \co h_{2:3}$ must be canceled by the same 3-local term in $h_{1:2} \co \psi_{2:3}$. This implies that such a 3-local term must be expressible both as (some term) $\co\, h_{2:3}$ and as $h_{1:2}\,\co$ (some term).
Hence, we can express the above two terms using a 1-local operator $\omega$ as
\eq{
\chi_{1:2} \co h_{2:3} &= h_{1:2} \co \omega_2 \co h_{2:3} + \lo{2} , \lbl{chi-34-pre} \\
-h_{1:2} \co \psi_{2:3} &= h_{1:2} \co \omega_2 \co h_{2:3} + \lo{2} . \lbl{psi-34-pre}
}
In addition, since the superoperator $\co h_{2:3}$ (resp. $h_{1:2}\co$) is injective (i.e., there is no nontrivial 1-local operator $\delta$ satisfying $\delta_2 \co h_{2:3} = 0$ (resp. $h_{1:2} \co \delta_2 = 0$)), we further have
\eq{
\chi_{1:2} &= h_{1:2} \co \omega_2 + \lo{1}, \lbl{chi-34}\\
\psi_{2:3} &= - \omega_2 \co h_{2:3} + \lo{1}, \lbl{psi-34}
}
which directly imply
\eq{
\chi_{1:2} + \psi_{1:2} - h_{1:2} \co \omega_2 + \omega_1 \co h_{1:2} = \lo{1},
}
where we used the one-site shifted version of \eref{psi-34}.
Finally, substituting the definitions of $\chi_{1:2}$ and $\psi_{1:2}$ from \eref{cal-absence34}, we arrive at
\eq{
\wha
\tfrac{3}{2} (g_1 \co h_{1:2}) \co g_1
+ \tfrac{3}{2} (g_1 \co h_{1:2}) \co h_{1:2} \nx
\wha + \tfrac{3}{2} (h_{1:2} \co g_2) \co g_2
+ \tfrac{3}{2} (h_{1:2} \co g_2) \co h_{1:2} \nx
\wha + t( - g_1 \co \hb_{1:2} + \hb_{1:2} \co g_2) + \omega_1 \co h_{1:2} - h_{1:2} \co \omega_2 \nx
&= \lo{1}.
\lbl{chi-psi-34}
}
We now find that the existence of $t$ and $\omega$ satisfying \eref{chi-psi-34} is equivalent to the existence of $t$ and $s$ satisfying \eref{3local-eq-pre}, namely Step~3 for $k=3$, upon setting $t\to \frac32 t$ and $\omega\to \frac32 s$.
In summary, if $Q_4$ satisfies Step~3 for $k=4$, then \eref{chi-psi-34} holds, and hence \eref{3local-eq-pre} follows.
This shows that the second line of \eref{hie-34} implies its third line, completing the proof of the contraposition of the desired lemma: if a 4-local charge exists in a model, then a 3-local charge also exists.
\end{proof}

We now state the lemma for general $k$.
\begin{lem}\lbl{t:absence3k}
For any Hamiltonian of the form \eqref{H-gen}, if it does not have a $3$-local charge, then it does not have $k$-local charges for all $3 \leq k \leq N/2$.
\end{lem}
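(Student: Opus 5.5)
We generalize the chain of implications \eqref{hie-34} from the proof of \lref{absence34}. It suffices to show, for every $4\le k\le N/2$, that the existence of a uniform $k$-local quantity $Q_k$ with $[Q_k,H]=\lo{k-2}$ (Step~3 for $k$) implies the existence of $t$ and a 1-local $s$ solving \eref{3local-eq-pre}. That equation is equivalent to the existence of a 3-local charge, as shown in \ssref{proof2-1-local}. The staggered sector is handled separately in \apref{staggered}; in the uniform sector we restrict to a fixed inversion parity. For $\deg(g)\le 2$, \lref{3local} already gives a 3-local charge, so the statement is vacuous there. Hence we may assume $\deg(g)\ge 3$ and use the Step~2 form \eref{Qk-step2}:
\[
Q_k=\sums\Bigl(h_{1:k}+\sum_{m}h_{1:m}\co g_m\co h_{m:k-1}+t\,\hb_{1:k-1}+s_{1:k-2}\Bigr)+\lo{k-3},
\]
where $s$ is an undetermined $(k-2)$-local operator.

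\textbf{Step 1: $(k-1)$-local part of the commutator.} We compute the $(k-1)$-local part of $[Q_k,H]$, exactly as in \eref{cal-absence34} but for general $k$. We expect it to collapse into the telescoping form
\[
\sums\Bigl(\sum_{j}h_{1:j}\co\chi^{(j)}_{j:j+1}\co h_{j+1:k-1}\Bigr)+\lo{k-2}.
\]
Each $\chi^{(j)}$ is 2-local and is built from n-, l- and t-type combinations of $g$, together with the pieces of $s$. The identities used for $k=4$ (\eref{useful-1}--\eref{useful-5}) are what move a commutator with $h$ or $g$ past the hopping strings. We would organise the computation by the position $m$ of the inserted $g_m$ and by whether the extra $H$-term acts at an end or in the interior. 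Interior contributions of the form $(h_{1:m}\co g_m\co h_{m:k-1})\co g_{m'}$ with $|m-m'|\ge 2$ cancel pairwise, just as the cancelled terms in \eref{cal-absence34} did.

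\textbf{Step 2: extract $\omega$'s and sum.} Since $\co h$ and $h\co$ are injective on 1-local operators, the vanishing of each $(k-1)$-local string forces relations of the form
\[
\chi^{(j)}_{j:j+1}=h_{j:j+1}\co\omega^{(j)}_{j+1}-\omega^{(j-1)}_j\co h_{j:j+1}+\lo{1},
\]
with 1-local $\omega^{(j)}$. This is the analogue of \eref{chi-34} and \eref{psi-34}. Alternatively, we can project onto operators with a prescribed hopping string on the middle $k-3$ sites, which reduces the problem directly to the $k=4$ situation. Shifting everything to sites $1,2$ and summing over $j$ telescopes the $\omega$ terms into $\omega_1\co h_{1:2}-h_{1:2}\co\omega_2$.

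\textbf{Step 3: recover the 3-local equation.} The n- and l-terms should appear with a common total weight $c_k$ (for $k=4$, $c_4=\tfrac32$; we expect $c_k=(k-1)/2$, or some other nonzero value). The result is $c_k$ times the left-hand side of \eref{3local-eq-pre}, after rescaling $t\to c_k t$ and $\omega\to c_k s$. Hence \eref{3local-eq-pre} is solvable and a 3-local charge exists, which proves the contrapositive.

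The main obstacle is Step 1. We must show that the $(k-1)$-local part really decomposes into strings $h\cdots h\co\chi\co h\cdots h$ with nothing else left over, and that the n- and l-term weights come out equal and nonzero. We would prove this by induction on $k$: relate $Q_{k+1}$'s leading terms to those of $Q_k$ through $h_{1:k+1}=h_{1:k}\co h_{k:k+1}$, and check that the coefficients obey a recursion ($c_{k+1}=c_k+\tfrac12$) that never vanishes. As a fallback, we would use a degree/leading-term argument in the spirit of \lref{3local}. That argument isolates specific operators, such as $\op{20}_1\op{00}\cdots\op{x-2,y}$ strings, whose coefficients in $[Q_k,H]$ are nonzero multiples of the corresponding 3-local obstructions.
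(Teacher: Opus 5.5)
Your overall plan is sound: argue by contrapositive and reduce Step~3 at level $k$ to the 3-local equation \eqref{3local-eq-pre}. Your guessed weight $c_k=(k-1)/2$ is also correct. There is, however, a genuine gap at exactly the step you flag as the main obstacle, and the decomposition claimed there is false for $k\ge5$. The $(k-1)$-local part of $[Q_k,H]$ does \emph{not} reduce to single-defect strings $\sum_j h_{1:j}\co\chi^{(j)}_{j:j+1}\co h_{j+1:k-1}$ with 2-local $\chi^{(j)}$. Two kinds of contributions escape this form.

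First, $s_{1:k-2}$ is an arbitrary $(k-2)$-local unknown. Its contribution $s_{1:k-2}\co h_{k-2:k-1}-h_{1:2}\co s_{2:k-1}$ cannot be split into pieces of that shape unless you already know the structure of $s$.

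Second, the double-insertion terms $(h_{1:m}\co g_m\co h_{m:k-1})\co g_1=-g_1\co h_{1:m}\co g_m\co h_{m:k-1}$ with $3\le m\le k-2$, and their mirror images with $g_{k-1}$, do not cancel pairwise. Only the pair $(m,m')=(1,k-1)$ cancels, as in \eqref{cal-absence34}. For generic $g$ of degree $\ge3$ these terms have an operator of degree $\ge2$ on site~1, the identity on site~2, and a nontrivial operator on site $m\ge3$. Every single-defect string instead either has degree exactly one on site~1 (when $j\ge2$) or is trivial on sites $3,\dots,k-2$ (when $j=1$).

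For $k=4$ your form coincides with \eqref{cal-absence34}, which is why that case works; for $k\ge5$ it does not. As a result, the injectivity argument of your Step~2 has no decomposition to act on, and nothing forces the $\omega$'s to be 1-local. The same unknown components of $s$ also enter your alternative ``projection onto middle hopping strings''. An upward induction via $h_{1:k+1}=h_{1:k}\co h_{k:k+1}$ does not help either, since the hypothesis concerns $Q_k$ together with its own arbitrary $s^{(k)}$.

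The missing idea is to descend one level at a time and let the next level's unknown absorb all of this. The paper splits the $(k-1)$-local part into only two pieces, $\chi^{(k)}_{1:k-2}\co h_{k-2:k-1}+h_{1:2}\co\psi^{(k)}_{2:k-1}$ (\eref{chi-psi-h}), with $(k-2)$-local $\chi^{(k)},\psi^{(k)}$.
\begin{itemize}
\item These pieces contain $s$ and the double-insertion terms. The latter telescope as $d^{(k)}=\delta^{(k)}_{1:k-2}\co h_{k-2:k-1}-h_{1:2}\co\delta^{(k)}_{2:k-1}$.
\item The single-insertion terms are distributed with weights $\tfrac{k-m-1}{k-2}$ and $\tfrac{m-1}{k-2}$ using the Hokkyo identity.
\item The paper then extracts a $(k-3)$-local $\omega^{(k)}$ and shows, via \eref{GMC1-chi-psi}, that \eref{chi-psi-3k} is $\tfrac{k-1}{k-2}$ times the Step~3 condition at $k-1$, with $t^{(k-1)}=\tfrac{k-2}{k-1}t^{(k)}$ and $s^{(k-1)}=\tfrac{k-2}{k-1}\omega^{(k)}-\delta^{(k-1)}$.
\end{itemize}
Iterating gives $\prod_{j=4}^{k}\tfrac{j-1}{j-2}=\tfrac{k-1}{2}$, so your bookkeeping is right. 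What you are missing is the mechanism by which $\omega^{(k)}$ becomes the next $s$.

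Your degree-based fallback cannot replace this. It works for $\deg(g)\ge5$, but for $\deg(g)=3,4$ the 3-local obstruction (e.g.\ $c_{30}(c_{11}+2)\neq4c_{40}c_{01}$) only appears after eliminating $t$ between several equations. At level $k$ those equations would also involve unknown components of $s$.
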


\begin{proof}
Here we only explain the outline of the proof.
The complete proof is presented in \apref{proof-GMC1}.

We show that for any $4 \leq k \leq N/2$, the existence of $Q_k$ satisfying Step~3 implies the existence of $Q_{k-1}$ satisfying Step~3.
If this holds, then we can apply this implication recursively:
\eq{
\wha \text{existence of a $k$-local charge} \nx
&\Rightarrow \text{existence of } Q_k \text{ satisfying } [Q_k , H] = \lo{k-2}
\nx
&\Rightarrow \text{existence of } Q_{k-1} \text{ satisfying } [Q_{k-1}, H] = \lo{k-3} \nx
&\Rightarrow \text{existence of } Q_{k-2} \text{ satisfying } [Q_{k-2}, H] = \lo{k-4} \nx
&\Rightarrow \cdots \nx
&\Rightarrow \text{existence of } Q_4 \text{ satisfying } [Q_4 , H] = \lo{2} \nx
&\Rightarrow \text{existence of } Q_3 \text{ satisfying }  [Q_3 ,H] = \lo{1} \nx
&\Leftrightarrow \text{existence of a 3-local charge}.
}
This implies that the existence of a $k$-local charge for some $4 \leq k \leq N/2$ entails the existence of a 3-local charge.

We prove the above statement in parallel with the proof of \lref{absence34} (for $k=4$).
First, the analyses of Steps~1 and~2 constrain $Q_k$ to be of the form
\eq{
Q_k &= \sums h_{1:k} \nx
\whb + \sum_{0 \leq m \leq k-1} h_{1:m} \co g_m \co h_{m:k-1} + t^{(k)} \hb_{1:k-1} \nx
\whb + s^{(k)}_{1:k-2} + \lo{k-3},
\lbl{prep-absence3k}
}
where $t^{(k)}$ is an undetermined parameter and $s^{(k)}$ is an undetermined $k-2$-local operator.
In Step~3 for $k$, we examine whether there exist $t^{(k)}$ and $s^{(k)}$ satisfying $[Q_k, H] = \lo{k-2}$.

In a similar manner to \eref{cal-absence34}, we find that, introducing $k-2$-local operators $\chi\pk$ and $\psi\pk$, the commutator between $Q_k$ in \eref{prep-absence3k} and the Hamiltonian can be written as
\eq{
[Q_k, H] &= \sums \chi_{1:k-2}\pk \co h_{k-2:k-1} + h_{1:2} \co \psi_{2:k-1}\pk \nx  \wha + \lo{k-2}.
}
The derivation of this relation is presented in \apref{proof-GMC1}.
Hence, the Step~3 condition, $[Q_k, H] = \lo{k-2}$, reads
\eq{
\chi_{1:k-2}\pk \co h_{k-2:k-1} + h_{1:2} \co \psi_{2:k-1}\pk = \lo{k-2}.
\lbl{chi-psi-h}
}
In a similar manner to Eqs.~\eqref{chi-34-pre} and \eqref{psi-34-pre}, we can show that \eref{chi-psi-h} means that $\chi\pk$ and $\psi\pk$ are expressed as
\eq{
\chi_{1:k-2}\pk \co h_{k-2:k-1} &= h_{1:2} \co \omega_{2:k-2}\pk \co h_{k-2:k-1} \nx \wha + \lo{k-2} \\
h_{1:2} \co \psi_{2:k-1}\pk &= - h_{1:2} \co \omega_{2:k-2}\pk \co h_{k-2:k-1} \nx \wha + \lo{k-2},
}
for some $k-3$-local operator $\omega\pk$.
In a similar manner to the derivation of Eqs.~\eqref{chi-34} and \eqref{psi-34}, and recalling the injectivity of the superoperator $\co h$, the above two relations imply
\eq{
\chi_{1:k-2}\pk &= h_{1:2} \co \omega_{2:k-2}\pk + \lo{k-3}, \lbl{chi-k}\\
\psi_{2:k-1}\pk &= - \omega_{2:k-2}\pk \co h_{k-2:k-1} + \lo{k-3}. \lbl{psi-k}
}
Combining \eref{chi-k} and the one-site translated version of \eref{psi-k}, we arrive at
\eq{
\wha \chi_{1:k-2}\pk + \psi_{1:k-2}\pk - h_{1:2} \co \omega_{2:k-2}\pk + \omega_{1:k-3}\pk \co h_{k-3:k-2} \nx &= \lo{k-3}.
\lbl{chi-psi-3k}
}

As shown in \apref{proof-GMC1}, \eref{chi-psi-3k} is equivalent to the Step~3 condition for $k-1$, i.e., there exist appropriate $s^{(k-1)}$ and $t^{(k-1)}$ such that $[Q_{k-1}, H] = \lo{k-3}$ is satisfied, where $Q_{k-1}$ is defined as \eref{prep-absence3k}.
Hence, if the Step~3 condition for $k$ is satisfied, then the Step~3 condition for $k-1$ is also satisfied, which completes the proof.
\end{proof}
\section{discussion}
\lbl{s:discussion}
We analyzed local commuting charges in a translationally invariant bosonic chain with symmetric hopping and a general on-site term that may be non-Hermitian. Our classification shows that the all-or-nothing picture fails: in addition to the usual integrable (Type~C) and nonintegrable (Type~N) types, we find two partially integrable types in the uniform sector, namely Type~N$^+$ and Type~C$^-$. To our knowledge, these are the first examples of translationally invariant nearest-neighbor systems violating the all-or-nothing rule. This implies that the Grabowski--Mathieu test based on 3-local charges \cite{grabowski1995integrability} is not universally applicable. Partial integrability appears only in the non-Hermitian case, while the Hermitian subclass still obeys all-or-nothing. We also completed the classification of staggered local charges and showed that, beyond uniform charges, staggered charges are the only possible local ones in this class.

It is worth clarifying the boundary of applicability of the all-or-nothing rule. Prior work has established this rule in several low-spin classes \cite{yamaguchi2024complete,hokkyo2025absence} and in SU(2)-symmetric spin chains \cite{shiraishi2025dichotomy}, whereas our counterexamples arise in non-Hermitian bosonic chains. At least from the perspective of our proof, violations of the rule seem to owe more to the infinite-dimensional local bosonic Hilbert space than to non-Hermiticity. Two questions remain open: whether a breakdown can occur in Hermitian bosonic systems beyond our setting, and whether spin chains obey the all-or-nothing rule. Extending the rule to broader settings while simultaneously searching for further counterexamples should sharpen where this boundary lies.

Our findings do not refute the philosophy behind the Grabowski--Mathieu test. More precisely, the present results challenge only the universality of 3-locality as the decisive diagnostic range. In our class, the presence or absence of a 5-local charge already determines the entire tail of the hierarchy: it decides whether $k$-local charges exist for every $k\ge 6$. This motivates the following conjecture.
\begin{conj*}[Generalized Grabowski--Mathieu test]
For each natural class of one-dimensional nearest-neighbor systems, there exists a finite integer $M\ge 3$ such that, for every Hamiltonian in the class, the existence of an $M$-local charge is equivalent to the existence of $k$-local charges for all $M+1\le k\le N/2$.
\end{conj*}
\noindent
In previously studied subclasses, $M=3$ has been rigorously established, whereas our class realizes $M=5$, indicating that the decisive range is class-dependent.

Finally, Type~C$^-$ poses a particularly sharp puzzle: it has infinitely many local charges, yet no standard exact-solution framework is currently available for it. In regular Yang--Baxter integrable nearest-neighbor chains, the transfer matrix generates local commuting charges at every range $k \ge 3$
\cite{tetelman1982lorentz}. The same pattern also appears in known Yang--Baxter integrable models with non-difference-type $R$-matrices, in which the higher charges can be constructed via generalized boosts
\cite{links2001ladder,deleeuw2021yangbaxter}. From this viewpoint, Type~C$^-$ falls outside the Yang--Baxter / transfer-matrix paradigm. This raises a natural question: do these models admit an exact-solution mechanism beyond the conventional framework, or do they instead realize a genuinely new class with infinitely many local charges but no exact solution method? An analogous question arises for the new Type~C models found by our bottom-up analysis: although they possess a complete local-charge tower, it remains open whether they admit any exact solution method.

\paragraph*{Acknowledgments}
We thank Masaya Kunimi for suggesting the initial problem that inspired this study, and Yuuya Chiba and Fuga Ishii for their constructive comments.
We also thank Atsuo Kuniba, Hosho Katsura, Akihiro Hokkyo, Hiroaki Shimeno, and Hal Tasaki for valuable discussions.

This work was supported by JSPS KAKENHI Grant Number JP25KJ0815 and JST ERATO Grant Number JPMJER2302.

\bibliography{main}

\clearpage

\widetext

\newcommand{\tk}{t^{(k)}}
\newcommand{\sk}{s^{(k)}}
\newcommand{\dk}{d^{(k)}}
\newcommand{\ek}{e^{(k)}}
\newcommand{\fk}{f^{(k)}}
\newcommand{\ddk}{\delta^{(k)}}
\newcommand{\erk}{\acute{\epsilon}^{(k)}}
\newcommand{\elk}{\grave{\epsilon}^{(k)}}

\newcommand{\addl}[1]{\und{#1}{\text{add to }\chi^{(k)}_{1:k-2}}}
\newcommand{\addr}[1]{\und{#1}{\text{add to }\psi^{(k)}_{2:k-1}}}

\appendix

\section{Step~1 analysis (Proof of \lref{step1})}
\lbl{s:proof-step1}

In this section, we perform Step~1 analysis and prove \lref{step1}.
More precisely, for a $k$-local quantity $Q_k$, we show the equivalence of the following two statements:
\begin{itemize}
\item $Q_k \co H$ is an at-most-$k$-local quantity, i.e., $Q_k \co H$ does not have a $k+1$-local parts.
\item $Q_k$ is expressed as
\eq{Q_k = \sumi ( q_{+-} \bd_i b_{i+k-1} + q_{-+} b_i \bd_{i+k-1}) + 
\sumi (-1)^i (q_{++} \bd_i \bd_{i+k-1} + q_{--} b_i b_{i+k-1} ) + \lo{k-1}
\lbl{Qk-step1-a}
}
with four constants $q_{+-}$, $q_{-+}$, $q_{++}$, and $q_{--}$.
\end{itemize}
Since the case $k=3$ has been addressed in the main text, we now establish the result for general $k$.

\begin{proof}
The implication from the latter to the former is straightforward.
A key observation is that the $k+1$-local parts of $Q_k \co H$ arise only from pairs of terms in the expansions of $Q_k$ and $H$ whose supports overlap on exactly one site.
Hence, for example, $\bd_i b_{i+k-1}$ in $Q_k$ can generate $k+1$-local parts only in the commutators with $h_{i+k-1:i+k}$ and $h_{i-1:i}$.
A direct computation of the $k+1$-local parts of \eref{Qk-step1-a} reveals that all such contributions cancel exactly, so that no $k+1$-local term remains.

We now show that the former implies the latter.
The proof is based on three observations.
For any operator $\bA$ with a nonzero coefficient, we have:
(i) The first and last symbols of $\bA$ are either $\op{10}$ ($=\bd$) or $\op{01}$ ($=b$).
(ii) All interior symbols of $\bA$ (i.e., all symbols except the first and last) are $\op{00}$ ($=I$).
(iii) The coefficients of $\bd_i b_{i+k-1}$ and $b_i \bd_{i+k-1}$ are independent of $i$ (i.e., they are translation invariant), whereas the coefficients of $\bd_i \bd_{i+k-1}$ and $b_i b_{i+k-1}$ acquire a factor of $(-1)^{i}$ (i.e., they are staggered).

We first show that, for $\bA$ to have a nonzero coefficient, its first and last symbols must be either $\op{10}$ or $\op{01}$.
Consider $\bA = \op{x_Ly_L}_{i} \op{x_{L'}y_{L'}}_{i+1} \dots \op{x_{R}y_{R}}_{i+k-1}$.
Suppose, for contradiction, that the first symbol $\op{x_L y_L}$ is neither $\op{10}$ nor $\op{01}$.
In this case, the commutator of $\bA$ with the interaction term $h_{i+k-1:i+k}$ produces a $(k+1)$-local operator $\bB$ as follows:
\eq{
\bA \co h_{i+k-1:i+k}
&= y_R \cdot \op{x_Ly_L}_{i} \op{x_{L'}y_{L'}}_{i+1} \dots \op{x_{R},y_{R}-1}_{i+k-1} \op{01}_{i+k}  - x_R \cdot \op{x_Ly_L}_{i} \op{x_{L'}y_{L'}}_{i+1} \dots \op{x_{R}-1,y_{R}}_{i+k-1} \op{10}_{i+k},
\lbl{step1-Ah}
}
where we have used the relations $\op{x_Ry_R} \co \op{10} = y_R \cdot \op{x_R-1,y_R}$ and $\op{x_Ry_R} \co \op{01} = -x_R \cdot \op{x_R,y_R-1}$ at site $i+k-1$.
Now, observe that the operators $\op{x_Ly_L}_{i} \op{x_{L'}y_{L'}}_{i+1} \dots \op{x_{R},y_{R}-1}_{i+k-1} \op{01}_{i+k}$ and $\op{x_Ly_L}_{i} \op{x_{L'}y_{L'}}_{i+1} \dots \op{x_{R}-1,y_{R}}_{i+k-1} \op{10}_{i+k}$ appearing on the right-hand side of \eref{step1-Ah} cannot be generated from commutators of any other pair of terms, under the assumption that $\op{x_L y_L}\neq \op{10},\op{01}$ (see \ssref{step1} for an example).
Hence, we obtain
\begin{alignat}{3}
r_{\op{x_Ly_L}_{i} \op{x_{L'}y_{L'}}_{i+1} \dots \op{x_{R},y_{R}-1}_{i+k-1} \op{01}_{i+k}} &=& y_R q_{\bA} &= 0\lbl{step1-r01},\\
r_{\op{x_Ly_L}_{i} \op{x_{L'}y_{L'}}_{i+1} \dots \op{x_{R}-1,y_{R}}_{i+k-1} \op{10}_{i+k}} &=& -x_R q_{\bA} &= 0.
\lbl{step1-r02}
\end{alignat}
Since the rightmost operator $\op{x_R y_R}$ is not $\op{00}$, $q_{\bA}=0$ holds if $\op{x_L y_L}$ is neither $\op{10}$ nor $\op{01}$.
A symmetric argument for the last symbol shows that $\op{x_R y_R}\neq \op{10},\op{01}$ implies $q_{\bA}=0$ as well.

Next, we turn to the interior symbols and show that they must all be $\op{00}$.
Consider, as a representative case, an operator $\bA'$ with first symbol $\op{x_L y_L}_i = \op{10}_i$ and last symbol $\op{x_R y_R}_{i+k-1} = \op{01}_{i+k-1}$;
the other cases, with the first/last symbols being $\op{10}$ or $\op{01}$ in different combinations, can be treated analogously.
Among the possible commutators that produce $\bB' = \bA' \co h_{i+k-1:i+k} = \op{10}_{i} \op{x_{L'}y_{L'}}_{i+1} \dots \op{00}_{i+k-1} \op{01}_{i+k}$, 
there are exactly two contributing pairs: $(\bA',h_{i+k-1:i+k})$ itself and $(\tilde{\bA}',h_{i:i+1})$, where $\tilde{\bA}' := \op{x_L+1,y_L}_{i+1} \dots \op{00}_{i+k-1} \op{01}_{i+k}$.
Hence, we obtain
\eq{
r_{\bB'} = q_{\bA'} - (x_L+1) q_{\tilde{\bA}'} = 0
\lbl{step1-prop} .
}
Since we have already shown that the coefficient of $\tilde{\bA}'$ vanishes unless its first symbol $\op{x_{L'}+1, y_{L'}}$ is either $\op{10}$ or $\op{01}$.
Therefore, $q_{A'}$ can be nonzero only when the second symbol of $\bA'$, namely $\op{x_{L'} y_{L'}}$, is $\op{00}$.

A similar argument applies to the third symbol.
Assume $\bA'' = \op{10}_i \op{00}_{i+1} \op{x_{L''}y_{L''}}_{i+2} \dots \op{01}_{i+k-1}$.
Then, by the same reasoning as in \eref{step1-prop}, the coefficient of $\bA''$ is proportional to that of
$\tilde{\bA}''=\op{10}_{i+1} \op{x_{L''} y_{L''}}_{i+2} \dots \op{00}_{i+k-1} \op{01}_{i+k}$.
Since we have already established that $q_{\tilde{\bA}''}=0$ unless the second symbol of $\tilde{\bA}''$ (i.e., $\op{x_{L''} y_{L''}}$) is $\op{00}$, it follows that $q_{\bA''}$ must vanish unless $\op{x_{L''} y_{L''}}=\op{00}$ holds.

Proceeding inductively, one shows that every interior symbol of $\bA$ must be $\op{00}$.
Thus, combining this with the previous step, we conclude that an operator $\bA$ with a nonzero coefficient must have its first and last symbols equal to $\op{10}$ or $\op{01}$, and all of its interior symbols equal to $\op{00}$.

Finally, we derive the relations satisfied by the remaining coefficients of $\bA$, by showing that they are restricted to the form given in \eref{Qk-step1-a}.
For instance, the operator $\bB = \bd_i b_{i+k}$ can be generated by exactly two pairs of operators, namely $(\bd_i b_{i+k-1},h_{i+k-1:i+k})$ and $(\bd_{i+1} b_{i+k},h_{i:i+1})$, which yield
\eq{
q_{\bd_i b_{i+k-1}} - q_{\bd_{i+1} b_{i+k}} = 0
}
In a similar manner, by considering the pairs that produce $\bB = b_i \bd_{i+k}$, $\bd_i \bd_{i+k}$, and $b_i b_{i+k}$, we obtain the relations
\eq{
q_{\bd_i b_{i+k-1}} &= q_{\bd_{i+1} b_{i+k}}, \\
q_{b_i \bd_{i+k-1}} &= q_{b_{i+1} \bd_{i+k}}, \\
q_{\bd_i \bd_{i+k-1}} &= - q_{\bd_{i+1} \bd_{i+k}}, \\
q_{b_i b_{i+k-1}} &= - q_{b_{i+1} b_{i+k}}.
}

In conclusion, the Step~1 analysis shows that the condition $[Q_k,H]$ being at most $k$-local is equivalent to $Q_k$ taking the restricted form \eref{Qk-step1-a}.
\end{proof}

\section{Step~2 analysis (Proof of \lref{step2})}
\lbl{s:proof-step2}

In this section, we perform Step~2 analysis and prove \lref{step2}.

More precisely, we show the equivalence of the following two statements on a $k$-local quantity $Q_k$ that is an eigenoperator of spatial inversion $P$ (i.e., $P[Q_k]=\pm Q_k$) and an on-site operator $g$ with $\deg(g)\ge 3$:
\begin{itemize}
\item $Q_k \co H$ is an at-most-$k-1$-local quantity, i.e., $Q_k \co H$ does not have $k+1$-local parts and $k$-local ones.
\item $Q_k$ is expressed as
\eq{
Q_k = \sums \left( h_{1:k} + \sum_{0 \leq m \leq k-1} h_{1:m} \co g_m \co h_{m:k-1} + t \hb_{1:k-1}  \right) + \lo{k-2}
\lbl{Qk-step2-a}
}
up to a scalar multiple.
\end{itemize}

\begin{proof}
The implication from the latter to the former is straightforward.
A key observation is that $k$-local parts in $Q_k \co H$ arise in only two ways: from pairs consisting of a $k$-local term of $Q_k$ and an on-site term $g$ in $H$ (when their supports overlap), and from a $(k-1)$-local term of $Q_k$ and an interaction term $h$ in $H$ (when they overlap on exactly one site).
Substituting \eref{Qk-step2-a}, we obtain
\eq{
Q_k \co H
&= \sums h_{1:k} \co (g_1 + g_k) \nx
\whb + \sum_{0 \leq m \leq k-1} (h_{1:m} \co g_m \co h_{m:k-1}) \co (h_{0:1} + h_{k-1:k}) \nx
\whb + t\hb_{1:k-1} \co (h_{0:1} + h_{k-1:k}) + \lo{k-1} \nx
&= \sums - g_1 \co h_{1:k} + h_{1:k} \co g_k \nx
\whb +\sum_{0 \leq m \leq k-1} -\und{h_{0:m} \co g_m \co h_{m:k-1}}{=T^{-1} h_{1:m+1} \co g_{m+1} \co h_{m+1:k}} +
h_{1:m} \co g_m \co h_{m:k} \nx
\whb +t ( - \und{\hb_{0:k-1}}{=T^{-1} \hb_{1:k}} + \hb_{1:k} )
+ \lo{k-1} \nx
&= \lo{k-1}
\lbl{step2-QH}
}
In the second line, we also use that $h_{1:1}$ and $h_{m:m}$ are the identity operator, so the boundary $g \co h$ terms and the $h \co g \co h$ terms cancel pairwise in the translational sum.
Here we use the identity for translational sums, which holds for any $a_{1:l}$,
\eq{
\sums T^{-1} a_{1:l} = \sums a_{1:l}.
}

We now show that the former implies the latter in two parts:
(i) We establish $q_{\bd_1 b_k} = (-1)^k q_{b_1 \bd_k}$, that is, the coefficient of $\hb_{1:k}$ vanishes.
(ii) Any quantity in the same spatial inversion sector as $h_{1:k}$ must take the form \eqref{Qk-step2-a}.

We begin with part (i), proving $q_{\bd_1 b_k} = (-1)^k q_{b_1 \bd_k}$.
The idea is to focus on several relations of the form $r_{\bB}=0$ for operators $\bB$ of length $k$, and then combine them with appropriate coefficients to obtain the desired result.

First, fix an operator $\op{xy}$ with $c_{xy}\neq 0$ and $x+y=\deg(g)(\geq 3)$, and consider all the commutators that produce $\bB_{(1)} = \op{x,y-1}_1 \op{01}_k$.
An operator $\bB$ of length $k$ can arise either from a pair consisting of a $k$-local term in $Q$ (namely, $\op{10}_1 \op{01}_k$ or $\op{01}_1 \op{10}_k$) and an on-site term $g$ in $H$, or from a pair consisting of a $(k-1)$-local term in $Q$ (which is arbitrary at this point) and an interaction term $h$ in $H$.
Taking into account that the degree of $\op{x,y-1}$, the first symbol of $\bB_{(1)}$, is at least two, we find that the only contributing commutators are
$\op{10}_1 \op{01}_k \co \op{xy}_1$ ($= -y \cdot \op{x,y-1}_1 \op{01}_k$)
and
$\op{x,y-1}_1 \op{01}_{k-1} \co \op{10}_{k-1} \op{01}_k$.
As a result, we obtain the relation
\eq{
r_{\bB_{(1)}}
= -y c_{xy} q_{\op{10}_1 \op{01}_k}
  + q_{\op{x,y-1}_1 \op{01}_{k-1}} = 0.
}

Considering a general family of operators $\bB_{(m)}$ ($1 \leq m \leq k$) as
\eq{
\bB_{(m)} = 
\begin{cases}
\op{x,y-1}_1 \op{01}_k & (m=1) \\
\op{01}_1 \op{x,y-2}_m \op{01}_k & (2\leq m \leq k-1) \\
\op{01}_1 \op{x,y-1}_k & (m=k)
\end{cases} ,
}
we obtain
\eq{
r_{\bB_{(m)}}
= 
\begin{cases}
-y c_{xy} q_{\op{10}_1 \op{01}_k}
+ q_{\op{x,y-1}_1 \op{01}_{k-1}} & (m=1) \\
(y-1) q_{\op{x,y-1}_2\op{01}_k}+ q_{\op{01}_1\op{x,y-2}_2\op{01}_{k-1}} & (m=2) \\
q_{\op{01}_2 \op{x,y-2}_m \op{01}_k} + q_{\op{01}_1 \op{x,y-2}_{m-1} \op{01}_{k-1}} & (3 \leq m \leq k-2) \\
q_{\op{01}_2 \op{x,y-2}_{k-1} \op{01}_k} + (y-1) q_{\op{01}_1 \op{x,y-1}_{k-1}} & (m=k-1) \\
q_{\op{01}_2 \op{x,y-1}_k} - y c_{xy} q_{\op{01}_1 \op{10}_k} & (m=k)
\end{cases}.
\lbl{step2-r}
}
Here, for $\bB_{(m)}$ with $2 \leq m \leq k-1$, the contributions arise from two pairs consisting of $k-1$-local terms of $Q$ and the interaction terms of $H$.
The contributions to $\bB_{(m)}$ are similar to those for $\bB_{(1)}$.

Substituting \eref{step2-r} into $-(y-1)r_{\bB_{(0)}} + \sum_{2 \leq m \leq k-1} (-1)^m r_{\bB_{(m)}} + (-1)^k (y-1) r_{\bB_{(k)}} = 0$ (which holds since each $r_{\bB_{(m)}}$ vanishes), we find that many terms cancel, yielding
\eq{
y(y-1) c_{xy} \left( q_{\op{10}_1 \op{01}_k} - (-1)^k q_{\op{01}_1 \op{10}_k}\right) = 0.
}
Here, we have used the translation symmetry of $Q_k$, namely $q_{a_{1:l}} = q_{a_{2:l+1}}$.
From this, we obtain the desired relation $q_{\op{10}_1 \op{01}_k} = (-1)^k q_{\op{01}_1\op{10}_k}$, which shows that the $k$-local part of a charge $Q_k$ is restricted to be a scalar multiple of $h_{1:k}$.
Thus, since we only consider eigenoperators of spatial inversion, it follows that for odd $k$ only the antisymmetric sector is allowed, whereas for even $k$ only the symmetric sector is allowed.

Next, we prove part (ii).
First, recall from \eref{step2-QH} that $Q_k$ of the form \eqref{Qk-step2-a} satisfies $Q_k \co H = \lo{k-1}$.
We then investigate a general $k$-local quantity $Q_k + \Delta$ that also satisfies $(Q_k + \Delta) \co H = \lo{k-1}$.
Here, since $\Delta$ is the difference between two $k$-local quantities, it is itself an at-most-$k$-local quantity and satisfies
\eq{
\Delta \co H = \lo{k-1}.
\lbl{delta-step2}
}

By Step~1, the $k$-local part of $\Delta$ is restricted to the form \eref{Qk-step1-a}. Thus, we may assume that $\Delta$ is at most $k-1$-local.
If $\Delta$ is $k-1$-local, applying Step~1 with $k$ replaced by $k-1$ shows that a translationally invariant $\Delta$ is a linear combination of $h_{1:k-1}$, $\hb_{1:k-1}$, and at-most-$k-2$-local terms.
Among the first two, the one that belongs to the same spatial inversion sector as $h_{1:k}$ is $\hb_{1:k-1}$. Therefore,
\eq{
\Delta \propto  \hb_{1:k-1} + \lo{k-2}.
}
This freedom can be absorbed into the arbitrariness of $t$ in \eref{Qk-step2-a}. Thus, the condition that $Q_k \co H$ is at-most-$k-1$-local is equivalent to \eref{Qk-step2-a} (up to an overall constant factor).
\end{proof}

\section{Classification of the existence/absence of a 3-local charge (Proof of \lref{3local})}
\lbl{s:proof-3local}
In this section, we prove \lref{3local}, namely, we clarify the necessary and sufficient condition on $g$ for the existence of a 3-local charge.
In particular, we will show below that models other than those listed in the statement of \lref{3local} do not admit a 3-local charge.
We have already demonstrated in \eref{explicit-4} and \eref{explicit-3} that the listed models possess a 3-local charge. The case $\deg(g) \leq 2$ is addressed later in \apref{degree2}.

As established in the main text, a model admits a 3-local charge if and only if $g$ satisfies \eref{3local-eq}.
It has already been confirmed in the main text (\ssref{model-4}) that \eref{3local-eq} never holds for $\deg(g)\geq 5$.
We therefore treat separately the cases $\deg(g)=4$ and $\deg(g)=3$ in what follows.

\begin{proof}
Substituting $g = \sum_{x,y} c_{xy} \op{xy}$ into \eref{3local-eq}, we obtain
\eq{
\wha \sum_{x,y} \sum_{x',y'} c_{xy} c_{x'y'} ( (\op{xy}_1 \co h_{1:2}) \co \op{x'y'}_1 + (h_{1:2} \co \op{xy}_2) \co \op{x'y'}_2) \nx
\wha + \sum_{x,y} c_{xy} ( (\op{xy}_1 \co h_{1:2}) \co h_{1:2} + (h_{1:2} \co \op{xy}_2) \co h_{1:2}) \nx
\wha + \sum_{x,y} c_{xy} ( -t \op{xy}_1 \co \hb_{1:2} + t \hb_{1:2} \co \op{xy}_2 ) \nx
&= \lo{1}.
\lbl{ccxy}
}
Here, let us recall that the six terms appearing on the left-hand side of \eref{ccxy} were referred to as the n$_1$ term, n$_2$ term, l$_1$ term, l$_2$ term, t$_1$ term, and t$_2$ term.
The following formulas, which give the explicit expressions of these six terms, will be useful in the subsequent case-by-case analysis:
\eq{
&\text{n$_1$ term}& 
( \op{xy}_1 \co h_{1:2} ) \co \op{x'y'}_1
&= - x(yx'-(x-1)y') \op{x+x'-2,y+y'-1}_1 \op{10}_2 \nx
& & \wha + y((y-1)x'-xy') \op{x+x'-1,y+y'-2}_1 \op{01}_2
+(\text{lower terms}),
\lbl{formula_n1} \\
&\text{n$_2$ term}& 
( h_{1:2} \co \op{xy}_2 ) \co \op{x'y'}_2
&= x(yx'-(x-1)y') \op{10}_1 \op{x+x'-2,y+y'-1}_2 \nx
& & \wha -y((y-1)x'-xy') \op{01}_1 \op{x+x'-1,y+y'-2}_2
+(\text{lower terms}),
\lbl{formula_n2} \\
&\text{l$_1$ term}& 
(\op{xy}_1 \co h_{1:2}) \co h_{1:2}
&= x(x-1) \op{x-2,y}_1 \op{20}_2 - 2xy \op{x-1,y-1}_1 \op{11}_2 + y(y-1) \op{x,y-2}_1 \op{02}_2 +\lo{1},
\lbl{formula_l1} \\
&\text{l$_2$ term}& 
(h_{1:2} \co \op{xy}_2) \co h_{1:2}
&= -x(x-1) \op{20}_1 \op{x-2,y}_2 + 2xy \op{11}_1 \op{x-1,y-1}_2 - y(y-1) \op{02}_1 \op{x,y-2}_2 +\lo{1},
\lbl{formula_l2} \\
&\text{t$_1$ term}& 
-t\op{xy}_1 \co \hb_{1:2} 
&= -tx \op{x-1,y}_1 \op{10}_2 - ty \op{x,y-1}_1 \op{01}_2,
\lbl{formula_t1} \\
&\text{t$_2$ term}& 
t \hb_{1:2} \co \op{xy}_2 
&= tx \op{10}_1 \op{x-1,y}_2 + ty \op{01}_1 \op{x,y-1}_2.
\lbl{formula_t2} \\
}

\subsection{Models of degree 4}
In this subsection, we clarify the condition for the existence of a 3-local charge in the case $\deg(g)=4$, that is,
\eq{
g = \sum_{x+y \le 4} c_{xy}\,\op{xy},
\qquad
\text{with at least one pair $(x,y)$ satisfying $x+y=4$ and $c_{xy}\neq 0$.}
}
In what follows, we use the fact that $r_{\bB}=0$ holds for all $\bB$ with $|\bB|=2$ on the left-hand side of \eref{ccxy} to derive constraints on the coefficients $\{c_{xy}\}$.

First, we have
\eq{
r_{\op{20}_1 \op{11}_2} &= 
\und{-6c_{31}}{\text{l}_1(\op{31})}
\und{-6c_{31}}{\text{l}_2(\op{31})}
= 0 \\
r_{\op{11}_1 \op{02}_2} &= 
\und{+6c_{13}}{\text{l}_1(\op{13})}
\und{+6c_{13}}{\text{l}_2(\op{13})} = 0
}
which immediately imply $c_{31}=c_{13}=0$.
With this in hand, we obtain
\eq{
r_{\op{10}_1 \op{41}_2} &=
\und{-24c_{40}c_{22}}{\text{n}_2(\op{40},\op{22})}
\und{+16c_{22}c_{40}}{\text{n}_2(\op{22},\op{40})}=0 \\
r_{\op{10}_1 \op{23}_2} &=
\und{-48c_{40}c_{04}}{\text{n}_2(\op{40},\op{04})}
\und{+4c_{22}^2}{\text{n}_2(\op{22},\op{22})}=0 \\
r_{\op{10}_1 \op{05}_2} &=
\und{-8c_{22}c_{04}}{\text{n}_2(\op{22},\op{04})}=0
}
Here, if we assume $c_{22} \neq 0$, then from these three equations above we obtain $c_{40} = 0$, $c_{40}c_{04} \neq 0$, and $c_{04} = 0$, which is a contradiction. 
Hence we find $c_{22} = 0$.
Combining this with the second equation above, we obtain $c_{40}c_{04} = 0$, so at least one of $c_{40}$ and $c_{04}$ must vanish.
Together with $c_{31}=c_{13}=0$, this implies that only one degree-4 coefficient can survive.

In what follows, we treat the case $c_{40} \neq 0$ and $c_{04} = 0$.
The alternative case $c_{04} \neq 0$ and $c_{40} = 0$ can be treated analogously by complex conjugation (equivalently, exchanging $x$ and $y$), thus we omit this case.
From
\eq{
r_{\op{10}_1\op{40}_2} &= 
\und{-12c_{40}c_{21}}{\text{n}_2(\op{40},\op{21})}
\und{+8c_{21}c_{40}}{\text{n}_2(\op{21},\op{40})} = 0, \\
r_{\op{10}_1\op{31}_2} &= 
\und{-24c_{40}c_{12}}{\text{n}_2(\op{40},\op{12})}
\und{+8c_{12}c_{40}}{\text{n}_2(\op{12},\op{40})} = 0, \\
r_{\op{10}_1\op{22}_2} &= 
\und{-36c_{40}c_{03}}{\text{n}_2(\op{40},\op{03})} = 0,
}
we find $c_{21}=c_{12}=c_{03}=0$.
Giving this,
\eq{
r_{\op{10}_1\op{21}_2} &= 
\und{-24c_{40}c_{02}}{\text{n}_2(\op{40},\op{02})}
= 0
}
implies $c_{02}=0$ as well.

At this point, the remaining coefficients are $c_{40}(\neq 0), c_{30}, c_{20}, c_{10}, c_{11}$, and $c_{01}$.
Finally, we derive a single relation among these coefficients:
\eq{
r_{\op{10}_1\op{30}_2} &=
\und{-12c_{40}c_{11}}{\text{n}_2(\op{40},\op{11})}
\und{+4c_{11}c_{40}}{\text{n}_2(\op{11},\op{40})}
\und{+4tc_{40}}{\text{t}_2(\op{40})} \nx
&= 4c_{40}(-2c_{11}+t) = 0\\
r_{\op{10}_1\op{20}_2} &=
\und{-12c_{40}c_{01}}{\text{n}_2(\op{40},\op{01})}
\und{-6c_{30}c_{11}}{\text{n}_2(\op{30},\op{11})}
\und{+3c_{11}c_{30}}{\text{n}_2(\op{11},\op{30})}
\und{+6c_{30}}{\text{l}_1(\op{30})}
\und{+3tc_{30}}{\text{t}_2(\op{30})} \nx
&= -12c_{40}c_{01} + 3c_{30}(-c_{11}+2+t) = 0.
}
Eliminating $t$, we obtain
\eq{
c_{30} (c_{11} + 2) = 4 c_{40} c_{01} .
}

Thus, $g$ of degree 4 that admits a 3-local charge is of the form
\eq{
g = c_{40} \op{40} + c_{30} \op{30} + c_{20} \op{20} + c_{10} \op{10} + c_{11} \op{11} + c_{01} \op{01}
\text{ with } c_{30} (c_{11} + 2) = 4 c_{40} c_{01} ,} or its conjugate counterpart:
\eq{
g = c_{04} \op{04} + c_{03} \op{03} + c_{02} \op{02} + c_{01} \op{01} + c_{11} \op{11} + c_{10} \op{10}
\text{ with } c_{03} (c_{11} + 2) = 4 c_{04} c_{10} .}

\subsection{Models of degree 3}
In this subsection, we clarify the condition for the existence of a 3-local charge in the case $\deg(g)=3$, that is,
\eq{
g = \sum_{x+y \le 3} c_{xy}\,\op{xy},
\qquad
\text{with at least one pair $(x,y)$ satisfying $x+y=3$ and $c_{xy}\neq 0$.}
}

First, we have
\eq{
r_{\op{10}_1\op{21}_2} &=
\und{-12c_{30}c_{12}}{\text{n}_2(\op{30},\op{12})}
\und{+2c_{21}^2}{\text{n}_2(\op{21},\op{21})}
\und{+6c_{12}c_{30}}{\text{n}_2(\op{12},\op{30})}=0,\\
r_{\op{10}_1\op{12}_2} &=
\und{-18c_{30}c_{03}}{\text{n}_2(\op{30},\op{03})}
\und{-2c_{21}c_{12}}{\text{n}_2(\op{21},\op{12})}
\und{+4c_{12}c_{21}}{\text{n}_2(\op{12},\op{21})}=0,\\
r_{\op{10}_1\op{03}_2} &=
\und{-6c_{21}c_{03}}{\text{n}_2(\op{21},\op{03})}
\und{+2c_{12}^2}{\text{n}_2(\op{12},\op{12})}=0.
}
From the three equations above, we obtain the following relations among the degree-3 coefficients:
\eq{
c_{21}^2 &= 3c_{30}c_{12} \lbl{3local-c1},\\
c_{21}c_{12} &= 9c_{30}c_{03} \lbl{3local-c2},\\
c_{12}^2 &= 3c_{21}c_{03}\lbl{3local-c3}.
}

Next, we have
\eq{
r_{\op{10}_1\op{20}_2} &=
\und{-6c_{30}c_{11}}{\text{n}_2(\op{30},\op{11})}
\und{+3c_{11}c_{30}}{\text{n}_2(\op{11},\op{30})}
\und{+4c_{21}c_{20}}{\text{n}_2(\op{21},\op{20})}
\und{-2c_{20}c_{21}}{\text{n}_2(\op{20},\op{21})}
\und{+6c_{30}}{\text{l}_1(\op{30})}
\und{+3tc_{30}}{\text{t}_2(\op{30})}\nx
&= + 2c_{21}c_{20} -3c_{30}c_{11} +(6+3t)c_{30} =0
\lbl{3local-r1},\\
r_{\op{10}_1\op{11}_2} &=
\und{-12c_{30}c_{02}}{\text{n}_2(\op{30},\op{02})}
\und{+2c_{11}c_{21}}{\text{n}_2(\op{11},\op{21})}
\und{+4c_{12}c_{20}}{\text{n}_2(\op{12},\op{20})}
\und{-4c_{20}c_{12}}{\text{n}_2(\op{20},\op{12})}
\und{-4c_{21}}{\text{l}_1(\op{21})}
\und{+2tc_{21}}{\text{t}_2(\op{21})}\nx
&= +2c_{21}c_{11} - 12c_{30}c_{02} + (-4+2t)c_{21} =0
\lbl{3local-r2},\\
r_{\op{10}_1\op{02}_2} &=
\und{-4c_{21}c_{02}}{\text{n}_2(\op{21},\op{02})}
\und{+2c_{12}c_{11}}{\text{n}_2(\op{12},\op{11})}
\und{+c_{11}c_{12}}{\text{n}_2(\op{11},\op{12})}
\und{-6c_{20}c_{03}}{\text{n}_2(\op{20},\op{03})}
\und{+2c_{12}}{\text{l}_1(\op{12})}
\und{+tc_{12}}{\text{t}_2(\op{12})}\nx
&=- 6c_{03}c_{20} + 3c_{12}c_{11} - 4c_{21}c_{02} + (2+t)c_{12} =0.
\lbl{3local-r3}
}
Substituting these into
\eq{
9c_{30}c_{03}r_{\op{10}_1\op{20}_2}
-c_{21}^2r_{\op{10}_1\op{11}_2}
+3c_{30}c_{21}r_{\op{10}_1\op{02}_2}=0
}
(since $r_{\bB} = 0$ holds for any $\bB$), we arrive at
\eq{
(-27c_{11}+54+27t)c_{30}^2c_{03}
+(-2c_{11}+4-2t)c_{21}^3
+(9c_{11}+6+3t)c_{30}c_{21}c_{12}=0.
}
Combined with \eref{3local-c1}, \eref{3local-c2}, and \eref{3local-c3}, this gives
\eq{
8c_{21}^3=0,
}
which implies $c_{21}=0$.

Substituting this back into \eref{3local-c1}, \eref{3local-c2}, and \eref{3local-c3}, we obtain $c_{12}=0$ and $c_{30}c_{03}=0$.
Thus, among the degree-3 coefficients, only one of $c_{30}$ or $c_{03}$ can be nonzero.
In the case $c_{30}\neq 0$ and $c_{03}=0$, \eref{3local-r2} further implies $c_{\op{02}}=0$.

Hence, $g$ of degree 3 admitting a 3-local charge must take the form
\eq{
g=c_{30}\op{30}+c_{20}\op{20}+c_{10}\op{10}+c_{11}\op{11}+c_{01}\op{01} ,
}
or its conjugate counterpart,
\eq{
g=c_{03}\op{03}+c_{02}\op{02}+c_{01}\op{01}+c_{11}\op{11}+c_{10}\op{10} .
}
\end{proof}

\section{Detailed argument on Type~N$^+$ (Proof of \lref{degree4-N+})}
\lbl{s:proof-degree4-N+}
We now complete the proof that the Hamiltonian \eqref{model4} with $c_{11} \neq 0$ and $c_{40} \neq 0$ belongs to Type~N$^+$, i.e., it possesses a 3-local charge but no $k$-local charges for $4 \leq k \leq N/2$.
In the main text, we have already shown the existence of a 3-local charge (\eref{explicit-4}) and the absence of 4-local charges (\ssref{model-4}).
In what follows, we make the heuristic discussion in \ssref{model-4} precise by establishing the absence of $k$-local charges for general $k \geq 5$.

\begin{proof}
First, the condition that $Q_k$ satisfies Step~3, i.e., $Q_k \co H = \lo{k-2}$ holds, is equivalent to the statement that $Q_k$ can be written (up to an overall constant factor) in the following form for some constant $t$:
\eq{
Q_k &= 
\sums h_{1:k} + 2(k-1)c_{11} \hb_{1:k-1} + th_{1:k-2} \nx
\whb +(1+(-1)^k) c_{20} (\bd_1 \bd_{k-1} + (2k-3)c_{11}\bd_1 \bd_{k-2}) \nx
\whb + \sum_{1\leq m \leq k-2} 48(-1)^{k-m+1}(k-2m-1)\sigma_{1,m,m,k-2} c_{40}c_{11} \bd_1 (\bd_m)^2 \bd_{k-2} \nx
\wha + c_{40} D_{k-1,0} 
+ c_{40} D_{k-2,1} 
+ \lo{k-3}  ,
\lbl{Qk-proof-N+}
}
where $\sigma_{i_1,i_2,i_3,i_4}$ and $D_{l,n}$ are defined in Eqs.~\eqref{sigma-def} and \eqref{D-def}, respectively.
The equivalence can be established by an argument similar to that used in the case $k=4$ in the main text.

Next, we show that this $Q_k$ is not a charge, regardless of the form of its $k-3$-local and lower-local parts.
The coefficient of $\bd_1 \bd_m \bd_{m+1} \bd_{k-2}$ in the expansion of $Q_k \co H$ is calculated as
\eq{
r_{\bd_1\bd_m\bd_{m+1}\bd_{k-2}} =
\begin{cases}
\displaystyle
24(-1)^k(k-9)c_{40}c_{11} -\frac{q_{\bd_1\bd_m\bd_{m+1}\bd_{k-3}}}{2\sigma_{1,m,m+1,k-3}}& (m=1) \\
\displaystyle
192(-1)^{k-m}c_{40}c_{11} 
-\frac{q_{\bd_1\bd_{m-1}\bd_m\bd_{k-3}}}{\sigma_{1,m-1,m,k-3}}
-\frac{q_{\bd_1\bd_m\bd_{m+1}\bd_{k-3}}}{\sigma_{1,m,m+1,k-3}}
& (2 \leq m \leq k-4) \\
\displaystyle
24(k-9)c_{40}c_{11} 
-\frac{q_{\bd_1\bd_{m-1}\bd_m\bd_{k-3}}}{2\sigma_{1,m-1,m,k-3}}
& (m=k-3)
\end{cases}
}
In order to eliminate all the undetermined coefficients of $k-3$-local operators (namely, $q_{\bd_1 \bd_m \bd_{m+1} \bd_{k-3}}$ for $1 \leq m \leq k-4$), we evaluate the following weighted sum:
\eq{
\sum_{1\leq m \leq k-3} \frac{(-1)^{k-3-m}r_{\bd_1\bd_m\bd_{m+1}\bd_{k-2}}}{2 \sigma_{1,m,m+1,k-2}} = -48(k-1) c_{40}c_{11} \neq 0
}
Since the commutation condition requires $r_{\bB}=0$ for all $\bB$, we conclude that the $Q_k$ given in \eref{Qk-proof-N+} is not a charge.
Therefore, the Hamiltonian~\eqref{model4} (and hence the Hamiltonian~\eqref{model4-raw}) with $c_{11} \neq 0$ and $c_{40} \neq 0$ admits no $k$-local charges for $4 \leq k \leq N/2$.
\end{proof}

\section{Algorithmic proof of integrability (Proof of \lref{degree3-Ca})}
\lbl{s:proof-degree4-C}

We now complete the proof that the Hamiltonian \eqref{model3-Ca} is completely integrable (Type~C), i.e, it possesses $k$-local charges for all $3 \leq k \leq N/2$.
In the main text, we have already established the existence of 3-local and 4-local charges (\eref{explicit-3} and \lref{degree3-4local}).
In what follows, going beyond the simple setup in \ssref{model-3}, we revisit the $k\geq 5$ construction in a general setting and give a complete proof.

\begin{proof}
Unlike in the proof for the degree-4 Type~C model (\lref{degree4-C}) and for the other degree-3 Type~C model (\lref{degree3-Cb}), it is difficult to write down the local charges of the Hamiltonian \eqref{model3-Ca} in closed form.
Instead, for each $k$, we present an algorithm to construct a $k$-local charge and prove its validity.

We shall refer to the translation sum of an operator containing $x$ occurrences of $\bd$ and $y$ occurrences of $b$ as an $\bx{xy}$ quantity.
For example, $\sums \bd_1 \bd_2 \bd_3$ and $\sums (\bd_1)^3$ are $\bx{30}$ quantities; $\sums h_{1:k}$, $\sums \hb_{1:k}$, and $\sums n_1$ are $\bx{11}$ quantities.
A $\bx{10}$ quantity is simply $\sums \bd_1$ up to a constant multiple.

We construct the local charge $Q_k$ of the Hamiltonian \eqref{model3-Ca} as a linear combination of $\bx{30}$, $\bx{11}$, and $\bx{10}$ quantities.

Here, we note the following fact:
\eq{
\left(\bx{30}~\text{quantities}\right) \co H &= \left(\bx{30}~\text{quantities}\right)
\lbl{bx-quantity-1} \\
\left(\bx{11}~\text{quantities}\right) \co H &= \left(\bx{30}~\text{quantities}\right) + \left(\bx{10}~\text{quantities}\right)
\lbl{bx-quantity-2} \\
\left(\bx{10}~\text{quantities}\right) \co H &= \left(\bx{10}~\text{quantities}\right)
\lbl{bx-quantity-3}
}
This follows from two simple observations:
(i) The Hamiltonian \eqref{model3-Ca} is a sum of $\bx{30}$, $\bx{11}$, and $\bx{10}$ quantities.
(ii) The commutator of $\bx{xy}$ quantities and $\bx{x'y'}$ quantities yields $\bx{x+x'-z,y+y'-z}$ quantities for positive integers $z$.
Owing to Eqs.~\eqref{bx-quantity-1}--\eqref{bx-quantity-3}, the only nontrivial constraints on $Q_k$ come from the requirement that the coefficients $r_{\bB}$ of the $\bx{30}$ and $\bx{10}$ quantities vanish in $Q_k \co H = \sum_{\bB} r_{\bB} \bB$.

In what follows, we determine the coefficients $q_{\bA}$ of $Q_k = \sum_{\bA} q_{\bA} \bA$, starting from those with larger $|\bA|$, where $|\bA|$ denotes the length of $\bA$.
From the analysis up to Step~2 (i.e., \lref{step2}), we already know that $Q_k$ can be written in the following form:
\eq{
Q_k
&= \und{\sums \left( h_{1:k}
+ \sum_{1 \leq m \leq k-1} 6(-1)^{k-1-m} \sigma_{1,m,k-1} c_{30} \bd_1 \bd_m \bd_{k-1} \right)}{=:Q_k'}
+ t \hb_{1:k-1} + \lo{k-2} .
\lbl{QkQ'k}
}
with the symmetry factor $\sigma_{i_1,i_2,i_3}$ for integers $i_1 \leq i_2 \leq i_3$ given by
\eq{
\sigma_{i_1,i_2,i_3} := 
\begin{cases}
1 & (i_1 < i_2 < i_3) \\
\tfrac{1}{2!} & (i_1 = i_2 < i_3) \\
\tfrac{1}{2!} & (i_1 < i_2 = i_3) \\
\tfrac{1}{3!} & (i_1 = i_2 = i_3)
\end{cases} .
}
We first compute the coefficients of $k-2$-local parts of $Q_k$ together with the constant $t$ in \eref{QkQ'k}.
For this purpose, we use the condition that the coefficients of $k-1$-local parts of $Q_k \co H$ (namely, $r_{\bB}$ with $|\bB|=k-1$) must vanish.
Among the operators $\bB$ of length $k-1$, it suffices to consider those corresponding to $\bx{30}$ quantities, which are calculated as
\eq{
r_{\bd_1 \bd_m \bd_{k-1}} 
&= \begin{cases}
3c_{30} q_{b_1 \bd_{k-1}} - q_{(\bd_1)^2 \bd_{k-2}}
+ \text{const} & (m=1) \\
\displaystyle 
- \frac{q_{\bd_1\bd_{m-1}\bd_{k-2}}}{\sigma_{1,m-1,k-2}}
- \frac{q_{\bd_1\bd_{m}\bd_{k-2}}}{\sigma_{1,m,k-2}}
+ \text{const} & (2 \leq m \leq k-2) \\
3c_{30} q_{\bd_1 b_{k-1}} - q_{\bd_1 (\bd_{k-2})^2}
+ \text{const} & (m=k-1)
\end{cases} .
\lbl{rbbb}
}
Here, the constant term arises from the expansion of $Q_k' \co H$ ($Q_k'$ is defined in \eref{QkQ'k}).
The equations $r_{\bd_1 \bd_m \bd_{k-1}}=0$ ($1 \leq m \leq k-1$) involve $k$ unknown coefficients,
namely $q_{\bd_1 \bd_m \bd_{k-2}}$ ($1 \leq m \leq k-2$), $q_{\bd_1 b_{k-1}}$, and $q_{b_1 \bd_{k-1}}$.
Since there are $k-1$ equations and the coefficient matrix has full rank, the system admits a solution for any given constant term.
Note also that a solution satisfying the parity condition under spatial inversion always exists, since the constant term in \eref{rbbb} is compatible with this condition.
The explicit construction of such a solution is straightforward.

Solving the equations $r_{\bd_1 \bd_m \bd_{k-1}}=0$ in \eref{rbbb} yields
\eq{
q_{\bd_1 \bd_m \bd_{k-2}} &= (9c_{11}-6) c_{30} (-1)^{k-m+1} (k-2m-1) \sigma_{1,m,k-2}, \\
t \left(= q_{\bd_1 b_{k-1}} = (-1)^k q_{b_1 \bd_{k-1}} \right) &= (k-1) (\tfrac{3}{2}c_{11}-1).
}
Substituting these coefficients, the general solution at Step~3, i.e., a solution $Q_k$ satisfying $Q_k \co H = \lo{k-2}$, can be expressed as
\eq{
Q_k &= \und{Q_k' + 
\sums t \hb_{1:k-1}
+\sum_{1 \leq m \leq k-2} q_{\bd_1 \bd_m \bd_{k-2}} \bd_1 \bd_m \bd_{k-2}}
{=:Q_k''}
+ t' h_{1:k-2} + \lo{k-3}.
}

Subsequently, by adjusting $t'$ together with the coefficients of the quantities $\bA$ with $|\bA|=k-3$ so as to cancel the coefficients of $\bB$ with $|\bB|=k-2$, we obtain the solution at Step~4.

By continuing this algorithmic procedure step by step, we eventually obtain the solution up to Step~$k$, namely an operator $\tilde{Q}_k$ satisfying $\tilde{Q}_k \co H = \lo{1}$.
In the present setting, the quantities $\bB$ are restricted to the $\bx{30}$ and $\bx{11}$ sectors (see Eqs.~\eqref{bx-quantity-1}--\eqref{bx-quantity-3}), and therefore $\tilde{Q}_k \co H$ must be a linear combination of the unique 1-local $\bx{30}$ quantity $\sums (\bd_1)^3$ and the unique 1-local $\bx{11}$ quantity $\sums n_1$:
\eq{
\tilde{Q}_k \co H \in \operatorname{span} \left\{\sums (\bd_1)^3,\ \sums n_1\right\}.
}
As the final step, a genuine $k$-local charge $Q_k$ can be constructed by adding an appropriate 1-local operator to $\tilde{Q}_k$.
Indeed, since
\eq{
\sums n_1 \co H &= \sums 3c_{30} (\bd_1)^3 + c_{10} \bd_1, \\
\sums \bd_1 \co H &= \sums (-c_{11}-2) \bd_1
}
hold, the contribution from the commutator $\tilde{Q}_k \co H$ can always be cancelled out, provided $c_{11}\neq -2$ (with $c_{30}\neq 0$ as assumed in this section).

In the exceptional case $c_{11}=-2$, the adjustment by $\sums \bd_1$ is no longer available, since $\sums \bd_1 \co H$ is zero.
At this point, we can simply construct the charge as a linear combination of $\tilde{Q}_k$ and $\sums n_1$.
The reason is as follows.
Using Eqs.~\eqref{r-bx-op1} and \eqref{r-bx-op4} (as displayed in \ssref{proof-degree3-C-}; we only use established relations, so no circularity is involved) at $c_{01}=0$ and $c_{11}=-2$, we obtain
\eq{
r_{\bx{30}} &= 3 c_{30} q_{\bx{11}}, \\
r_{\bx{10}} &= c_{10} q_{\bx{11}}.
}
Therefore, imposing $r_{\bx{30}}=0$ forces $q_{\bx{11}}=0$ (since $c_{30}\neq 0$), and then $r_{\bx{10}}=0$ follows automatically.
Consequently, once the coefficient of $\sums (\bd_1)^3$ in $Q_k \co H$ is set to zero, the coefficient of $\sums \bd_1$ in $Q_k \co H$ vanishes as well.

Thus, we have established the existence of all $k$-local charges $Q_k$ of the Hamiltonian \eqref{model3-Ca}, which are eigenoperators of spatial inversion by construction.
\end{proof}

\section{Detailed argument on Type~C$^-$ (Proof of \lref{degree3-C-})}
\lbl{s:proof-degree3-C-}
We now complete the proof that the Hamiltonian \eqref{model3} with $c_{01} \neq 0$, $c_{11} \neq 1$, and $c_{30} \neq 0$ belongs to Type~C$^-$, namely, that it possesses $k$-local charges for all $5 \leq k \leq N/2$ and $k=3$, but lacks a 4-local charge.
In the main text, we have already shown the existence of a 3-local charge (\eref{explicit-3}) and the absence of 4-local charges (\lref{degree3-4local}).
In what follows, we make the heuristic discussion in \ssref{model-4} precise by establishing the absence of $k$-local charges for general $k \geq 5$.

\begin{proof}
By an argument similar to the proof of \lref{degree3-Cb}, we present an algorithm for constructing $k$-local charges $Q_k$ for $k \geq 5$.
We construct $Q_k$ as a linear combination of $\bx{30}$, $\bx{11}$, $\bx{20}$, $\bx{10}$, and $\bx{01}$ quantities.

The algorithm for constructing $Q_k$ proceeds in essentially the same manner as in \lref{degree3-Cb}, yielding a solution up to Step $k$, namely a $\tilde{Q}_k$ that satisfies $\tilde{Q}_k \co H = \lo{1}$.
The only difference between the present algorithm and that in \lref{degree3-Cb} lies in the set of conditions imposed at each step.
In \lref{degree3-Cb}, to solve the system of equations it suffices to use only the conditions $r_{\bB}=0$ associated with certain $\bx{30}$ quantities $\bB$, which consist of the coefficients $q_{\bA}$ of the $\bx{30}$ and $\bx{11}$ quantities $\bA$.
By contrast, in the present lemma, the system consists of the conditions $r_{\bB}=0$ not only for the $\bx{30}$ quantities but also for the $\bx{20}$ quantities of $\bB$, with the variables being the coefficients $q_{\bA}$ of the $\bx{30}$, $\bx{11}$, and $\bx{20}$ quantities $\bA$.
In spite of this difference, the underlying reason why this algorithm works is the same: the system of linear equations has more variables than equations, and thus admits a solution.

We now introduce a shorthand notation.
Let $q_{\bx{xy}}$ denote the sum of the coefficients of all $\bx{xy}$ quantities appearing in the expansion of $Q$, and let $r_{\bx{xy}}$ denote the corresponding sum of coefficients in the expansion of $Q \co H$. Namely, we define
\eq{
q_{\bx{xy}} &:= \sum_{\bA:\,\bA\text{ is a }\bx{xy}\text{ quantity}} q_{\bA},\nx
r_{\bx{xy}} &:= \sum_{\bB:\,\bB\text{ is a }\bx{xy}\text{ quantity}} r_{\bB}.
}
With this notation, the following identities hold:
\eq{
r_{\bx{30}} &= 3 c_{30} q_{\bx{11}} - 3(c_{11}+2) q_{\bx{30}} 
\lbl{r-bx-op1},\\
r_{\bx{20}} &= -3 c_{01} q_{\bx{30}} -2(c_{11}+2) q_{\bx{20}} + 3 c_{30} q_{\bx{01}} 
\lbl{r-bx-op2},\\
r_{\bx{00}} &= -c_{01} q_{\bx{10}} + c_{10} q_{\bx{01}} 
\lbl{r-bx-op3},\\
r_{\bx{10}} &= c_{10} q_{\bx{11}} - 2c_{01} q_{\bx{20}} - (c_{11}+2) q_{\bx{10}} 
\lbl{r-bx-op4},\\
r_{\bx{01}} &= -c_{01} q_{\bx{11}} + (c_{11}+2) q_{\bx{01}}
\lbl{r-bx-op5}.
}
These identities hold because the Hamiltonian of the model is given by \eref{model3}, and because a commutator of $\bx{xy}$ quantities with $\bx{x'y'}$ quantities consists of $\bx{x+x'-z,y+y'-z}$ quantities for positive integers $z$, with expansion coefficients identical to those specified in \eref{comxy-formula}.

At the end of Step~$k$, all coefficients $r_{\bB}$ with $|\bB|\ge 2$ vanish, and any obstruction to $Q_k \co H=0$ is 1-local.
Thus, for the Step-$k$ solution $\tilde{Q}_k$, we have
\eq{
\tilde{Q}_k \co H \in \operatorname{span} \left\{
\und{\sums \op{30}_1}{\text{the unique}\\\bx{30}\text{ quantity}},
\und{\sums \op{20}_1}{\text{the unique}\\\bx{20}\text{ quantity}},
\und{\sums \op{00}_1}{\text{the unique}\\\bx{00}\text{ quantity}},
\und{\sums \op{10}_1}{\text{the unique}\\\bx{10}\text{ quantity}},
\und{\sums \op{01}_1}{\text{the unique}\\\bx{01}\text{ quantity}}
\right\}.
}
Therefore, $Q_k \co H=0$ is equivalent to the requirement that Eqs.~\eqref{r-bx-op1}--\eqref{r-bx-op5} vanish.

Here, by adding to $\tilde{Q}_k$ (the solution up to Step~$k$, as defined at the beginning of the proof) a suitable linear combination $\Delta$ of $\sums n_1$, $\sums b_1$, and $\sums \bd_1$, we can eliminate the coefficients $r_{\bx{30}}, r_{\bx{20}},$ and $r_{\bx{00}}$.
The reason is that $\sums n_1 \co H$, $\sums b_1 \co H$, and $\sums \bd_1 \co H$ are all 1-local, so they do not affect any steps prior to Step~$k$ and can therefore be used to freely adjust $q_{\bx{11}}, q_{\bx{01}}$, and $q_{\bx{10}}$, respectively.

In particular, by setting
\eq{
q_{\bx{11}} &= \frac{c_{11}+2}{c_{30}} q_{\bx{30}}, \\
q_{\bx{01}} &= \frac{c_{01}}{c_{30}} q_{\bx{30}} + \frac{2(c_{11}+2)}{3c_{30}} q_{\bx{20}}, \\
q_{\bx{10}} &= \frac{c_{10}}{c_{30}} q_{\bx{30}} + \frac{2(c_{11}+2)c_{10}}{3c_{30}c_{01}} q_{\bx{20}},
}
the coefficients $r_{\bx{30}}, r_{\bx{20}},$ and $r_{\bx{00}}$ all vanish.

For the adjusted operator $\tilde{Q}_k' := \tilde{Q}_k + \Delta$, its commutator with the Hamiltonian reads
\eq{
\tilde{Q}_k' \co H
= q_{\bx{20}} \sums -2 \left( c_{01} + \frac{(c_{11}+2)^2c_{10}} {3c_{30}c_{01}} \right) \bd_1  + \frac{2(c_{11}+2)^2}{3c_{30}} b_1.
}
From this expression, we see that $Q \co H$ reduces to a 1-local operator independent of $k$.
As shown in \lref{degree3-4local}, if $c_{01} \neq 0$ and $c_{11} \neq 1$ hold, then $Q'_4 \co H$ never vanishes.
Therefore, for any $k \geq 5$, by adding an appropriate multiple of $\tilde{Q}_4'$ to $\tilde{Q}_k'$, we obtain a $k$-local charge.

In conclusion, even in the case where no 4-local charge exists, the Hamiltonian~\eqref{model3} possesses $k$-local charges for every integer $k \geq 5$.
Indeed, the very nonexistence at $k=4$ is what enables the construction above.
\end{proof}

\section{Derivation of formulas used in \eref{cal-absence34}}
\lbl{s:useful}
In this section, we provide proofs of the formulas used in \eref{cal-absence34}.

For the proofs of these formulas, we make use of what we call the {\it Hokkyo identity}
\cite{hokkyo2025rigorous}:
\eq{
\und{(\alpha \co \beta \co \gamma)}{=(\alpha \co \beta) \co \gamma = \alpha \co (\beta \co \gamma)} \co \beta
&= \half ((\alpha \co \beta) \co \gamma) \co \beta + \half (\alpha \co (\beta \co \gamma)) \co \beta \nx
&= \half ((\alpha \co \beta) \co \beta) \co \gamma + \half \cancel{(\alpha \co \beta) \co (\gamma \co \beta)}
+ \half \cancel{(\alpha \co \beta) \co (\beta \co \gamma)} + \half \alpha \co ((\beta \co \gamma) \co \beta) \nx
&= \half ((\alpha \co \beta) \co \beta) \co \gamma + \half \alpha \co ((\beta \co \gamma) \co \beta).
}
Here, the second equality uses the Jacobi identity
$(\alpha \co \beta) \co \gamma = (\alpha \co \gamma) \co \beta + \alpha \co (\beta \co \gamma)$.
In what follows, the use of the Hokkyo identity will be indicated by $\akkun$.

With this notation, we obtain the formulas used in \eref{cal-absence34} as
\eq{
( g_1 \co h_{1:3}) \co h_{1:2}
&= (g_1 \co h_{1:2} \co h_{2:3}) \co h_{1:2} \nx
&\akkun \half ((g_1 \co h_{1:2}) \co h_{1:2}) \co h_{2:3} + \half g_1 \co 
\und{((h_{1:2} \co h_{2:3}) \co h_{1:2})}{=-h_{2:3}} \nx
&= \half ((g_1 \co h_{1:2}) \co h_{1:2}) \co h_{2:3} \lbl{useful-1},\\
(h_{1:3} \co g_3) \co h_{2:3}
&\akkun \half h_{1:2} \co ((h_{2:3} \co g_3) \co h_{2:3} \lbl{useful-2},\\
(h_{1:2} \co g_2 \co h_{2:3}) \co g_2 
&\akkun \half ((h_{1:2} \co g_2) \co g_2) \co h_{2:3} + \half h_{1:2} \co ((g_2 \co h_{2:3}) \co g_2) \lbl{useful-3},\\
(h_{1:2} \co g_2 \co h_{2:3}) \co h_{1:2} 
&= ((h_{1:2} \co g_2) \co h_{2:3}) \co h_{1:2} \nx
&\akkun ((h_{1:2} \co g_2) \co h_{1:2}) \co h_{2:3}
+ \und{(h_{1:2} \co g_2) \co (h_{2:3} \co h_{1:2})}
{=g_2 \co h_{2:3}} \nx
&= ((h_{1:2} \co g_2) \co h_{1:2}) \co h_{2:3} + \lo{2} \lbl{useful-4},\\
(h_{1:2} \co g_2 \co h_{2:3}) \co h_{2:3}
&\akkun h_{1:2} \co ((g_2 \co h_{2:3}) \co h_{2:3}) + \lo{2} \lbl{useful-5}.
}
Here, \eref{useful-2} is the reflection of \eref{useful-1}, and \eref{useful-5} is the reflection of \eref{useful-4}.

\section{Detailed argument on Type~N (Proof of \lref{absence3k})}
\lbl{s:proof-GMC1}
We now present the detailed proof of \lref{absence3k}.
As stated in the proof in the main text, the key point is the Step~3 descent: from a $k$-local quantity $Q_k$ satisfying Step~3, we construct a $(k-1)$-local quantity $Q_{k-1}$ that satisfies Step~3.
The main text gave only a rough argument for this descent. In this appendix, we provide the full derivation and thereby complete the proof for general $k$.

Based on the analyses for general $k$ in Step~1 and Step~2, the operator $Q_k$ can be expressed as follows:
\eq{
Q_k
&= \sums h_{1:k}
+ \sum_{1 \leq m \leq k-1} h_{1:m} \co g_m \co h_{m:k-1}
+ \tk \hb_{1:k-1} + \sk_{1:k-2}
+ \lo{k-3}
}
(a restatement of \eref{prep-absence3k}).

In the analysis of Step~3, we focus on the $k-1$-local part of $Q_k \co H$, which can be written as
\eq{
Q_k \co H
&= \sums (g_1 \co h_{1:k-1}) \co (g_1 + h_{1:2}) \nx
\whb + \sum_{2 \leq m \leq k-2} (h_{1:m} \co g_m \co h_{m:k-1}) \co (g_m + h_{m-1:m} + h_{m:m+1}) \nx
\whb + (h_{1:k-1} \co g_{k-1}) \co (g_{k-1} + h_{k-1:k-2}) \nx
\whb + \tk (- g_1 \co \hb_{1:k-1} + h_{1:k-1} \co g_{k-1}) \nx
\whb + \sk_{1:k-2} \co h_{k-1:k-2} - h_{1:2} \co \sk_{2:k-1} +\lo{k-2} \nx
&= \sums \dk_{1:k-1} + \ek_{1:k-1} + \fk_{1:k-1} \nx
\whb +\tk (- g_1 \co \hb_{1:k-1} + h_{1:k-1} \co g_{k-1}) + \sk_{1:k-2} \co h_{k-1:k-2} - h_{1:2} \co \sk_{2:k-1} + \lo{k-2} ,
\lbl{GMC1-QH}
}
where we have introduced
\eq{
\dk_{1:k-1} &:= 
 \sum_{2 \leq m \leq k-1} (h_{1:m} \co g_m \co h_{m:k-1}) \co g_1
+ \sum_{1 \leq m \leq k-2} (h_{1:m} \co g_m \co h_{m:k-1}) \co g_{k-1} \lbl{dk-def},\\
\ek_{1:k-1} &:=
\sum_{1 \leq m \leq k-1} (h_{1:m} \co g_m \co h_{m:k-1}) \co g_m \lbl{ek-def},\\
\fk_{1:k-1} &:=
\sum_{2 \leq m \leq k-1} (h_{1:m} \co g_m \co h_{m:k-1}) \co h_{m-1:m} 
+ \sum_{1 \leq m \leq k-2} (h_{1:m} \co g_m \co h_{m:k-1}) \co h_{m:m+1} \lbl{fk-def}.
}

We now ``split'' the right-hand side of \eref{GMC1-QH} into the form of \eref{chi-psi-h}.
The splitting is not unique, and some care is required to make the subsequent arguments work properly.

First, we split $\dk_{1:k-1}$ in \eref{dk-def}.
Defining $\nabla_{i_1,i_2,i_3,i_4} := h_{i_1:i_2} \co g_{i_2} \co h_{i_2:i_3} \co g_{i_3} \co h_{i_3:i_4}$, we obtain
\eq{
\dk_{1:k-1}
&= - \sum_{1 < m' \leq k-1} \nabla_{1,1,m',k-1}
+ \sum_{1 \leq m < k-1} \nabla_{1,m,k-1,k-1} \nx
&= - \sum_{1 = m < m' \leq k-1} \nabla_{1,m,m',k-1}
+ \sum_{1 \leq m < m' = k-1} \nabla_{1,m,m',k-1} \nx
&= - \cancel{\sum_{1 \leq m < m' \leq k-1} \nabla_{1,m,m',k-1}}
+ \sum_{1 < m < m' \leq k-1} \nabla_{1,m,m',k-1}
+ \cancel{\sum_{1 \leq m < m' \leq k-1} \nabla_{1,m,m',k-1}}
- \sum_{1 \leq m < m' < k-1} \nabla_{1,m,m',k-1} \nx
&= h_{1:2} \co \sum_{2 \leq m < m' \leq k-1} \nabla_{2,m,m',k-1}
- \sum_{1 \leq m < m' \leq k-2} \nabla_{1,m,m',k-2} \co h_{k-2:k-1} \nx
&= \addl{\ddk_{1:k-2}} \co h_{k-2:k-1} + h_{1:2} \co \addr{-\ddk_{2:k-1}} ,
}
where we define
\eq{
\ddk_{1:k-2} := - \sum_{1 \leq m < m' \leq k-2} \nabla_{1,m,m',k-2} .
}

Next, we split $\ek_{1:k-1}$ in \eref{dk-def}.
To carry out this splitting, we first prepare some formulas required for the decomposition:
\eq{
(h_{1:2} \co g_2 \co h_{2:k-1}) \co g_2
&\akkun \half h_{1:2} \co ( ( g_2 \co h_{2:k-1} ) \co g_2 )
+ \half ( ( h_{1:2} \co g_2 ) \co g_2 ) \co h_{2:k-1} \nx
&= \half h_{1:2} \co ( ( g_2 \co h_{2:k-1} ) \co g_2 ) 
+ \half (h_{1:k-1} \co g_{k-1}) \co g_{k-1}
\nx
\wha + \half \sum_{2 \leq m \leq k-2} ( ( h_{1:m} \co g_m ) \co g_m ) \co h_{m:k-1}
- \half \sum_{3 \leq m \leq k-1} ( ( h_{1:m} \co g_m ) \co g_m ) \co h_{m:k-1} \nx
&= \half h_{1:2} \co ( (g_2 \co h_{2:k-1}) \co g_2 + (h_{2:k-1} \co g_{k-1}) \co g_{k-1}) + \erk_{1:k-2} \co h_{k-2:k-1} - h_{1:2} \co \erk_{2:k-1} 
\lbl{erk-elk-1},
\\
(h_{1:k-2} \co g_{k-2} \co h_{k-2:k-1}) \co g_{k-2}
&\akkun \half (( g_1 \co h_{1:k-2} ) \co g_1 + (h_{1:k-2} \co g_{k-2}) \co g_{k-2} ) \co h_{k-2:k-1}
+ \elk_{1:k-2} \co h_{k-2:k-1} - h_{1:2} \co \elk_{2:k-1}
\lbl{erk-elk-2}},
where we define
\eq{
\erk_{1:k-2} &:= \half \sum_{2 \leq m \leq k-2} (( h_{1:m} \co g_m ) \co g_m ) \co h_{m:k-2}, \\
\elk_{1:k-2} &:= \tfrac{-1}{2} \sum_{1 \leq m \leq k-3} h_{1:m} \co ( ( g_m \co h_{m:k-2} ) \co g_m) .
}
Here, \eref{erk-elk-2} is the reflection of \eref{erk-elk-1}.
Applying \eref{erk-elk-1} at $m = 2$ and \eref{erk-elk-2} at $m = k - 2$,
the term $\ek_{1:k-1}$ in \eref{ek-def} can be split, for each $m$, as follows:
\eq{
& m=1 & (g_1 \co h_{1:k-1}) \co g_1
&= \addl{((g_1 \co h_{1:k-2}) \co g_1)} \co h_{k-2:k-1},\\
& m=2 & (h_{1:2} \co g_2 \co h_{2:k-1}) \co g_2
&= (\tfrac{k-3}{k-2} + \tfrac{1}{k-2}) (h_{1:2} \co g_2 \co h_{2:k-1}) \nx
&&&= \addl{\tfrac{k-3}{k-2} ((h_{1:2} \co g_2 \co h_{2:k-2}) \co g_2)} \co h_{k-2:k-1} \nx
&&\wha + h_{1:2} \co \addr{\tfrac{1}{2(k-2)} ((g_2 \co h_{2:k-1}) \co g_2 + (h_{2:k-1} \co g_{k-1}) \co g_{k-1})} \nx
&&\wha + \addl{\tfrac{1}{k-2} \erk_{1:k-2} \co h_{k-2:k-1}}
+ h_{1:2} \co \addr{-\tfrac{1}{k-2} \erk_{2:k-1}},\\
& 3 \leq m \leq k-3
& (h_{1:m} \co g_m \co h_{m:k-1}) \co g_m
&= (\tfrac{k-m-1}{k-2} + \tfrac{m-1}{k-2}) (h_{1:m} \co g_m \co h_{m:k-1}) \co g_m \nx
&&&= \addl{\tfrac{k-m-1}{k-2} ((h_{1:m} \co g_m \co h_{m:k-2}) \co g_m)} \co h_{k-2:k-1} \nx
&&\wha +h_{1:2} \co \addr{\tfrac{m-1}{k-2} ((h_{2:m} \co g_m \co h_{m:k-2}) \co g_m)},\\
& m=k-2 & (h_{1:k-2} \co g_{k-2} \co h_{k-2:k-1}) \co g_{k-2}
&= (\tfrac{1}{k-2} + \tfrac{k-3}{k-2}) (h_{1:k-2} \co g_{k-2} \co h_{k-2:k-1}) \co g_{k-2} \nx
&&&= \addl{\tfrac{1}{2(k-2)} ((g_1 \co h_{1:k-2}) \co g_1 + (h_{1:k-2} \co g_{k-2}) \co g_{k-2})} \co h_{k-2:k-1} \nx
&&\wha + h_{1:2} \co \addr{\tfrac{k-3}{k-2} ((h_{2:k-2} \co g_{k-2} \co h_{k-2:k-1}) \co g_{k-2})},\\
& m=k-1 & (h_{1:k-1} \co g_{k-1}) \co g_{k-1}
&= h_{1:2} \co \addr{((h_{2:k-1} \co g_{k-1}) \co g_{k-1})}.
}

Finally, we split $\fk_{1:k-1}$ in \eref{fk-def}.
The first term of the right-hand side of \eref{fk-def} can be split, for each $m$, as
\eq{
& m=2  &
(h_{1:2} \co g_2 \co h_{2:k-1}) \co h_{1:2}
&= \addl{((h_{1:2} \co g_2 \co h_{2:k-2}) \co h_{1:2})} \co h_{k-2:k-1},\\
& m=3  &
(h_{1:3} \co g_3 \co h_{3:k-1}) \co h_{2:3}
&= (\tfrac{k-4}{k-2} + \tfrac{2}{k-2}) (h_{1:3} \co g_3 \co h_{3:k-1}) \co h_{2:3} \nx
&&&\akkun \addl{\tfrac{k-4}{k-2} ((h_{1:3} \co g_3 \co h_{3:k-2}) \co h_{2:3})} \co h_{k-2:k-1} \nx
&&\wha + h_{1:2} \co \addr{\tfrac{1}{k-2} ((h_{2:3} \co g_3 \co h_{3:k-1}) \co h_{2:3})},\\
& 4 \leq m \leq k-3  &
(h_{1:m} \co g_m \co h_{m:k-1}) \co h_{m-1:m}
&= (\tfrac{k-m-1}{k-2} + \tfrac{m-1}{k-2}) (h_{1:m} \co g_m \co h_{m:k-1}) \co h_{m-1:m} \nx
&&&= \addl{\tfrac{k-m-1}{k-2} ((h_{1:m} \co g_m \co h_{m:k-2}) \co h_{m-1:m})} \co h_{k-2:k-1} \nx
&&\wha + h_{1:2} \co \addr{\tfrac{m-1}{k-2} ((h_{2:m} \co g_m \co h_{m:k-2}) \co h_{m-1:m})},\\
& m=k-2  & (h_{1:k-2} \co g_{k-2} \co h_{k-2:k-1}) \co h_{k-3:k-2}
&= (\tfrac{1}{k-2} + \tfrac{k-3}{k-2}) (h_{1:k-2} \co g_{k-2} \co h_{k-2:k-1}) \co h_{k-3:k-2} \nx
&&&\akkun \addl{\tfrac{1}{k-2} ((h_{1:k-2} \co g_{k-2}) \co h_{k-3:k-2})} \co h_{k-2:k-1} \nx
&&\wha + h_{1:2} \co \addr{\tfrac{k-3}{k-2} ((h_{2:k-2} \co g_{k-2} \co h_{k-2:k-1}) \co h_{k-3:k-2})},\\
& m=k-1  & (h_{1:k-1} \co g_{k-1}) \co h_{k-2:k-1} &= h_{1:2} \co \addr{((h_{2:k-1} \co g_{k-1}) \co h_{k-2:k-1})} ,}
and the second term can be split as
\eq{
& m=1  &
(g_1 \co h_{1:k-1}) \co h_{1:2}
&= \addl{((g_1 \co h_{1:k-2}) \co h_{1:2})} \co h_{k-2:k-1},\\
& m=2  &
(h_{1:2} \co g_2 \co h_{2:k-1}) \co h_{2:3}
&= (\tfrac{k-3}{k-2} + \tfrac{1}{k-2}) (h_{1:2} \co g_2 \co h_{2:k-1}) \co h_{2:3} \nx
&&&\akkun \addl{\tfrac{k-3}{k-2} ((h_{1:2} \co g_2 \co h_{2:k-2}) \co h_{2:3})} \co h_{k-2:k-1} \nx
&&\wha + h_{1:2} \co \addr{\tfrac{1}{k-2} (g_2 \co h_{2:k-1}) \co h_{2:3})},\\
& 3 \leq m \leq k-4  &
(h_{1:m} \co g_m \co h_{m:k-1}) \co h_{m:m+1}
&= (\tfrac{k-m-1}{k-2} + \tfrac{m-1}{k-2}) (h_{1:m} \co g_m \co h_{m:k-1}) \co h_{m:m+1} \nx
&&&= \addl{\tfrac{k-m-1}{k-2} ((h_{1:m} \co g_m \co h_{m:k-2}) \co h_{m:m+1})} \co h_{k-2:k-1} \nx
&&\wha + h_{1:2} \co \addr{\tfrac{m-1}{k-2} ((h_{2:m} \co g_m \co h_{m:k-2}) \co h_{m:m+1})},\\
& m=k-3  & (h_{1:k-3} \co g_{k-3} \co h_{k-3:k-1}) \co h_{k-3:k-2}
&= (\tfrac{2}{k-2} + \tfrac{k-4}{k-2}) (h_{1:k-3} \co g_{k-3} \co h_{k-3:k-1}) \co h_{k-3:k-2} \nx
&&&\akkun \addl{\tfrac{1}{k-2} ((h_{1:k-3} \co g_{k-3}) \co h_{k-3:k-2})} \co h_{k-2:k-1} \nx
&&\wha + h_{1:2} \co \addr{\tfrac{k-4}{k-2} ((h_{2:k-3} \co g_{k-3} \co h_{k-3:k-1}) \co h_{k-3:k-2})},\\
& m=k-2  & (h_{1:k-2} \co g_{k-2} \co h_{k-2:k-1}) \co h_{k-2:k-1} &= h_{1:2} \co \addr{((h_{2:k-2} \co g_{k-2} \co h_{k-2:k-1}) \co h_{k-2:k-1})} .
}

We now summarize the above splittings.
Defining $\chi^{(k)}_{1:k-2}$ and $\psi^{(k)}_{1:k-2}$ as
\eq{
\chi^{(k)}_{1:k-2} &:= 
\ddk_{1:k-2} \nx
\wha +\sum_{1 \leq m \leq k-3} \tfrac{k-m-1}{k-2} (h_{1:m} \co g_m \co h_{m:k-2}) \co g_m \nx
\wha + \tfrac{1}{2(k-2)} ((g_1 \co h_{1:k-2}) \co g_1 + (h_{1:k-2} \co g_{k-2}) \co g_{k-2}) \nx
\wha + \tfrac{1}{k-2} ( \erk_{1:k-2} + \elk_{1:k-2}) \nx
\wha + (h_{1:2} \co g_2 \co h_{2:k-2}) \co h_{1:2} \nx
\wha + \sum_{3 \leq m \leq k-2} \tfrac{k-m-1}{k-2} (h_{1:m} \co g_m \co h_{m:k-2}) \co h_{m-1:m} \nx
\wha + \sum_{1 \leq m \leq k-4} \tfrac{k-m-1}{k-2} (h_{1:m} \co g_m \co h_{m:k-2}) \co h_{m:m+1} \nx
\wha + \tfrac{1}{k-2} (h_{1:k-3} \co g_{k-3} \co h_{k-3:k-2}) \co h_{k-3:k-2} \nx
\wha - \tk g_1 \co \hb_{1:k-2} \nx
\wha + \sk_{1:k-2}, \\
\psi^{(k)}_{2:k-1} &:=
- \ddk_{2:k-1} \nx
\wha + \tfrac{1}{2(k-2)} ((g_2 \co h_{2:k-1}) \co g_2 + (h_{2:k-1} \co g_{k-1}) \co g_{k-1}) \nx
\wha +\sum_{3 \leq m \leq k-1} \tfrac{m-1}{k-2} (h_{2:m} \co g_m \co h_{m:k-1}) \co g_m \nx
\wha - \tfrac{1}{k-2} ( \erk_{1:k-2} + \elk_{1:k-2}) \nx
\wha + \tfrac{1}{k-2} (h_{2:3} \co g_3 \co h_{3:k-1}) \co h_{2:3} \nx
\wha + \sum_{4 \leq m \leq k-1} \tfrac{m-1}{k-2} (h_{2:m} \co g_m \co h_{m:k-1}) \co h_{m-1:m} \nx
\wha + \sum_{2 \leq m \leq k-3} \tfrac{m-1}{k-2} (h_{2:m} \co g_m \co h_{m:k-1}) \co h_{m:m+1} \nx
\wha + (h_{2:k-2} \co g_{k-2} \co h_{k-2:k-1}) \co h_{k-2:k-1} \nx
\wha + \tk \hb_{2:k-1} \co g_{k-1} \nx
\wha - \sk_{2:k-1} ,
}
we obtain
\eq{
\chi_{1:k-2}\pk \co h_{k-2:k-1} + h_{1:2} \co \psi_{2:k-1}\pk = \lo{k-2}
}
(a restatement of \eref{chi-psi-h}).

Here, we compute the following:
\eq{
\chi^{(k)}_{1:k-2} + \psi^{(k)}_{1:k-2} - h_{1:2} \co \omega^{(k)}_{2:k-2} + \omega^{(k)}_{1:k-3} \co h_{k-3:k-2} 
&= \sum_{1 \leq m \leq k-1} \tfrac{k-1}{k-2} (h_{1:m} \co g_m \co h_{m:k-2}) \co g_m \nx
\wha + \sum_{1 \leq m \leq k-1} \tfrac{k-1}{k-2} (h_{1:m} \co g_m \co h_{m:k-2}) \co h_{m-1:m} \nx
\wha + \sum_{1 \leq m \leq k-1} \tfrac{k-1}{k-2} (h_{1:m} \co g_m \co h_{m:k-2}) \co h_{m:m+1} \nx
\wha + \tk (-g_1 \co \hb_{1:k-2} + \hb_{1:k-2} \co g_2) \nx
\wha + \omega^{(k)}_{1:k-3} \co h_{k-3:k-2} - h_{1:2} \co \omega^{(k)}_{2:k-2} \nx
&= \tfrac{k-1}{k-2} ( d^{(k-1)}_{1:k-2} + e^{(k-1)}_{1:k-2} + f^{(k-1)}_{1:k-2}) \nx
\wha + \tk (-g_1 \co \hb_{1:k-2} + \hb_{1:k-2} \co g_2) \nx
\wha + (\omega^{(k)}_{1:k-3} - \tfrac{k-1}{k-2} \delta^{(k-1)}_{1:k-3}) \co h_{k-3:k-2} - h_{1:2} \co (\omega^{(k)}_{2:k-2} - \tfrac{k-1}{k-2} \delta^{(k-1)}_{2:k-2}) .
\lbl{GMC1-chi-psi}
}
As seen in the main text (\eref{chi-psi-3k}),
when a solution to Step~3 exists at $k$ (i.e., $Q_k \co H = \lo{k-2}$),
both sides of \eref{chi-psi-3k} are at most $k-3$-local.
From a comparison with \eref{GMC1-QH}, we find that if the right-hand side of \eref{GMC1-chi-psi} is at most $k-3$-local,
then a solution to Step~3 at $k-1$ exists (in fact, these two are equivalent).
Precisely, by setting
\eq{
t^{(k-1)} &= \tfrac{k-2}{k-1} \tk \nx
s^{(k-1)}_{1:k-3} &= \tfrac{k-2}{k-1} \omega^{(k)}_{1:k-3} - \delta^{(k-1)}_{1:k-3},
}
a solution to Step~3 at $k-1$ can be constructed.
In other words, the existence of a Step~3 solution at $k$ implies the existence of a Step~3 solution at $k-1$.
Combined with the discussion in the main text, this shows that the absence of a 3-local charge
implies the absence of any $k$-local charge for all $3 \leq k \leq N/2$.

\section{Models of degree 2 and less}
\lbl{s:degree2}
In this section, we show that when the degree of $g$ is two or less, that is, when the Hamiltonian takes the form
\eq{
H = \sums \bd_1 b_2 + b_1 \bd_2 + c_{20} (\bd_1)^2 + c_{10} \bd_1 + c_{11} n_1 + c_{01} b_1 + c_{02} b_1^2 ,
\lbl{deg2andless}
}
the model belongs to Type~C, meaning that it possesses $k$-local charges in the uniform sector for all $k$.
While \lref{step1} remains valid for $\deg(g) \leq 2$, \lref{step2} no longer applies in this case.
In contrast to the other Type~C models discussed so far in this paper, for each $k$ there exist $k$-local charges in both the (uniform) symmetric sector ($q_{+-} = q_{-+}$) and the antisymmetric sector ($q_{+-} = -q_{-+}$), as shown below by explicit construction.
\begin{lem}
The Hamiltonian \eqref{deg2andless} has (uniform) $k$-local charges for any $3 \leq k \leq N/2$.
\end{lem}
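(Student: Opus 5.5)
The Hamiltonian \eqref{deg2andless} is quadratic (plus linear) in the bosonic operators, so I would prove the lemma by explicit construction, working entirely inside the space of quadratic-plus-linear operators. Commutators of two such operators again lie in this space. Therefore the condition $[Q_k,H]=0$ becomes a finite linear system on the coefficients of the $\bx{11}$, $\bx{20}$, $\bx{02}$, $\bx{10}$, $\bx{01}$ and $\bx{00}$ quantities. A convenient basis is
\[
\sums \hb_{1:l}, \quad \sums h_{1:l}, \quad \sums \bd_1\bd_l, \quad \sums b_1 b_l, \quad \sums \bd_1, \quad \sums b_1 .
\]
I would first record the action of $[\bullet,H]$ on each basis element, using \eref{comxy-formula} restricted to degree $\le 2$. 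The hopping part maps $\sums \bd_1 b_l$ to nearest-neighbour combinations $\sums \bd_1 b_{l\pm1}$, and it maps $\sums \bd_1\bd_l$ to $-(\sums\bd_1\bd_{l+1}+\sums\bd_1\bd_{l-1})$ up to boundary factors at $l=1,2$. The on-site part does three things. The $c_{11}$ term only rescales. The $c_{20},c_{02}$ terms convert $\bx{11}$ quantities into $\bx{20}$ and $\bx{02}$ ones and back. The $c_{10},c_{01}$ terms produce linear and constant terms.

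\emph{Antisymmetric sector.} Consider $Q=\sums \bd_1 b_k - b_1\bd_k$. Its commutator with the hopping part is a telescoping nearest-neighbour combination of antisymmetric hopping terms, which vanishes in the shift sum. This is the standard free-boson current, and the signs can be checked. With the on-site terms, $[\sums \bd_1 b_l, \sums c_{20}(\bd_1)^2] \propto \sums \bd_1\bd_l$, and the antisymmetric combination gives $\sums \bd_1\bd_l - \bd_l\bd_1=0$ after the shift sum. The same cancellation happens for $c_{02}$, $c_{10}$ and $c_{01}$. The $c_{11}$ term commutes with particle-number-conserving operators. So I expect $\sums(\bd_1 b_k - b_1\bd_k)$ to be an exact charge for every $k$, which I would verify line by line.

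\emph{Symmetric sector.} This is the harder part, and I would build the charge by Fourier analysis. Write $b_p = N^{-1/2}\sum_j e^{-ipj}b_j$. Then $H$ becomes a sum over pairs $(p,-p)$ of quadratic forms in $\bd_p,b_p,\bd_{-p},b_{-p}$, plus zero-mode linear terms. The dispersion $\epsilon(p)=2\cos p + c_{11}$ and the anomalous terms $c_{20}\bd_p\bd_{-p}$, $c_{02}b_pb_{-p}$ are all even in $p$. Any operator of the form
\[
\sum_p f(p)\,[\text{the same quadratic form at } p],
\]
with $f$ even, commutes with $H$, because distinct pairs $(p,-p)$ commute. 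Choosing $f(p)=\cos((k-1)p)$ gives a $k$-local operator: its real-space form is $\sums(\bd_1 b_k+b_1\bd_k)$ plus terms in $\bd_1 b_{k\pm\cdots}$, $\bd_1\bd_j$, $b_1 b_j$ coming from products of cosines. More precisely, the product $\cos((k-1)p)(2\cos p+c_{11})$ yields ranges $k-1$ and $k\pm1$. To keep the range at most $k$, I would instead take $f(p)=\cos((k-1)p)$ multiplied only against the hopping-plus-anomalous block and verify that it commutes. A cleaner alternative is to take $Q=\sum_p \cos((k-1)p)\,(\bd_pb_p)$ together with the anomalous partners $\cos((k-1)p)\,\bd_p\bd_{-p}$ and coefficients fixed so that the $2\times2$ Bogoliubov matrices commute with those of $H$. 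Since each block of $H$ lies in $\mathfrak{sl}_2$ and is a polynomial-in-$p$ combination of the identity and a fixed matrix, I would take $Q$'s block to be $\cos((k-1)p)$ times the block of $H$ minus its $\epsilon(p)$-dependence. I would settle the exact combination by matching terms. The linear terms $c_{10},c_{01}$ live only at $p=0$, and I would cancel them by adding suitable multiples of $\sums\bd_1$ and $\sums b_1$, or equivalently by a displacement when $c_{11}+2\neq0$ and by direct adjustment otherwise.

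\emph{Main obstacle.} I expect the hard part to be keeping the symmetric-sector charge strictly $k$-local while the $p=0$ linear terms and degenerate cases such as $c_{11}=-2$ force extra adjustments. I would handle it by writing the real-space ansatz
\[
\sums (\bd_1 b_k+b_1\bd_k) + \alpha\,\bd_1\bd_{k-1} + \beta\, b_1 b_{k-1} + \cdots ,
\]
and solving the resulting short three-term recurrences downward in the range, exactly as in \lref{degree3-Ca}.
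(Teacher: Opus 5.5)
Your antisymmetric operator $\sums(\bd_1 b_k - b_1\bd_k)$ is exactly the paper's $Q_k^{\text{anti}}$, and your verification is the same direct check the paper relies on, so this charge alone already proves the lemma: translation-invariant number-conserving quadratic forms commute with each other and with $c_{11}\sum_i n_i$, and the $c_{20}$, $c_{02}$, $c_{10}$, $c_{01}$ contributions of the two halves cancel, the linear ones after the shift sum. Your symmetric-sector plan is not needed for this statement and does not work as sketched, since your variants built on $\cos((k-1)p)$ are either $(k+1)$-local or fail to commute; what works is $2\cos((k-2)p)$ times the full $\{p,-p\}$ block of $H$ including its zero-mode linear part, which needs no displacement or special treatment of $c_{11}=-2$ and yields the paper's $Q_k^{\text{sym}}$ plus $\sums(2c_{10}\bd_1+2c_{01}b_1)$ --- terms missing from the paper's displayed formula, which without them gives, e.g., $[Q_k^{\text{sym}},H]=4c_{10}\sums\bd_1$ when only $c_{10}\neq 0$.
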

\begin{proof}
Two $k$-local charges can be written explicitly as
\eq{
Q_k^{\text{sym}} &= \sums \bd_1 b_k + b_1 \bd_k + 2c_{20} \bd_1 \bd_{k-1} + 2c_{02} b_1 b_{k-1} + c_{11} \left(\bd_1 b_{k-1} + b_1 \bd_{k-1} \right) + \bd_1 b_{k-2} + b_1 \bd_{k-2}, \\
Q_k^{\text{anti}} &= \sums  \bd_1 b_k - b_1 \bd_k .
}
A direct calculation confirms that both of these commute with the Hamiltonian~\eqref{deg2andless}.
\end{proof}

\section{Staggered local charges}
\lbl{s:staggered}
In this section, we present a complete analysis of the \textit{staggered} local charges of the Hamiltonian~\eqref{H-gen}, namely those local charges that are eigenoperators of the one-site shift superoperator $T$ with eigenvalue $-1$.

From \lref{step1}, it follows that all local charges of the Hamiltonian~\eqref{H-gen} are either uniform or staggered.
Since the uniform ones have been fully analyzed in the preceding sections, analyzing the staggered case completes the overall classification.
Moreover, \lref{step1} implies that the staggered $k$-local charges take the form
\eq{
Q_k = \sumi (-1)^i \left( q_{++} \bd_i \bd_{i+k-1} + q_{--} b_i b_{i+k-1} \right) + \lo{k-1} .
\lbl{Stag-st}
}

In what follows, we show that depending on the staggered local charges, the Hamiltonian~\eqref{H-gen}, when analyzed in terms of its staggered local charges, falls into three distinct types:
Type~SC, which admits $k$-local charges for every $3 \leq k \leq N/2$;
Type~SN, which admits none;
and Type~SN$^+$, which admits only the 3-local one.
These classifications are listed in Table~\ref{table-classify-II}.

The structure of this section is as follows.
We show that a certain class of models belongs to Type~SN in \lref{stag-N}.
Next, in \lref{stag-C1} and \lref{stag-C2}, we demonstrate that another class of models belongs to Type~SC by explicit construction.
Finally, we show that the remaining models fall into Type~SN$^+$ in \lref{stag-N+}.

\begin{lem}
\label{t:stag-N}
If the on-site term $g$ is neither of the form
\eq{
g = \sum_{x=1}^\infty c_{x0} (\bd)^x + c_{01} b + c_{02} b^2,
\label{Stag-N}
}
nor its Hermitian conjugate, then the Hamiltonian~\eqref{H-gen}
has no staggered $k$-local charges for any $3 \leq k \leq N/2$.
\end{lem}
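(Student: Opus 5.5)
The plan is to follow the Step~1/Step~2 strategy of \apref{proof-step2}, adapted to the staggered sector. By \lref{step1}, a staggered $k$-local quantity $Q_k$ with $[Q_k,H]=\lo{k}$ has the form \eqref{Stag-st}, so its $k$-local part is fixed by $q_{++}$ and $q_{--}$. It suffices to prove two implications. First, $q_{++}\neq 0$ forces $g$ to be of the form \eqref{Stag-N}. Second, $q_{--}\neq 0$ forces $g$ to be its Hermitian conjugate. The second follows from the first by the $b\leftrightarrow\bd$ exchange used throughout \apref{proof-3local}. Hence, if $g$ is of neither form, $q_{++}=q_{--}=0$ and no staggered $k$-local charge exists. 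Moreover, this conclusion already follows from Step~2, i.e.\ from $[Q_k,H]=\lo{k-1}$.

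To constrain $g$ under $q_{++}\neq0$, I would look at the $k$-local part of $[Q_k,H]$. These terms come from two sources: $q_{++}(-1)^i\bd_i\bd_{i+k-1}$ commuted with $g_i$ or $g_{i+k-1}$, and the a priori unknown $(k-1)$-local terms of $Q_k$ commuted with a hopping term overlapping them on one site. For $\op{xy}$ with $c_{xy}\neq 0$ and $y\ge1$, the commutator $\bd_1\co\op{xy}_1=-y\,\op{x,y-1}_1$ produces $\bB_{(1)}=\op{x,y-1}_1\op{10}_k$ with coefficient $\propto y\,c_{xy}q_{++}$.

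I would then build the same chain $\bB_{(m)}$, $1\le m\le k$, as in \eref{step2-r}, with the boundary symbol $\op{01}$ replaced by $\op{10}$. Each intermediate $\bB_{(m)}$ is generated by exactly two $(k-1)$-local terms of $Q_k$ commuted with hopping. I would take the alternating combination $\sum_m (\pm1)^m r_{\bB_{(m)}}$, chosen so that all unknown $(k-1)$-local coefficients telescope away. In the uniform case the two boundary contributions cancelled, giving the parity selection rule. In the staggered case the extra factor $(-1)^{k-1}$ between $q_{++}$ at sites $1$ and $k$ should instead make them add, leaving a relation of the form (nonzero combinatorial factor)$\times y(y-1)c_{xy}q_{++}=0$ or $y\,c_{xy}q_{++}=0$. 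Running this with $(x,y)$ chosen as the top-degree monomial having $y\ge1$ should give $c_{xy}=0$. I would induct downward in degree: at each stage the already-killed higher monomials guarantee that the generating pairs are unique, as in Step~1.

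The main obstacle is the low-degree monomials $c_{11}$, $c_{12}$, $c_{21}$, and generally those with $y=1$, and the degree-2 symbol $\op{x,y-1}$ when $x+y\le 3$. There the uniqueness argument breaks down, because $\op{x,y-1}$ has degree $\le1$ and can also be produced by hopping commutators of other terms. This is precisely the regime in which the surviving terms $c_{01}b$ and $c_{02}b^2$ must be allowed while $c_{11}n$ is excluded. For these cases I would first try a direct analysis analogous to the degree-4 and degree-3 computations of \apref{proof-3local}. Concretely, I would write out the Step~2 equations $r_{\bB}=0$ for the few $\bB$ of length $k$ with symbols of degree $\le 2$, e.g.\ $\bB=\op{10}_1\op{10}_k$, which is generated by $q_{++}\bd_1\bd_k\co c_{11}n_1$ and by hopping commutators of $\bd_1\bd_{k-1}$ and $\bd_2\bd_k$. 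The staggered sign should then give a relation of the form $(2c_{11}+\text{const})q_{++}$ that is incompatible with the required hopping balance unless $c_{11}=0$. If Step~2 alone turns out to be insufficient for these few coefficients, I would push to Step~3 using the explicit form of the $(k-1)$-local part obtained from Step~2.
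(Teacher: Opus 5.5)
Your overall frame matches the paper: show $q_{++}=0$ unless $g$ has the form \eqref{Stag-N}, obtain $q_{--}=0$ by conjugation, and use only the length-$k$ equations $r_{\bB}=0$. The gap is in the central chain, and it matters, because the prefactor that chain produces \emph{is} the selection rule.

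The chain of \eref{step2-r} with its boundary $\op{01}$'s replaced by $\op{10}$ does not close in the staggered sector. The $(k-1)$-local term $\op{x,y-1}_2\op{10}_k$ attached to $\bB_{(1)}$ gives, under $\co h_{1:2}$, the two operators $-x\,\op{10}_1\op{x-1,y-1}_2\op{10}_k$ and $(y-1)\,\op{01}_1\op{x,y-2}_2\op{10}_k$. Your interior operators $\op{10}_1\op{x,y-2}_m\op{10}_k$ are generated from terms containing $\op{x+1,y-2}$. They therefore belong to the chain of $c_{x+1,y-1}$, not of $c_{xy}$. The $\op{x,y-2}$ branch carries a $b$ on the left, so it can never reach the other boundary term $q_{++}\bd_1\bd_k\co g_k\propto y\,c_{xy}q_{++}\,\op{10}_1\op{x,y-1}_k$.

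The chain that closes runs through the $\bd$ branch: $\op{x,y-1}_1\op{10}_k$, then $\op{10}_1\op{x-1,y-1}_m\op{10}_k$, then $\op{10}_1\op{x,y-1}_k$. With alternating weights (and weight $x$ at both ends) the unknowns telescope, and the staggered boundary signs add. The result is $-2xy\,c_{xy}q_{++}=0$, so the prefactor is $xy$, not the $y(y-1)$ or $y$ you anticipated.

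The factor $x$ also moves where the real difficulty lies.

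\begin{itemize}
\item \emph{Mixed monomials.} Every $c_{xy}$ with $x,y\ge1$, $(x,y)\ne(1,1)$ is removed in one stroke. This includes $c_{21}$, $c_{12}$ and all $c_{x1}$, which you listed as obstacles. No downward induction is needed: hopping only places a degree-one symbol on the newly covered site, and the $k$-local part of $Q_k$ commuted with $g$ gives only two-site operators. Hence for $x+y\ge3$ each $\bB$ in the chain has exactly the two generators used.
\item \emph{The term $c_{11}$.} Here the chain degenerates, but $r_{\op{10}_1\op{10}_k}=-2c_{11}q_{++}$ holds exactly. The hopping contributions of $\bd_1\bd_{k-1}$ and $\bd_2\bd_k$ cancel by staggering, so there is no extra constant.
\item \emph{Pure annihilation terms ($x=0$).} This is the case your plan actually lacks. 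For $c_{0y}b^y$ the closing chain breaks at its first link (coefficient $x=0$). A few low-degree equations cannot handle $c_{0y}$ of arbitrary degree.
\end{itemize}

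For the $x=0$ case the paper uses a separate one-sided chain along the $b$ branch. It starts at $\op{0,y-1}_1\op{10}_k$, passes through $\op{01}_1\op{0,y-2}_m\op{10}_k$, and ends at $\op{01}_1\op{1,y-2}_k$. This last operator has a single generator. Hopping cannot create a degree-$\ge2$ symbol at the new end, and $q_{--}b_1b_k\co g_k$ would need $c_{2,y-2}$, which the mixed case already excludes. Summing with weights $(y-1),1,\dots,1$ gives $-y(y-1)c_{0y}q_{++}=0$.

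This argument works only for $y\ge3$. At $y=2$ the interior operators collapse to translates of $b_1\bd_k$, and the single resulting equation is solvable. That is why $c_{02}b^2$ survives, and $c_{01}b$ survives trivially, consistent with \lref{stag-C2}.

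Without this three-way split (closing $\bd$ chain, direct $c_{11}$ equation, one-sided $b$ chain), your argument either leaves the $c_{x1}$ and $c_{0y}$ terms unresolved or, with a bare factor $y$, wrongly excludes $c_{02}$.
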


\begin{proof}
We first outline the proof. 
First we show that $q_{++}$ in \eref{Stag-st} is zero if $g$ is not of the form \eref{Stag-N}.
From this, in the case where $g$ is neither of the form \eref{Stag-N} nor its Hermitian conjugate, we directly have $q_{++}=q_{--}=0$, implying that no staggered $k$-local charges exist.

We prove $q_{++}=0$ by considering the following three cases:
(i) $c_{xy} \neq 0$ for some $(x,y)$ with $x\geq 1$, $y\geq 1$, $(x,y)\neq(1,1)$,
(ii) $c_{11}\neq 0$, and
(iii) $c_{0y}\neq 0$ for some $y\geq 2$.

First, we consider the case that there exists a pair $(x,y)$ satisfying $x\geq 1$, $y\geq 1$, $(x,y)\neq (1,1)$, and $c_{xy}\neq 0$.
For such $(x,y)$, we consider all the contribution to
\eq{
\bB_{(m)} = \begin{cases}
\op{x,y-1}_0 \op{10}_{k-1} & (m=0) \\
\op{10}_{-m} \op{x-1,y-1}_0 \op{10}_{k-m-1} & (1 \leq m \leq k-2) \\
\op{10}_{-k+1} \op{x,y-1}_0 & (m=k-1)
\end{cases} .
}
By defining
\eq{
\bA_{(m)} = \begin{cases}
\op{10}_0 \op{10}_{k-1} & (m=-1) \\
\op{x,y-1}_0 \op{10}_{k-2} & (m=0) \\
\op{10}_{-m} \op{x-1,y-1}_0 \op{10}_{k-m-2} & (1 \leq m \leq k-3) \\
\op{10}_{-k+2} \op{x,y-1}_0 & (m=k-2) \\
\op{10}_{-k+1} \op{10}_0 & (m=k-1)
\end{cases} ,
}
we find that
\eq{
r_{\bB_{(m)}} = \begin{cases}
-y c_{xy} q_{\bA_{(-1)}} - q_{\bA_{(0)}} & (m=0) \\
-xq_{\bA_{(0)}} - q_{\bA_{(1)}} & (m=1) \\
-q_{\bA_{(m-1)}} - q_{\bA_{(m)}} & (2 \leq m \leq k-3) \\
-q_{\bA_{(k-3)}} - xq_{\bA_{(k-2)}} & (m=k-2) \\
-q_{\bA_{(k-2)}} - xc_{xy}q_{\bA_{(k-1)}} & (m=k-1)
\end{cases}
\lbl{Stag-rq1}
}
holds.
Since $r_{\bB}=0$ hold for all $\bB$, the following relation immediately holds:
\eq{
xr_{\bB_{(0)}} + \sum_{1\leq m \leq k-2} (-1)^m r_{\bB_{(m)}} + (-1)^{k-1} x r_{\bB_{(k-1)}} = 0
\lbl{Stag-rq2}.
}
By substituting \eref{Stag-rq1} into \eref{Stag-rq2}, we obtain
\eq{
-xy c_{xy} ( q_{\op{10}_0\op{10}_{k-1}} + (-1)^{k-1} q_{\op{10}_{-k+1}\op{10}_0} ) = 0.
}
In the staggered operator subspace, one has $q_{\op{10}_{-k+1}\op{10}_0} = (-1)^{k-1} q_{\op{10}_0\op{10}_{k-1}}$.
Hence, we finally obtain
\eq{
-2xy c_{xy} q_{\op{10}_0\op{10}_{k-1}} = 0.
}
Therefore, if there exists $(x,y)$ satisfying $x\geq1$, $y\geq1$, $(x,y)\neq(1,1)$, and $c_{xy}\neq0$, then $q_{++} = q_{\op{10}_0\op{10}_{k-1}} = 0$ holds.

Next, we consider the case of $(x,y) = (1,1)$.
In this case, we have
\eq{
r_{\op{10}_0 \op{10}_{k-1}} &=
-c_{11} q_{\op{10}_0 \op{10}_{k-1}}
-c_{11} q_{\op{10}_0 \op{10}_{k-1}}
- \cancel{q_{\op{10}_1 \op{10}_{k-1}}}
- \cancel{q_{\op{10}_0 \op{10}_{k-2}}} \nx
&= -2c_{11} q_{\op{10}_0 \op{10}_{k-1}}
}
Hence, $c_{11}\neq0$ also leads to $q_{++}=0$.

Finally, we consider the case that there exists an operator $\op{0y}$ with $x=0$ and $y\geq2$ that has a nonzero coefficient $c_{\op{0y}}$.
In this case, we consider all the contributions to
\eq{
\bB_{(m)}' = \begin{cases}
\op{0,y-1}_0 \op{10}_{k-1} & (m=0) \\
\op{01}_{-m} \op{0,y-2}_0 \op{10}_{k-m-1} & (1 \leq m \leq k-2) \\
\op{01}_{-k+1} \op{1,y-2}_0 & (m=k-1)
\end{cases} .
}
By writing
\eq{
\bA_{(m)}' = \begin{cases}
\op{10}_0 \op{10}_{k-1} & (m=-1) \\
\op{0,y-1}_0 \op{10}_{k-2} & (m=0) \\
\op{01}_{-m} \op{0,y-2}_0 \op{10}_{k-m-2} & (1 \leq m \leq k-3) \\
\op{10}_{-k+2} \op{1,y-2}_0 & (m=k-2)
\end{cases} ,
}
we find that
\eq{
r_{\bB_{(m)}} = \begin{cases}
-y c_{xy} q_{\bA_{(-1)}} - q_{\bA_{(0)}} & (m=0) \\
(y-1) q_{\bA_{(0)}} - q_{\bA_{(1)}} & (m=1) \\
q_{\bA_{(m-1)}} - q_{\bA_{(m)}} & (2 \leq m \leq k-2) \\
q_{\bA_{(k-2)}} & (m=k-1)
\end{cases}
}
holds.
Substituting it into
\eq{
(y-1)r_{\bB_{(0)}'} +  \sum_{1 \leq m \leq k-2} r_{\bB_{(m)}'} = 0,
}
we have
\eq{
-(y-1)yc_{xy} q_{\op{10}_0 \op{10}_{k-1}} = 0 ,
}
which implies $q_{++} = 0$.

Thus, we have shown that if there exists an operator $\op{xy}$ with a nonzero coefficient $c_{xy}$ such that both $x$ and $y$ are more than or equal to one, or if both $x=0$ and $y\geq 2$, hold, then $q_{++}=0$ holds.
This completes the proof.
\end{proof}

\begin{lem}
\lbl{t:stag-C1}
If the on-site term $g$ is of the form
\eq{
g = \sum_{x=1}^\infty c_{x0} (\bd)^x + c_{01} b
\lbl{Stag-C1}
}
or its Hermitian conjugate,
then the Hamiltonian \eqref{H-gen} has a staggered $k$-local charge for any $3 \leq k \leq N/2$.
\end{lem}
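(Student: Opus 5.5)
The plan is to construct the staggered charge explicitly, guided by the example $g=c_{50}(\bd)^5$ in the setup section. The ansatz is
\eq{
Q_k = \sumi (-1)^i \bd_i \bd_{i+k-1} + \alpha_k \sumi (-1)^i \bd_i ,
}
which has the leading form \eqref{Stag-st} with $q_{--}=0$. Here $\alpha_k$ is a constant to be fixed. Throughout I take $N$ even, since otherwise no nonzero staggered operator exists under periodic boundary conditions. I will then compute $[Q_k,H]$ term by term and check that everything cancels once $\alpha_k$ is chosen.

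The first step is the hopping part. Using $[\bd_j, b_j\bd_{j\pm1}]=-\bd_{j\pm1}$, the commutator of $\sumi(-1)^i\bd_i\bd_{i+k-1}$ with $\sumi h_{i:i+1}$ is a sum of four shifted families: $\bd_{i\pm1}\bd_{i+k-1}$ and $\bd_i\bd_{i+k-1\pm1}$, each with coefficient $-(-1)^i$. I will relabel $i\to i\pm1$ so that each family is written as $(-1)^i\bd_i\bd_{i+k}$ or $(-1)^i\bd_i\bd_{i+k-2}$. The staggering sign then makes the pairs cancel exactly. For $k=2$ it is worth double-checking the case where the two sites coincide, but this does not arise for $3\le k\le N/2$.

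The same computation for the one-site part gives
\eq{
\Bigl[\sumi(-1)^i\bd_i,\ \sumi h_{i:i+1}\Bigr]=2\sumi(-1)^i\bd_i .
}
Here no cancellation occurs, because the signs from shifting by one site combine with the staggering factor instead of cancelling against it.

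The second step is the on-site part. Every creation polynomial $\sum_x c_{x0}(\bd)^x$ commutes with any product of creation operators, so only $c_{01}b$ contributes. From $[\bd_i\bd_j, b_l]=-\delta_{il}\bd_j-\delta_{jl}\bd_i$ we get
\eq{
-c_{01}\sumi(-1)^i\bigl(\bd_{i+k-1}+\bd_i\bigr)=-c_{01}\bigl(1+(-1)^{k-1}\bigr)\sumi(-1)^i\bd_i .
}
This vanishes for even $k$ and equals $-2c_{01}\sumi(-1)^i\bd_i$ for odd $k$. The term $\alpha_k\sumi(-1)^i\bd_i$ commuted with $c_{01}b$ gives a multiple of $\sumi(-1)^i I$, which vanishes for even $N$.

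Collecting the pieces, $[Q_k,H]=\bigl(2\alpha_k-c_{01}(1+(-1)^{k-1})\bigr)\sumi(-1)^i\bd_i$. Choosing $\alpha_k=c_{01}(1+(-1)^{k-1})/2$ therefore gives a genuine staggered $k$-local charge for every $3\le k\le N/2$. The Hermitian-conjugate case then follows at once: $[Q_k^\dagger,H^\dagger]=-([Q_k,H])^\dagger=0$, and $H^\dagger$ is the Hamiltonian \eqref{H-gen} with on-site term $g^\dagger$, because the hopping part is Hermitian.

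I expect the only delicate point to be the sign bookkeeping in the hopping step, specifically confirming that the four shifted families cancel for both parities of $k$ while the one-site staggered term does not. I would first verify this directly for $k=3$ and $k=4$ as a check. The second point to watch is stating explicitly that even $N$ is needed, both for staggered operators to exist and for the constant $\sumi(-1)^i I$ to drop out. This matches the even--odd sensitivity described in the setup section.
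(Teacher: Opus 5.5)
Your construction is correct and is essentially the paper's: the same leading term $\sum_i(-1)^i\bd_i\bd_{i+k-1}$, a 1-local correction needed only for odd $k$, and direct verification of commutation, with $N$ even implicit. The only difference is the correction: the paper subtracts $\sum_i(-1)^i(\bd_i)^2$, which commutes with the hopping and cancels the $c_{01}b$ contribution by itself, whereas you add $c_{01}\sum_i(-1)^i\bd_i$ and cancel through the hopping term; the two choices differ by the 1-local staggered charge $\sum_i(-1)^i\bigl((\bd_i)^2+c_{01}\bd_i\bigr)$, so both are valid.
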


\begin{proof}
The staggered local charge can be written explicitly as
\eq{
Q_k^{++} =\begin{cases} 
\displaystyle \sumi (-1)^i \left( \bd_i \bd_{i+k-1} - (\bd_i)^2\right)  & (m \text{ is odd}) \\
\displaystyle \sumi (-1)^i \bd_i \bd_{i+k-1} & (m \text{ is even})
\end{cases} .
}
The verification of its commutation with the Hamiltonian is straightforward.
\end{proof}

\begin{lem}
\lbl{t:stag-C2}
If the on-site term $g$ is of the form
\eq{
g = c_{20} (\bd)^2 + c_{10} \bd + c_{01} b + c_{02} b^2,
\lbl{Stag-C2}
}
then the Hamiltonian \eqref{H-gen} has staggered $k$-local charges for every $3 \leq k \leq N/2$.
\end{lem}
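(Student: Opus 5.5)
The plan is to write down an explicit staggered $k$-local charge and verify $[Q_k,H]=0$ by direct computation, as in \lref{stag-C1}. By \eref{Stag-st}, I take the ansatz
\eq{
Q_k = \sumi (-1)^i \left( \alpha\, \bd_i \bd_{i+k-1} + \beta\, b_i b_{i+k-1} \right) + \lo{1},
}
with constants $\alpha,\beta$ to be fixed. Throughout, $N$ is even so that the staggered sum is well defined under periodic boundary conditions.

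\textbf{Step 1 (hopping part).} The hopping part of $H$ gives no contribution. Commuting with $\sum_j h_{j:j+1}$ sends $\bd_i \mapsto -(\bd_{i-1}+\bd_{i+1})$ and $b_i \mapsto b_{i-1}+b_{i+1}$. Hence each two-point term produces four shifted two-point terms. Inside the staggered sum these cancel pairwise: for example, $(-1)^i\bd_{i+1}\bd_{i+k-1}$ is cancelled by the $(i+1)$-shift of $(-1)^{i}\bd_i\bd_{i+k-2}$, which carries the opposite sign. This is the same mechanism that makes \lref{stag-C1} work, so I would simply reuse that computation.

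\textbf{Step 2 (on-site part).} For $i\neq j$ the only nonzero commutators are
\eq{
[\bd_i, c_{01} b_i + c_{02} b_i^2] &= -c_{01} - 2c_{02} b_i, \\
[b_i, c_{10}\bd_i + c_{20}(\bd_i)^2] &= c_{10} + 2c_{20}\bd_i .
}
The mixed two-point terms $(-1)^i b_i\bd_{i+k-1}$ and $(-1)^i \bd_i b_{i+k-1}$ therefore both appear with coefficient $-2c_{02}\alpha + 2c_{20}\beta$. I choose $(\alpha,\beta)=(c_{20},c_{02})$ when $(c_{20},c_{02})\neq(0,0)$, and $(\alpha,\beta)=(1,0)$ otherwise; in both cases this coefficient vanishes.

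\textbf{Step 3 (one-local remainder).} What remains is one-local, namely
\eq{
(-1)^i\left(-c_{01}\alpha\,(\bd_i+\bd_{i+k-1}) + c_{10}\beta\,(b_i+b_{i+k-1})\right).
}
After shifting the second operator in each bracket, this equals $(1+(-1)^{k-1})\sumi (-1)^i(-c_{01}\alpha\,\bd_i + c_{10}\beta\, b_i)$.
\begin{itemize}
\item For even $k$ this vanishes, and $Q_k$ is already a charge.
\item For odd $k$, I add one-local staggered corrections modelled on the term $-(\bd_i)^2$ in \lref{stag-C1}. The candidate is $\sumi(-1)^i(-\alpha(\bd_i)^2 - \beta b_i^2 + \gamma\, n_i)$, with $\gamma$ to be determined.
\end{itemize}

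\textbf{Step 4 (main obstacle).} The hardest step is checking that these corrections are consistent in the odd-$k$ case. The term $-\alpha(\bd_i)^2$ commutes with the hopping to produce $(-1)^i(\bd_i\bd_{i\pm1})$-type terms, which telescope to zero in the staggered sum. Its commutator with $c_{01}b_i$ gives $2c_{01}\alpha\,\bd_i$, which cancels the leftover $-2c_{01}\alpha\,\bd_i$. Symmetrically, $-\beta b_i^2$ handles the $c_{10}$ remainder.

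However, $(\bd_i)^2$ commuted with $c_{02}b_i^2$ produces an $n_i$-type term, and $b_i^2$ commuted with $c_{20}(\bd_i)^2$ produces one as well. With the choice $\alpha=c_{20}$, $\beta=c_{02}$, I expect these to cancel, with the constant pieces cancelling too. If they do not, I would tune $\gamma$; staggered $n_i$ commutes with the hopping to give staggered $\hb_{i:i+1}$-type terms, so I would check whether those also cancel.

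Should the odd-$k$ bookkeeping fail at this point, the fallback is to add a multiple of a lower staggered charge $Q_{k-2}$ to absorb the remainder.
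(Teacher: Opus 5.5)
Your proposal is correct. It proves the lemma with a different explicit charge from the paper's.

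The one step you left as an expectation does go through. The relevant on-site contributions are $-\alpha c_{02}[(\bd_i)^2,b_i^2]-\beta c_{20}[b_i^2,(\bd_i)^2]=(\alpha c_{02}-\beta c_{20})(4n_i+2)$. This vanishes by the same condition $\alpha c_{02}=\beta c_{20}$ that killed the mixed terms in Step~2, and that condition holds in both of your cases. Hence $\gamma=0$, and for odd $k$
\[
Q_k=\sumi(-1)^i\bigl[\alpha\bigl(\bd_i\bd_{i+k-1}-(\bd_i)^2\bigr)+\beta\bigl(b_ib_{i+k-1}-b_i^2\bigr)\bigr]
\]
commutes with $H$. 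For even $k$ the bare leading term already does. You should state this outright rather than hedge. The $\gamma$ fallback could not have helped anyway: staggered $n_i$ generates $2\sumi(-1)^i(\bd_ib_{i+1}-b_i\bd_{i+1})$ under hopping, and nothing else in your ansatz cancels it.

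The paper instead writes two separate charges $Q_k^{++}$ and $Q_k^{--}$, whose leading terms are $\bd_i\bd_{i+k-1}$ and $b_ib_{i+k-1}$ alone.
\begin{itemize}
\item Because $c_{02}b^2$ turns a bare $\bd_i\bd_{i+k-1}$ into mixed $b\bd$ terms, $Q_k^{++}$ must be dressed with the $(k-1)$-local term $c_{02}(\bd_ib_{i+k-2}-b_i\bd_{i+k-2})$ and the $(k-2)$-local term $\bd_i\bd_{i+k-3}$.
\item For odd $k$ it also needs the 1-local terms $-c_{02}g_i+2c_{01}\bd_i$.
\item $Q_k^{--}$ is the mirror image, with $c_{20}(b_i\bd_{i+k-2}-\bd_ib_{i+k-2})$.
\end{itemize}
In the combination $c_{20}Q_k^{++}+c_{02}Q_k^{--}$ the $(k-1)$-local dressings cancel. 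That is exactly the mechanism you exploit from the outset by fixing $(\alpha,\beta)\propto(c_{20},c_{02})$.

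Your route gives a much shorter charge and a few-line verification. It also treats $c_{20}=c_{02}=0$ uniformly, where it reduces to the charge of \lref{stag-C1}. The paper's route costs more bookkeeping, but it realizes both leading coefficients $q_{++}$ and $q_{--}$ of \eref{Stag-st} independently. That exhibits two independent staggered $k$-local charges for each $k$, which is more than the lemma requires.
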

\begin{proof}
Two staggered local charges can be written explicitly as
\eq{
Q_k^{++} &= \begin{cases} 
\displaystyle
\sumi (-1)^i \left( 
\bd_i \bd_{i+k-1} 
+ c_{02} \left( \bd_i b_{i+k-2} - b_i \bd_{i+k-2} \right)
+ \bd_i \bd_{i+k-3} 
- c_{02} g_i 
+ 2c_{01} \bd_i \right)
& (m \text{ is odd}) \\
\displaystyle
\sumi (-1)^i \left( 
\bd_i \bd_{i+k-1} 
+ c_{02} \left( \bd_i b_{i+k-2} - b_i \bd_{i+k-2} \right)
+ \bd_i \bd_{i+k-3} \right)
& (m \text{ is even}) \\
\end{cases}
,\\
Q_k^{--} &= \begin{cases} 
\displaystyle
\sumi (-1)^i \left( 
b_i b_{i+k-1} 
+ c_{20} \left( b_i \bd_{i+k-2} - \bd_i b_{i+k-2} \right)
+ b_i b_{i+k-3} 
- c_{20} g_i 
+ 2c_{10} b_i
\right)
& (m \text{ is odd})\\
\displaystyle
\sumi (-1)^i \left( 
b_i b_{i+k-1} 
+ c_{20} \left( b_i \bd_{i+k-2} - \bd_i b_{i+k-2} \right)
+ b_i b_{i+k-3}
\right)
& (m \text{ is even}) \\
\end{cases}.
}
The verification of their commutation with the Hamiltonian is straightforward.
\end{proof}

\begin{lem}
\lbl{t:stag-N+}
If the on-site term $g$ is of the form
\eq{
g = \sum_{x=1}^{\infty} c_{x0} (\bd)^x + c_{01} b + c_{02} b^2,
\lbl{Stag-N+}
}
with $c_{02} \neq 0$ and $c_{x0} \neq 0$ for some $x \geq 3$,
or its Hermitian conjugate,
then the Hamiltonian \eqref{H-gen} has a staggered 3-local charge,
and no staggered $k$-local charges for any $4 \leq k \leq N/2$.
\end{lem}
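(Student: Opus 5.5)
By Hermitian conjugation it suffices to treat $g$ as in \eqref{Stag-N+}; the adjoint case follows symmetrically. Two facts are already available. First, $c_{xy}=0$ whenever $x,y\geq1$, so the argument of \lref{stag-N} applied with $\op{xy}$ replaced by $\op{02}$ gives $q_{++}=0$ through its third case ($x=0$, $y=2$). Second, no $c_{x0}$ with $x\ge1$ forces $q_{--}=0$ by the same mechanism. Hence every staggered $k$-local charge has leading part $\sumi(-1)^i b_ib_{i+k-1}$, and I normalize $q_{--}=1$.

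For existence at $k=3$ I will write down an explicit candidate and check it. I would start from the $Q_3^{--}$ of \lref{stag-C2} with $c_{20}$ replaced by the whole $\bd$-part of $g$:
\[
Q_3=\sumi(-1)^i\bigl(b_ib_{i+2}+\bigl(b_i\,\gamma(\bd_{i+1})-\gamma(\bd_i)\,b_{i+1}\bigr)+\dots\bigr),
\]
where $\gamma$ is a polynomial in $\bd$ of degree at most $\deg(g)-1$. I would choose the coefficients of $\gamma$ so that the 2-local part of $[Q_3,H]$ cancels. Since $b\co(\bd)^x=x(\bd)^{x-1}$, the terms $b_ib_{i+2}\co g$ generate $b_i(\bd_{i+2})^{x-1}$-type operators. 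These should be cancelled by $\gamma$ commuted with the hopping term, which suggests $\gamma(\bd)$ proportional to $\sum_x c_{x0}(\bd)^{x-1}$, possibly with a sign. Finally I would fix the 1-local part, using the fact that $\sumi(-1)^i(\bullet)$ of a single-site operator commutes to 2-local staggered terms. At the end I must verify that the $c_{02}b^2$ term causes no obstruction. It should not at $k=3$, because $b^2$ commutes with every $b$-only term.

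For absence when $4\le k\le N/2$, I will imitate \lref{degree4-N+}. With $q_{--}=1$, I first solve Steps~2 and~3 for the $(k-1)$- and $(k-2)$-local parts. These parts are forced to contain terms like $b_i\,(\bd)^{x-1}_{j}$ coming from the highest $x\ge3$ with $c_{x0}\neq0$. Then I look at an operator of the form $b_1(\bd_m)^{x-2}b_{m'}\dots$, or more concretely $b_1\,(\bd_m)^{x-2}\,b_{k-1}$-type strings. These can be produced only by $c_{02}b^2$ acting on the $\gamma$-type terms and by hopping acting on undetermined lower-order coefficients. Following the weighted alternating-sum trick used in the proof of \lref{degree4-N+} and in \lref{stag-N}, I would choose signs $(-1)^m$ so that all undetermined coefficients telescope away, leaving a multiple of $c_{02}\,x(x-1)c_{x0}$, which must vanish. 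Because $x\ge3$ and $c_{02}\neq0$, this is a contradiction.

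I expect the main obstacle to be this final step: finding the right family of operators $\bB_{(m)}$ and weights so that the telescoping closes at both ends. At $k=3$ the chain is too short to produce the obstruction, and the argument must exploit exactly that difference between $k=3$ and $k\ge4$. I would determine the weights by working out $k=4$ explicitly first and then generalizing.
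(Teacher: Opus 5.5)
There is a genuine gap in your first step: you have the surviving leading coefficient backwards.

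Case (iii) of \lref{stag-N} is valid only for $y\geq 3$. Its proof writes $y\geq 2$, but the lemma's statement explicitly allows $c_{02}b^2$, and \lref{stag-C2} exhibits $Q_k^{++}$ with $c_{02}\neq 0$. For $y=2$ the middle symbol $\op{0,y-2}=\op{00}$ is the identity, so every $\bB'_{(m)}$ is a translate of the single operator $b_0\bd_{k-1}$. The telescoping therefore collapses, and the surviving conditions are met by giving the staggered $(k-1)$-local combination $\bd_ib_{i+k-2}-b_i\bd_{i+k-2}$ the coefficient $c_{02}q_{++}$.

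Conversely, apply the Hermitian conjugate of case (iii) to $c_{x0}(\bd)^x$ with $x\geq 3$, which is exactly your hypothesis. It runs through the genuinely distinct operators $\op{x-1,0}_0\op{01}_{k-1}$, $\op{10}_{-m}\op{x-2,0}_0\op{01}_{k-m-1}$, $\op{10}_{-k+1}\op{x-2,1}_0$ and gives $(x-1)x\,c_{x0}\,q_{--}=0$. Even on your literal reading, the conjugate statement would also kill $q_{--}$, leaving no staggered charge at all and contradicting the existence half of the lemma.

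Hence $q_{--}=0$ for every $k\geq 3$. Your normalization $q_{--}=1$ is impossible, and both halves of your plan take place in an empty sector. Concretely, at $k=3$ the term $x c_{x0}(\bd_i)^{x-1}b_{i+2}$ in $[b_ib_{i+2},H]$ can only be cancelled by a 2-local $(\bd_i)^{x-1}b_{i+1}$. Its commutator with $h_{i-1:i}$ creates $\bd_{i-1}(\bd_i)^{x-2}b_{i+1}$, and for $x\geq 3$ this chain cannot be closed off. So your $\gamma$-deformation of $Q_3^{--}$ never becomes a charge; the chain closes in \lref{stag-C2} only because there $x\leq 2$ and $(\bd_i)^{x-2}$ is trivial.

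What you need is the $\bd\bd$ sector. At $k=3$ the charge is $\sumi(-1)^i\bigl(\bd_i\bd_{i+2}+c_{02}(\bd_ib_{i+1}-b_i\bd_{i+1})+(\bd_i)^2-c_{02}g_i+2c_{01}\bd_i\bigr)$. The device your polynomial $\gamma$ was reaching for is the single on-site term $-c_{02}g_i$. With $g'(\bd):=\sum_x x c_{x0}(\bd)^{x-1}$, the 2-local part of $[\sumi(-1)^ig_i,H]$ equals that of $[\sumi(-1)^i(\bd_ib_{i+1}-b_i\bd_{i+1}),\sumi g_i]$, namely $\sumi(-1)^i\bigl(\bd_ig'(\bd_{i+1})-g'(\bd_i)\bd_{i+1}\bigr)$. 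So the whole $\bd$-polynomial is absorbed at once.

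For $4\leq k\leq N/2$, your strategy is the one the paper uses: fix the general solution of the early steps up to free parameters, then take a weighted sum of the $r_{\bB}$ that eliminates the undetermined lower coefficients. It must, however, be run with leading term $\sumi(-1)^i\bd_i\bd_{i+k-1}$. There the argument goes as follows.

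The $(k-1)$-local part is forced, modulo the free $\bd_i\bd_{i+k-2}$ direction, to be $c_{02}(\bd_ib_{i+k-2}-b_i\bd_{i+k-2})$. Its commutator with $c_{x0}(\bd)^x$ forces $(k-2)$-local terms proportional to $c_{02}c_{x0}$, of the shapes $(\bd_i)^{x-1}\bd_{i+k-3}$, $\bd_i(\bd_m)^{x-2}\bd_{i+k-3}$ and $\bd_i(\bd_{i+k-3})^{x-1}$. The commutator of these with $c_{02}b^2$ feeds the chain $\op{x-2,1}_0\op{10}_{k-3}$, $\op{10}_0\op{x-3,1}_m\op{10}_{k-3}$, $\op{10}_0\op{x-2,1}_{k-3}$; for $k=4$ this is just $\op{x-2,1}_0\op{10}_1$ and $\op{10}_0\op{x-2,1}_1$.

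The resulting obstruction is a nonzero multiple of $(x-2)(x-1)x\,c_{02}^2c_{x0}$, or of $(x-1)x\,c_{02}^2c_{x0}$ at $k=4$. It is second order in $c_{02}$, not the first-order $c_{02}x(x-1)c_{x0}$ you anticipated. This also explains why $k=3$ escapes: there the corresponding correction is the on-site $-c_{02}g_i$, which commutes with $g_i$.
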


\begin{proof}
The staggered 3-local charge can be written explicitly as
\eq{
Q_3 = \sumi (-1)^i \left( 
\bd_i \bd_{i+2} 
+ c_{02} \left( \bd_i b_{i+1} - b_i \bd_{i+1} \right)
+ (\bd_i)^2 - c_{02} g_i + 2c_{01} \bd_i
\right).
}

For $k \geq 4$, the normalized general solution of Step~4 (i.e., the operator $\tilde{Q}_k$ satisfying $\tilde{Q}_k \co H = \lo{k-2}$) can be written as
\eq{
\tilde{Q}_k = \sumi (-1)^i \Bigg(
& \bd_i \bd_{i+k-1} 
+ c_{02} \left( \bd_i b_{i+k-2} - b_i \bd_{i+k-2} \right) \nx
&- \sum_{x=3}^\infty c_{02} c_{x0} \left( 
x (\bd_i)^{x-1} \bd_{i+k-3} + x(x-1) \sum_{i+1\leq m\leq i+k-4} \bd_i (\bd_m)^{x-2} \bd_{i+k-3} + x \bd_i (\bd_{i+k-3})^{x-1} \right) \nx
&+ t_1 \tilde{Q}_{k-1} + t_2 \bd_i \bd_{i+k-3} + t_3 b_i b_{i+k-3}
\Bigg) + \lo{k-3} ,
\lbl{Stag-step4}
}
where $t_1$, $t_2$, and $t_3$ are free parameters.
It can be verified by direct computation that \eref{Stag-step4} indeed satisfies the condition for Step~4.
Conversely, as will be shown below, the general solution of Step~4 for a given $k$ is given by \eref{Stag-step4}, up to an overall constant factor.

Let us express the general solution of Step~4 at $k$ as $\tilde{Q}_k+\Delta$.
Then
\eq{
\Delta \co H = \lo{k-2}
\lbl{Stag-delta}
}
holds.
Since both $\tilde{Q}_k$ and $\tilde{Q}_k+\Delta$ are $k$-local quantities, $\Delta$ is at most $k$-local.
First, when $\Delta$ is $k$-local, \eref{Stag-delta} imposes the Step~1 and Step~2 constraints on $\Delta$.
By the argument in the proof of \lref{stag-N}, the $k$-local part of $\Delta$ can only be $\sumi (-1)^i \bd_i \bd_{i+k-1}$,
which merely changes the overall normalization of \eref{Stag-step4}.
Next, when $\Delta$ is $k-1$-local, \eref{Stag-delta} likewise imposes the Step~1 and Step~2 constraints on $\Delta$.
By the argument in the proof of \lref{stag-N}, the $k-1$-local part of $\Delta$ can only be $\sumi (-1)^i \bd_i \bd_{i+k-2}$, and this contribution is absorbed into the free parameter $t_1$.
Finally, when $\Delta$ is $k-2$-local, \eref{Stag-delta} is equivalent to the Step~1 condition for $\Delta$.
By \lref{step1}, the $k-2$-local part of $\Delta$ is a linear combination of $\sumi (-1)^i \bd_i \bd_{i+k-3}$ and $\sumi (-1)^i b_i b_{i+k-3}$, whose contributions are absorbed into the free parameters $t_2$ and $t_3$, respectively.
Consequently, the form \eref{Stag-step4} is both necessary and sufficient for being a solution of Step~4 (up to an overall constant factor).

To conclude the argument, we show that the quantity $\tilde{Q}_k$ in \eref{Stag-step4} cannot serve as a solution satisfying the next consistency condition (Step~5), thereby proving that no $k$-local charge exists.
For $k \geq 6$, consider all the contributions to 
\eq{
\bB_{(m)} = \begin{cases}
\op{x-2,1}_0 \op{10}_{k-3} & (m=0) \\
\op{10}_0 \op{x-3,1}_m \op{10}_{k-3} & (1 \leq m \leq k-4) \\
\op{10}_0 \op{x-2,1}_{k-3} & (m=k-3)
\end{cases} .
}
By writing
\eq{
\bA_{(m)} = \begin{cases}
\op{x-1,0}_0 \op{10}_{k-3} & (m=-1) \\
\op{x-2,1}_0 \op{10}_{k-4} & (m=0) \\
\op{10}_0 \op{x-3,1}_m \op{10}_{k-4} & (1 \leq m \leq k-5) \\
\op{10}_0 \op{x-2,1}_{k-4} & (m=k-4) \\
\op{10}_0 \op{x-1,0}_{k-3} & (m=k-3)
\end{cases} ,
}
we find that
\eq{
r_{\bB_{(m)}} = \begin{cases}
-2(x-1) c_{02} q_{\bA_{(-1)}} - q_{\bA_{(0)}} & (m=0) \\
(x-2) q_{\bA_{(0)}} - q_{\bA_{(1)}} & (m=1) \\
q_{\bA_{(m-1)}} - q_{\bA_{(m)}} & (2 \leq m \leq k-5) \\
q_{\bA_{(k-5)}} - (x-2) q_{\bA_{(k-4)}} & (m=k-4) \\
q_{\bA_{(k-4)}} - 2(x-1) c_{02} q_{\bA_{(k-3)}} & (m=k-3)
\end{cases}
}
holds.
Using this relation, we obtain
\eq{
 (x-2) r_{\bB_{(0)}} + 2(x-1) c_{02} \sum_{1 \leq m \leq k-4} r_{\bB_{(m)}} + (x-2) r_{\bB_{(k-3)}} 
&= -2 (x-2) (x-1) c_{02} (q_{\bA_{(-1)}} - q_{\bA_{(k-3)}}) \nx
&= -4 (x-2) (x-1) c_{02} q_{\bA_{(-1)}} \nx
&= -4 (x-2) (x-1) x c_{02}^2 c_{x0} \nx
&\neq 0.
}
Here we used $x\geq 3$ together with $c_{02}\neq 0$ and $c_{x0}\neq 0$.
Since $r_{\bB}=0$ must hold for all $\bB$ in a charge, we conclude that $\tilde{Q}_k$ cannot be a $k$-local charge.

For $k = 5$, since
\eq{
\begin{cases}
r_{\bB_{(0)}} = -2(x-1) c_{02} q_{\bA_{(-1)}} - q_{\bA_{(0)}} \\
r_{\bB_{(1)}} = (x-2) q_{\bA_{(0)}} - (x-2) q_{\bA_{(1)}} \\
r_{\bB_{(2)}} = q_{\bA_{(1)}} - 2(x-1) c_{02} q_{\bA_{(2)}}
\end{cases}
}
holds, we obtain
\eq{
(x-2) r_{\bB_{(0)}} + 2(x-1)c_{02} r_{\bB_{(1)}} + (x-2) r_{\bB_{(2)}}
= -4(x-2)(x-1)xc_{02}^2c_{x0} \neq 0,
}
which again shows that $\tilde{Q}_k$ is not a charge.

For $k = 4$, from
\eq{
\begin{cases}
r_{\op{x-2,1}_0 \op{10}_1} = -2(x-1) c_{02} q_{\op{x-1,0}_0\op{10}_1} - (x-1)q_{\op{x-1,1}_0} \\
r_{\op{10}_0 \op{x-2,1}_1} = (x-1)q_{\op{x-1,1}_0} -2(x-1) c_{02} q_{\op{10}_0\op{x-1,0}_1}\\
\end{cases},
}
we obtain
\eq{
r_{\op{x-2,1}_0 \op{10}_1} + r_{\op{10}_0 \op{x-2,1}_1}
= -4 (x-1) x c_{02}^2 c_{x0}
\neq 0,
}
which likewise violates the condition for a charge.

Consequently, \eref{Stag-N+} is identified as a Type~N$^+$ model, namely, a system that possesses a 3-local charge but no $k$-local charges for any $4 \leq k \leq N/2$.
\end{proof}

%\end{CJK*}
\end{document}